\newcommand{\oset}[3][0ex]{%
  \mathrel{\mathop{#3}\limits^{
    \vbox to#1{\kern-2\ex@
    \hbox{$\scriptstyle#2$}\vss}}}}
\def\nat{\mathbb{N}}
\def\set#1{{\{ #1 \}}}
\newcommand{\ltr}[1]{\mathsf{#1}}%
\def\pre{\leq_{\mathsf{pre}}}
\def\dip{{\mathsf{dip}}}
\def\offset{{\mathsf{offset}}}
\newcommand{\inside}[1]{\mathsf{inside}(#1)}
\newcommand{\outside}[1]{\mathsf{outside}(#1)}
\newcommand{\deriv}{\Rightarrow}
\newcommand{\rev}{\mathsf{rev}}
\newcommand{\markl}{\mathsf{L}}
\newcommand{\markr}{\mathsf{R}}
\newcommand{\asyncp}{\mathscr{P}}
\newcommand{\cT}{\mathcal{T}}%
\def\cG{\mathcal{G}}%
\newcommand{\cV}{\mathcal{V}}%
\newcommand{\multiset}[1]{{\mathbb{M}[ #1 ]}} %
\def\card#1{\lvert {#1} \rvert} %
\def\multi#1{{[\![ #1 ]\!]}} %
\def\mmap{\mathbf{m}} %
\newcommand{\vecz}{\mathbf{0}}%
\newcommand{\bu}{\mathbf{u}}
\newcommand{\bx}{\mathbf{x}}
\newcommand{\by}{\mathbf{y}}
\newcommand{\be}{\mathbf{e}}%
\newcommand{\VASS}{\mathsf{VASS}}
\newcommand{\NFA}{\mathsf{NFA}}
\newcommand{\dsNFA}{\mathsf{dsNFA}}
\newcommand{\dsVASS}{\mathsf{dsVASS}}
\newcommand{\CFG}{\mathsf{CFG}}
\newcommand{\ECFG}{\mathsf{ECFG}}%
\newcommand{\sCFG}{\mathsf{sCFG}}%
\newcommand{\sECFG}{\mathsf{sECFG}}%
\newcommand{\cE}{\mathcal{E}} %
\newcommand{\cB}{\mathcal{B}} %
\newcommand{\cA}{\mathcal{A}} %
\newcommand{\hatie}{\hat{I}_{\varepsilon}}
\newcommand{\barI}{\overline{I}}
\newcommand{\bari}{\overline{i}}
\newcommand{\Parikh}{\mathsf{Parikh}}
\DeclareDocumentCommand{\deriv}{O{}}{\mathrel{\Rightarrow_{#1}}}
\DeclareDocumentCommand{\derivs}{O{}}{\mathrel{\oset{*}{\Rightarrow}_{#1}}}
\newcommand{\Z}{\mathbb{Z}}
\newcommand{\Bounded}{\mathscr{B}}
\newcommand{\Decomps}{\mathscr{M}}
\newcommand{\AdmDecomps}{\mathscr{A}}
\newcommand{\calV}{\mathcal{V}}
\newcommand{\calG}{\mathcal{G}}
\newcommand{\calA}{\mathcal{A}}
\newcommand{\calB}{\mathcal{B}}
\newcommand{\N}{\mathbb{N}}
\newcommand{\calT}{\mathcal{T}}
\newcommand{\cH}{\mathcal{H}}%
\newcommand{\ThetaL}{\Theta_{\markl}}
\newcommand{\ThetaR}{\Theta_{\markr}}
\newcommand{\ThetaLR}{\Theta_{\markl,\markr}}
\newcommand{\offsetL}{\offset\markl}
\newcommand{\offsetR}{\offset\markr}
\newcommand{\dclr}[1]{#1\mathop{\downarrow_{\markl,\markr}}}
\newcommand{\subword}{\preccurlyeq}
\newcommand{\Dyck}{\mathsf{Dyck}}
\newcommand{\sdyck}{\trianglelefteq}
\newcommand{\extsw}{\sqsubseteq}
\newcommand{\dc}[1]{#1\mathord{\downarrow}}
\newcommand{\mindip}{\mathsf{mindip}}
\newcommand{\ioff}{\mathsf{o}}
\newcommand{\idip}{\mathsf{d}}
\newcommand{\imis}{\mathsf{m}}
    \pgfmathsetmacro\pgf@lib@dec@sketch@t{mod(\pgf@lib@dec@sketch@t+pow(\pgfkeysvalueof{/pgf/decoration/randomness},rand),\pgfkeysvalueof{/pgf/decoration/wavelength})}
\tikzset{xkcd/.style={decorate,decoration={sketch,segment length=0.5pt,amplitude=0.5pt}}}
\title{Checking Refinement of Asynchronous Programs against Context-Free Specifications} 
\titlerunning{Checking Refinement of Asynchronous Programs} 
\author{Pascal Baumann}{Max Planck Institute for Software Systems (MPI-SWS), Germany}{pbaumann@mpi-sws.org}{https://orcid.org/0000-0002-9371-0807}{}
\author{Moses Ganardi}{Max Planck Institute for Software Systems (MPI-SWS), Germany}{ganardi@mpi-sws.org}{https://orcid.org/0000-0002-0775-7781}{}
\author{Rupak Majumdar}{Max Planck Institute for Software Systems (MPI-SWS), Germany}{rupak@mpi-sws.org}{https://orcid.org/0000-0003-2136-0542}{}
\author{Ramanathan S. Thinniyam}{Max Planck Institute for Software Systems (MPI-SWS), Germany}{thinniyam@mpi-sws.org}{https://orcid.org/0000-0002-9926-0931}{}
\author{Georg Zetzsche}{Max Planck Institute for Software Systems (MPI-SWS), Germany}{georg@mpi-sws.org}{https://orcid.org/0000-0002-6421-4388}{}
\authorrunning{P. Baumann, M. Ganardi, R. Majumdar, R. S. Thinniyam, and G. Zetzsche}
\keywords{Asynchronous programs, VASS, Dyck languages, Language inclusion, Refinement verification}
\begin{document}  
\maketitle

\begin{abstract}
In the language-theoretic approach to refinement verification, we check that the 
language of traces of an implementation all belong to the language of a specification.
We consider the refinement verification problem for asynchronous programs against specifications given by a Dyck
language.
We show that this problem is $\EXPSPACE$-complete---the same complexity as that of language emptiness and for
refinement verification against a regular specification.
Our algorithm uses several technical ingredients.
First, we show that checking if the coverability language of a succinctly described
vector addition system with states (VASS) is contained in a Dyck language
is $\EXPSPACE$-complete.
Second, in the more technical part of the proof, we define an ordering on words and show a downward closure construction that 
allows replacing the (context-free) language of each task in an asynchronous program by a regular language.
Unlike downward closure operations usually considered in infinite-state verification, our ordering is not a well-quasi-ordering, and we have
to construct the regular language ab initio.
Once the tasks can be replaced, we show a reduction to an appropriate VASS and use our first ingredient. 
In addition to the inherent theoretical interest, refinement verification with Dyck specifications captures common
practical resource usage patterns based on reference counting, for which few algorithmic techniques were known.
\end{abstract}

\section{Introduction} %
\label{sec:introduction}

Asynchronous programs are a common programming idiom for multithreaded shared memory concurrency.
An asynchronous program executes tasks atomically; each task is a sequential recursive program that can
read or write some shared state, emit events (such as calling an API), and, in addition, can spawn an arbitrary number of new tasks
for future execution.
A cooperative scheduler iteratively picks a previously spawned task and executes it atomically to completion.
Asynchronous programs occur in many software systems with stringent correctness requirements.
At the same time, they form a robustly decidable class of infinite-state systems closely aligned with other concurrency models.
Thus, algorithmic verification of asynchronous programs has received a lot of attention from both theoretical 
and applied perspectives \cite{SenViswanathan06,JhalaM07,GantyM12,DesaiGM14,DesaiQ17,KraglEHMQ20,GavranNKMV15,KraglQH18,MajumdarTZ21}.

We work in the language-theoretic setting, where we treat asynchronous programs as generators of languages, and reduce verification
questions to decision problems on these languages.
Thus, an execution of a task yields a word over the alphabet of its events and task names. 
An execution of the asynchronous program concatenates the words of executing tasks and further ensures that any task executing
in the concatenation was spawned before and not already executed.
The trace of an execution projects the word to the alphabet of events and the language of the program is the set of all traces. 
With this view, reachability or safety verification questions reduce to language emptiness, and
refinement verification reduces to language inclusion of a program in a given specification language over the alphabet of events.

We consider the language inclusion problem for asynchronous programs when the specification language is given by a Dyck language.
Our main result shows that this problem is $\EXPSPACE$-complete.
The language emptiness problem for asynchronous programs, as well as language inclusion in a regular language, are already $\EXPSPACE$-complete~\cite{GantyM12}.
Thus, there is no increase in complexity even when the specifications are Dyck languages.
However, as we shall see below, our proof of membership in $\EXPSPACE$ requires several new ingredients.

In addition to the inherent language-theoretic interest, the problem is motivated by the practical ``design pattern'' of reference counting
and barrier synchronization in concurrent event-driven programs.
In this pattern, each global shared resource maintains a counter of \emph{how many} processes have access to it.
Before working with the shared resource, a task acquires access to the resource by incrementing a counter (the reference count).
Later, a possibly different task can release the resource by decrementing the reference count.
When the count is zero, the system can garbage collect the resource. 
For example, device drivers in the kernel maintain such reference counts, and there are known bugs arising out of
incorrect handling of reference counts \cite{QadeerWu}. 
Here is a small snippet that shows the pattern in asynchronous code:
\vspace{-0.3cm}
\[
\begin{array}{ll}
\texttt{start}: & \{\ t\ :=\ \texttt{inc}();\ \texttt{if}\ (t)\ \texttt{spawn}(\texttt{work}); \}\\
                & //\ \mbox{\small arbitrarily many requests may start concurrently}\\
\texttt{work}: & \{ \mbox{ \small in this code, we can assert that the reference count is positive }; \\
               & \ \ \ \texttt{spawn}(\texttt{cleanup}); \}\\
\texttt{cleanup}: & \{\  \texttt{dec}();\ \texttt{if}\ \texttt{zeroref}()\ \{ \mbox{ \small garbage collect the resource }\}\}\\
\end{array}
\]
Here, $\texttt{inc}$ and $\texttt{dec}$ increment and decrement the reference count associated with a shared resource,
$\texttt{inc}$ succeeds if the resource has not been garbage collected.
$\texttt{spawn}$ starts a new task, and $\texttt{zeroref}$ checks if the reference count is zero.
There are three tasks, $\texttt{start}$, $\texttt{work}$, and $\texttt{cleanup}$; each invocation of a task executes atomically.
Initially, an arbitrary number of $\texttt{start}$ tasks are spawned.

Our goal is to ensure the device is not garbage collected while some instance of $\texttt{work}$ is pending.
Intuitively, the reason for this is clear: each $\texttt{work}$ is spawned by a previous $\texttt{start}$ that takes a reference count and this reference
is held until a later $\texttt{cleanup}$ runs.
However, it is difficult for automated model checking tools to perform this reasoning, and existing techniques require manual annotations
of invariants \cite{GavranNKMV15,KraglEHMQ20}.
Dyck languages allow specifying correct handling of reference counts \cite{BaumannGMTZ23}, 
and our algorithm provides as a special case an algorithmic analysis of correct reference counting for asynchronous programs.

Since there is a simple reduction from language emptiness to inclusion, we immediately inherit $\EXPSPACE$-hardness.
Let us therefore focus on the challenges in obtaining an $\EXPSPACE$ upper bound.
The $\EXPSPACE$ algorithm for language emptiness proceeds as follows (see \cite{GantyM12,MajumdarTZ21}).
First, we can ignore the alphabet of events and only consider words over the alphabet of task names.
Second, we notice that (non-)emptiness is preserved if we ``lose'' some spawns along an execution;
this allows us to replace the language of each task by its downward closure.
By general results about well-quasi orderings, the downward closure is a regular language which, moreover, has a succinct representation.
Thus, we can reduce the language emptiness problem to checking (coverability) language emptiness of an associated vector addition system with states ($\VASS$).
This problem can be solved in $\EXPSPACE$, by a result of Rackoff \cite{Rackoff78}.

Unfortunately, this outline is not sufficient in our setting. 
First, unlike for language emptiness or regular language inclusion, we cannot simply replace tasks with their downward closures (w.r.t.\ the subword ordering).
While we can drop spawns as before, dropping letters from the event alphabet does not preserve membership in a Dyck language.
Second, even if each handler is regular, we are left with checking if a $\VASS$ language is contained in a Dyck language.
We provide new constructions to handle these challenges.

Our starting point is the characterization of inclusion in Dyck languages \cite{RitchieSpringsteel1972}: A language~$L$ is not included in a Dyck language
if and only if there is a word $w \in L$ with either an \emph{offset violation} (number of open brackets does not match the number of closed brackets), a \emph{dip violation} (some prefix with more closed brackets than open ones),
or a \emph{mismatch violation} (an open bracket of one kind matched with a closed bracket of a different kind).

\subparagraph{Checking $\VASS$ Language Inclusion}
Our first technical construction shows how to check language inclusion of a $\VASS$ coverability language in a Dyck language in $\EXPSPACE$. (In a \emph{coverability language}, acceptance is defined by reaching a final control state.)
In fact, our result carries over when the control states of the $\VASS$ are \emph{succinctly represented}, for example by using transducers and binary encodings of numbers. 

We first check that the $\VASS$ language is \emph{offset-uniform}, that is,
every word in the language has exactly the same offset (difference between open brackets and closed brackets),
and that this offset is actually zero.
(If this condition is not true, there is already an offset violation.)
We show that the offset of every prefix of a word in any offset-uniform $\VASS$ language
is bounded by a doubly exponential number,
and therefore, this number can be tracked by adding
double exponentially bounded counters (as in Lipton's construction \cite{lipton1976reachability}) in the $\VASS$ itself.
Moreover, we can reduce the checking of dip or mismatch violations to finding a \emph{marked Dyck factor}: an infix of the form
$\# w \bar{\#}$ for a Dyck word $w$.
Finally, for offset-uniform $\VASS$, finding a marked Dyck factor reduces to coverability in succinctly represented $\VASS$, which can be checked in $\EXPSPACE$~\cite{BaumannMTZ22}.
Offset uniformity is important---finding a marked Dyck factor in an arbitrary $\VASS$ language is equivalent to $\VASS$ reachability, which is Ackermann-complete \cite{CzerwinskiPNAckermann2021,lerouxReachabilityProblemPetri2022}.
In fact, checking whether a given $\VASS$ language is included in the set of \emph{prefixes} of the one-letter Dyck language is already equivalent to $\VASS$ reachability (see the long version of the paper for a proof).

A consequence of our result is that given a $\VASS$ coverability language $K$
and a \emph{reachability language} (i.e.\ acceptance requires all counters to be zero in the end)
$L$ of a \emph{deterministic} $\VASS$, deciding whether $K\subseteq L$ is
$\EXPSPACE$-complete. This is in contrast (but not in contradiction\footnote{For general $\VASS$, every coverability language is also a
	reachability language. However, \emph{deterministic} $\VASS$ with reachability
acceptance cannot accept all coverability languages.}) to recent
Ackermann-completeness results for settings where both $K$ and $L$ are drawn from
subclasses of $\VASS$ coverability languages~\cite{CzerwinskiH22}.

\subparagraph{Downward Closure of Tasks}
Next, we move to asynchronous programs.
We define a composite ordering on words that is a combination of two different orderings: the subword ordering for task names, and the syntactic preorder on the events projected to a single set $\{ x,\bar{x}\}$ of Dyck letters. In our case, the latter means a word $u$ is less than $v$ iff they both have the same offset, but $v$ has at most the dip of $u$.  The composite order is defined so as to preserve the existence of marked Dyck factors.
In contrast to the subword ordering, this (composite) ordering is not a well-quasi-ordering (since, e.g., $\bar{x}x,\bar{x}\bar{x}xx,\bar{x}\bar{x}\bar{x}xxx, \ldots$ forms an infinite descending chain).
Nevertheless, our most difficult technical construction shows that for any context-free language (satisfying an assumption, which we call tame-pumping)
there exists a regular language with the same downward closure in this ordering. The case of general context-free languages reduces to this special case since the presence of a non-tame pump immediately results in a Dyck-violation and can easily be detected in $\PSPACE$.
For the tame-pumping grammars, a succinct description of the corresponding automaton can be computed in $\PSPACE$.
This key observation allows us to replace the context-free languages of tasks with regular sets, and thereby 
reduce the problem to checking $\VASS$ language inclusion.

\subparagraph{Related Work} Language inclusion in Dyck languages is a well-studied problem.
For example, inclusion in a Dyck language can be checked in polynomial time for context-free languages \cite{TozawaM07} or for ranges of two-copy tree-to-string transducers \cite{Linear-Tree-Transducers-lobelthesis}.
Our work extends the recent result that the language noninclusion problem for context-bounded multi-pushdown systems in Dyck languages is $\NP$-complete \cite{BaumannGMTZ23}.
Our result is complementary to that of \cite{BaumannGMTZ23}: their model considers a \emph{fixed} number of threads but allows the threads to be interrupted and context-switched a fixed number of times.
In contrast, we allow dynamic spawning of threads but assume each thread is atomically run to completion.
A natural open question is whether our results continue to hold if threads can be interrupted up to a fixed number of times.

Inclusion problems have recently also been studied when both input languages
are given as $\VASS$ coverability languages~\cite{CzerwinskiH22}. Since in our
setting, the supposedly larger language is always a Dyck language (which is not
a coverablity $\VASS$ language), those results are orthogonal.

\section{Language-Theoretic Preliminaries} %
\label{sec:preliminaries}

\subparagraph*{General Definitions}

We assume familiarity with basic language theory, see the textbook~\cite{Kozen} for more details.
For an alphabet $\Sigma\subseteq\Theta$, let $\pi_\Sigma\colon\Theta^*\to\Sigma^*$
denote the projection onto $\Sigma^*$.
In other words, for $w\in\Theta^*$, the word $\pi_\Sigma(w)$ is obtained from $w$
by deleting every occurrence of a letter in $\Theta\setminus\Sigma$.
If $\Sigma$ contains few elements, e.g.\ $\Sigma = \{x,y\}$, then
instead of writing $\pi_{\{x,y\}}$ we also write $\pi_{x,y}$, leaving out the set brackets.
We write $|w|_\Sigma$ for the number of occurrences of letters $x \in \Sigma$ in $w$,
and similarly $|w|_x$ if $\Sigma = \{x\}$.
%
%
%
%
%
%
%
%
%
%

\begin{comment}
\subparagraph*{Multisets}
A \emph{multiset} $\mmap$ over an alphabet $\Gamma$ is a function
$\mmap\colon \Gamma\rightarrow\nat$,
which maps each symbol of $\Gamma$ to a natural number. 
Let $\multiset{\Gamma}$ be the set of all multisets over $\Gamma$. 
We treat sets as a special case of multisets 
where each element is mapped onto $0$ or $1$. 
%
We use brackets $\multi{}$ to define specific multisets. For example, we write 
$\mmap=\multi{\ltr{a},\ltr{a},\ltr{c}}$ for the multiset 
$\mmap\in\multiset{\set{\ltr{a},\ltr{b},\ltr{c},\ltr{d}}}$ with 
$\mmap(\ltr{a})=2$, 
$\mmap(\ltr{b})=\mmap(\ltr{d})=0$, and $\mmap(\ltr{c})=1$.
We also denote the cardinality of a multiset as $\card{\mmap}=\sum_{\gamma\in\Gamma}\mmap(\Gamma)$.

Given two multisets $\mmap,\mmap'\in\multiset{\Gamma}$ we define the multiset $\mmap\oplus 
\mmap'\in\multiset{\Gamma}$, where for all $\ltr{a}\in\Gamma$, 
we have $(\mmap\oplus \mmap')(\ltr{a})=\mmap(\ltr{a})+\mmap'(\ltr{a})$. 
We also define the natural order 
$\preceq$ on $\multiset{\Gamma}$ as follows: $\mmap\preceq\mmap'$ if{}f there 
exists $\mmap{''}\in\multiset{\Gamma}$ such that 
$\mmap\oplus\mmap{''}=\mmap'$. We also define $\mmap' \ominus 
\mmap$ for $\mmap \preceq \mmap'$ analogously: for all 
$\ltr{a}\in\Gamma$, 
we have $(\mmap\ominus \mmap')(\ltr{a})=\mmap(\ltr{a})-\mmap'(
\ltr{a})$. 
For $\Gamma'\subseteq \Gamma$ we regard $\mmap\in\multiset{\Gamma'}$ as a 
multiset in $\multiset{\Gamma}$ where undefined values are sent to $0$.

\end{comment}

\subparagraph*{Context-Free Languages}
A \emph{context-free grammar} ($\CFG$) $\cG = (N,\Theta,P,S)$ consists of
an alphabet of \emph{nonterminals} $N$,
an alphabet of \emph{terminals} $\Theta$ with $N \cap \Theta = \emptyset$,
a finite set of \emph{productions} $P \subseteq N \times (N \cup \Theta)^*$, and
the start symbol $S \in N$.
We usually write $A \rightarrow v$ to denote a production $(A,v) \in P$.
The size of the $\CFG$ $\cG$ is defined as
$|\cG| = \sum_{A \rightarrow v \in P} (|v| + 1)$.
We denote the \emph{derivation relation} by $\deriv_\cG$ and its reflexive, transitive closure by $\derivs_\cG$.
We drop the subscript $\cG$ if it is clear from the context.
We also use \emph{derivation trees} labelled by $N \cup \Theta$ for derivations
of the form $A \derivs w$ for some $A \in N$.
Here we start with the root labelled by $A$, and whenever we apply a production
$B \rightarrow v$ with $v = a_1 \ldots a_n$, we add $n$ children labelled by $a_1,\ldots,a_n$
(in that order from left to right) to a leaf labelled by $B$. A \emph{pump} is a derivation of the form $A \derivs u A v$ for some nonterminal $A$. A derivation tree which is pumpfree,
i.e., in which no path contains multiple occurrences of the same nonterminal, is referred to as a \emph{skeleton}. 
We will often see an arbitrary derivation tree as one which is obtained by inserting pumps into a skeleton.

The \emph{language} $L(\cG, A)$ of $\cG$ starting from nonterminal $A \in N$ contains
all words $w \in \Theta^*$ such that there exists a derivation $A \derivs_\cG w$.
The language of $\cG$ is $L(\cG) = L(\cG,S)$.
A \emph{context-free language} ($\CFL$) $L$ is a language for which there exists
a $\CFG$ $\cG$ with $L = L(\cG)$.

A $\CFG$ $\cG = (N,\Theta,P,S)$ is said to be in \emph{Chomsky normal form}
if all of its productions have one of the forms
$A \rightarrow BC$, $A \rightarrow a$, or $S \rightarrow \varepsilon$,
where $B,C \in N\setminus\set{S}$, $a \in \Theta$, and the last form only occurs
if $\varepsilon \in L(\cG)$.
It is well known that every $\CFG$ can be transformed in polynomial time into one
in Chomsky normal form with the same language.

An \emph{extended} context-free grammar ($\ECFG$) $\cG = (N,\Theta,P,S)$ is a $\CFG$,
which may additionally have productions of the form
$A \rightarrow \Gamma^* \in P$ for some alphabet $\Gamma \subseteq \Theta$.
Productions of this form induce derivations $uAs \deriv_\cG uvs$,
where $u,s \in (N \cup \Theta)^*$ and $v \in \Gamma^*$.
Chomsky normal form for $\ECFG$ is defined as for $\CFG$, but also allows productions
of the form $A \rightarrow \Gamma^*$.
An $\ECFG$ can still be transformed into Chomsky normal form using the same algorithm
as for a $\CFG$, treating expressions $\Gamma^*$ like single terminal symbols.
Since the extended productions can be simulated by conventional $\CFG$ productions,
the language of an $\ECFG$ is still a $\CFL$.

\subparagraph*{Dyck Language}
Let $X$ be an alphabet and let $\bar{X} = \{\bar{x} \mid x \in X\}$ be a disjoint copy of $X$.
The \emph{Dyck language (over $X$)} $\Dyck_X \subseteq (X \cup \bar X)^*$ is defined by the following
context-free grammar:
\[
	S \to \varepsilon \mid S \to SS \mid S \to x S \bar x \quad \text{for } x \in X.
\]
Let $\Theta \supseteq X \cup \bar X$ be an alphabet.
For $w \in \Theta^*$ we define $\offset(w) = |w|_X-|w|_{\bar X}$. 
A language $L\subseteq \Theta^*$ is called \emph{offset-uniform} if for
any $u,v\in L$, we have $\offset(u)=\offset(v)$.

The \emph{dip} of $w \in \Theta^*$ is defined as
$\dip(w)=\max \{-\offset(u) \mid u \text{ is a prefix of } w \}$.
We define $e(w) = (\dip(w), \offset(w))$.
Observe that for $w \in (X \cup \bar{X})^*$ with $|X| = 1$
we have $w \in \Dyck_X$ if and only if $e(w) = (0,0)$.

A language $L\subseteq (X \cup \bar X)^*$ is \emph{not} included in $\Dyck_X$
if and only if there exists a word $w \in L$
that satisfies one of the following violation conditions \cite{RitchieSpringsteel1972}: 
\begin{description}
\item[(OV)] an \emph{offset violation} $\offset(w) \neq 0$,
\item[(DV)] a \emph{dip violation}, where $\dip(w) > 0$, i.e., there is a prefix $u$ of $w$
with $\offset(u) < 0$, or
\item[(MV)] a \emph{mismatch violation}, where there exists a pair $x, \bar{y}$
(for some $x \neq y$) of \emph{mismatched} letters in $w$,
i.e., $w$ contains an infix $x v \bar{y}$ where $e(v) = (0,0)$.
\end{description} 
For example, $w_1=x\bar{x}\bar{x}x$ has a dip violation due to the
prefix $u=x\bar{x}\bar{x}$; $w_2=xx\bar{x}$ has an offset violation
and $w_3=x x \bar{x} \bar{y}$ has a mismatch violation.

\section{Asynchronous Programs}
\label{sec:asyncp}

An \emph{asynchronous program} \cite{GantyM12}, henceforth simply called a \emph{program},
is a tuple $\asyncp = (Q,$ $\Sigma,$ $\Gamma,$ $\cG,$ $\Delta,$ $q_0,$ $q_f,$ $\gamma_0)$,
where
$Q$ is a finite set of \emph{global states},
$\Sigma$ is an alphabet of \emph{event letters},
$\Gamma$ is an alphabet of \emph{handler names} with $\Sigma \cap \Gamma = \emptyset$,
$\cG$ is a $\CFG$ over the terminal symbols $\Sigma\cup\Gamma$,
$\Delta$ is a finite set of transition rules (described below),
$q_0 \in Q$ is the \emph{initial state},
$q_f \in Q$ is the \emph{final state}, and
$\gamma_0$ is the \emph{initial handler}.

Transition rules in $\Delta$ are of the form $q \xhookrightarrow{a,A} q'$, where
$q, q' \in Q$ are global states, $a \in \Gamma$ is a handler name,
and $A$ is a nonterminal symbol in $\cG$.

Let $\multiset{S}$ denote the set of all multisets of elements from the set $S$. A \emph{configuration} $(q,\mmap) \in Q \times \multiset{\Gamma}$ of $\asyncp$
consists of a global state $q$ and a multiset $\mmap: \Gamma\rightarrow\nat$ of pending handler instances.
The \emph{initial} configuration of $\asyncp$ is $c_0 = (q_0,\multi{\gamma_0})$,
where $\multi{\gamma_0}$ denotes the singleton multiset containing $\gamma_0$.
A configuration is considered \emph{final} if its global state is $q_f$.
The rules in $\Delta$ induce a transition relation on configurations of $\asyncp$:
We have $(q,\mmap) \xrightarrow{w} (q',\mmap')$ iff there is a rule
$q \xhookrightarrow{a,A} q' \in \Delta$ and a word $u \in L(\cG, A)$ such that
$\pi_\Sigma(u) = w$ and $\mmap' = (\mmap \ominus \multi{a}) \oplus \Parikh(\pi_\Gamma(u))$, where $\mmap''= \mmap \oplus \mmap'$ is the multiset which satisfies $\mmap''(a)=\mmap'(a)+\mmap(a)$ for each $a \in \Gamma$. Similarly $\mmap''= \mmap \ominus \mmap'$ is the multiset which satisfies $\mmap''(a)=\mmap'(a)-\mmap(a)$ for each $a \in \Gamma$ with the implicit assumption that $\mmap'(a) \geq \mmap(a)$.
Here, $\Parikh(w): \Gamma\rightarrow \nat$ is the \emph{Parikh image} of $w$ that
maps each handler in $\Gamma$ to its number of occurrences in $w$.
Note that the transition is feasible only if $\mmap$ contains at least one instance of the handler $a$.

Intuitively, a program consists of a set of atomic event handlers that communicate over a shared global state $Q$.
Each handler is a piece of sequential code that generates a word over a set of events $\Sigma$ and, in addition,
posts new instances of handlers from $\Gamma$.
A configuration $(q,\mmap)$ represents the current value of the shared state $q$ and a task buffer $\mmap$ containing
the posted, but not yet executed, handlers.
At each step, a scheduler non-deterministically picks and removes a handler from the multiset of posted handlers and ``runs'' it.
Running a handler changes the global state and produces a sequence of events over $\Sigma$ as well as a multiset of newly posted handlers.
The newly posted handlers are added to the task buffer.

We consider asynchronous programs as generators of words over the set of events.
A \emph{run} of $\asyncp$ is a finite sequence of configurations
$c_0 = (q_0,\multi{\gamma_0}) \xrightarrow{w_1} c_1
\xrightarrow{w_2} \ldots \xrightarrow{w_\ell} c_\ell$.
It is an \emph{accepting} run if it ends in a final configuration.

The \emph{language} of $\asyncp$ is defined as
\[ L(\asyncp)=\{ w \in \Sigma^* \mid w=w_1\cdots w_\ell, \text{ there is an accepting run }
c_0 \xrightarrow{w_1} %
\ldots \xrightarrow{w_\ell} c_\ell\}. \]
The size of the program $\asyncp$ is defined as
$|\asyncp| = |Q| + |\cG| + |\Delta|$,
i.e., the combined size of states, grammar, and transitions. 

The \emph{Dyck inclusion problem} for programs asks, given a program $\asyncp$ over a set
$(X\cup \bar X)$ of events, whether every word in $L(\asyncp)$ belongs to the Dyck language $\Dyck_X$.
We show the following main result. 

\begin{theorem}[Main Theorem]
	\label{thm:main}
	Given a program $\asyncp$ with $L(\asyncp) \subseteq (X \cup \bar X)^*$, 
	deciding if $L(\asyncp) \subseteq \Dyck_X$ is $\EXPSPACE$-complete. 
\end{theorem}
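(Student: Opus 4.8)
The plan is to establish the two directions separately. The $\EXPSPACE$-hardness is inherited directly: language emptiness for asynchronous programs is already $\EXPSPACE$-hard \cite{GantyM12}, and there is a trivial reduction from emptiness to Dyck non-inclusion (append a fixed mismatched word to every accepted trace, so that the modified program's language is included in $\Dyck_X$ if and only if the original language is empty). So the work is entirely in the $\EXPSPACE$ upper bound, which I will obtain by combining the two technical ingredients advertised in the introduction.

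First I would reduce from asynchronous programs to $\VASS$ coverability languages. The spawn semantics of a program is a faithful match for $\VASS$ counters --- one counter per handler name $\gamma \in \Gamma$ tracking the number of pending instances, with each transition rule $q \xhookrightarrow{a,A} q'$ decrementing the counter for $a$ and, along a path simulating some $u \in L(\cG,A)$, incrementing counters according to $\Parikh(\pi_\Gamma(u))$ while emitting the event word $\pi_X(u) \in (X \cup \bar X)^*$. The obstruction is that $L(\cG,A)$ is context-free, not regular, so the simulation of a handler run cannot be inlined into the finite control of a $\VASS$. This is exactly where the downward-closure construction is used: for the ordering $\synpre$ combined with the subword ordering on $\Gamma$-letters, membership of a marked Dyck factor in $L(\asyncp)$ is preserved under passing to downward closures of the individual handler languages. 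I would first dispatch non-tame pumps --- a handler grammar containing a non-tame pump already forces a Dyck violation, detectable in $\PSPACE$ --- so that every remaining handler grammar is tame-pumping, and then invoke the key technical result that each tame-pumping context-free handler language can be replaced by a regular language with the same downward closure, with a succinct ($\PSPACE$-computable) $\NFA$ description. After this replacement, each handler is regular, so its run can be simulated directly in the control of a (succinctly represented) $\VASS$ $\calV$, giving $L(\asyncp) \subseteq \Dyck_X$ if and only if the coverability language of $\calV$ is contained in $\Dyck_X$.

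Second, I would invoke the $\VASS$ Dyck-inclusion ingredient: checking whether the coverability language of a succinctly represented $\VASS$ is contained in $\Dyck_X$ is in $\EXPSPACE$. Concretely, one first checks offset-uniformity and that the common offset is $0$ (otherwise report an $\mathbf{(OV)}$ violation); since an offset-uniform $\VASS$ language has prefix offsets bounded doubly exponentially, one adds Lipton-style doubly-exponentially-bounded counters to track the running offset, and then reduces the search for a $\mathbf{(DV)}$ or $\mathbf{(MV)}$ violation to finding a marked Dyck factor $\# w \bar{\#}$, which in turn reduces to coverability in a succinctly represented $\VASS$ and is decidable in $\EXPSPACE$ \cite{BaumannMTZ22}. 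Combining the two ingredients: the whole reduction from $\asyncp$ to $\calV$ runs in $\PSPACE$ (hence produces an at most exponential-size succinct $\VASS$), and the subsequent Dyck-inclusion test runs in $\EXPSPACE$ in $|\calV|$, so the total procedure is in $\EXPSPACE$ in $|\asyncp|$.

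The main obstacle is the downward-closure step. The ordering $\synpre$ is deliberately \emph{not} a well-quasi-ordering (the chain $\bar x x, \bar x \bar x x x, \ldots$ is infinitely descending), so there is no off-the-shelf guarantee that a downward closure is regular, and one cannot appeal to the standard Haines/Higman machinery. One must construct the $\NFA$ explicitly: analyze the skeleton of a derivation tree together with the pumps that can be inserted, show that only boundedly many "shapes" of $(\dip,\offset)$-contributions from pumps matter, and assemble these into a finite automaton whose size stays singly exponential so that the succinct representation is $\PSPACE$-computable. Getting both the correctness of the closure (preservation of marked Dyck factors under $\synpre$ combined with subword on $\Gamma$) and the succinctness bound right is the technical heart of the argument; everything else is a careful but routine assembly of known $\VASS$ and grammar constructions.
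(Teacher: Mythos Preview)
Your plan matches the paper's approach closely: hardness via the emptiness reduction, the tame-pumping check, the downward-closure replacement of handler languages by regular ones, and the $\VASS$ Dyck-inclusion test via offset-uniformity and marked Dyck factors are exactly the ingredients the paper assembles.

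One ordering detail deserves care. You phrase the reduction as building a single $\VASS$ $\calV$ with $L(\asyncp)\subseteq\Dyck_X$ iff $L(\calV)\subseteq\Dyck_X$, and then invoking the $\VASS$ test. But the ordering $\extsw$ underlying the downward-closure result is defined only over $\Gamma\cup\{x,\bar x\}\cup\{\#,\bar\#\}$, i.e.\ after projecting to a \emph{single} pair of brackets and after placing markers; moreover, equality of downward closures w.r.t.\ $\extsw$ preserves the existence of \emph{marked Dyck factors} and the offset, not Dyck membership per se. The paper therefore first forms the three auxiliary programs $\asyncp_\ioff,\asyncp_\idip,\asyncp_\imis$ (projection plus marker placement, each still tame-pumping), and only then passes to $\dsVASS$ $\calV_\ioff,\calV_\idip,\calV_\imis$ with the same $\extsw$-downward closure; the final checks are ``uniform offset $0$'' on $\calV_\ioff$ and ``no marked Dyck factor'' on $\calV_\idip,\calV_\imis$. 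If you take the downward closure before marking and projecting, the ordering is not available over the multi-letter alphabet $X\cup\bar X$, and you lose the information needed to detect mismatch violations. The repair is exactly this reordering; once you adopt it, your plan and the paper's proof coincide.
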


$\EXPSPACE$-hardness follows easily from the following result on language emptiness (by simply adding a loop with a letter $\bar{x} \in \bar{X}$ at the final state).   
Therefore, the rest of the paper focuses on the $\EXPSPACE$ upper bound.

\begin{proposition}[Theorem 6.2, Ganty and Majumdar \cite{GantyM12}]
	\label{thm:safetyAsync}
	Given a program $\asyncp$, checking if $L(\asyncp) = \emptyset$ is $\EXPSPACE$-complete.
\end{proposition}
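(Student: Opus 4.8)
This is Theorem~6.2 of Ganty and Majumdar~\cite{GantyM12}; I sketch both directions.

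\textbf{Lower bound.} The plan is to reduce from the coverability problem for vector addition systems with states, which is $\EXPSPACE$-hard by Lipton's construction~\cite{lipton1976reachability}. Given a $\VASS$ with control states $P$, dimension $d$, initial configuration $(p_0,\bu_0)$ and target control state $p_f$, I build a program $\asyncp$ with handler names $\Gamma=\{h_1,\dots,h_d\}$ (one per counter) and global states containing $P$ together with fresh intermediate states; the multiset of pending handlers encodes the current marking. A single handler run removes exactly one pending handler, so a $\VASS$ transition $t\colon p\to p'$ is simulated by a short chain of handler runs through fresh intermediate global states that consumes $t$'s input tokens one at a time, the last run additionally spawning $t$'s output tokens and moving to $p'$. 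No rules are added for unintended (state, handler) pairs, so a run in which the scheduler picks the ``wrong'' handler simply gets stuck, which is harmless since emptiness only asks whether \emph{some} run reaches a final configuration. The initial handler $\gamma_0$ spawns $\bu_0$ and moves to $p_0$, and we set $q_f=p_f$. Then $L(\asyncp)\neq\emptyset$ iff $p_f$ is coverable; the encoding is polynomial (the hard $\VASS$ instances have polynomial size and bounded arc weights), and since $\EXPSPACE$ is closed under complement, emptiness is $\EXPSPACE$-hard. This already holds for programs whose handlers are finite (each run executing a single rule).

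\textbf{Upper bound.} Emptiness of $L(\asyncp)$ is equivalent to unreachability of \emph{some} configuration with global state $q_f$, so the event alphabet $\Sigma$ and the words $\pi_\Sigma(u)$ play no role and only the spawn effects $\Parikh(\pi_\Gamma(u))$ matter. The plan is to reduce this reachability question to $\VASS$ coverability and invoke Rackoff's theorem~\cite{Rackoff78}. For each rule $q\xhookrightarrow{a,A}q'$, the set $\Psi_{q,a,q'}:=\{\Parikh(\pi_\Gamma(u))\mid u\in L(\cG,A)\}\subseteq\N^\Gamma$ of possible spawn multisets is the Parikh image of a context-free language, hence (by Parikh's theorem) an effectively computable semilinear set, i.e.\ a finite union of linear sets $\{\bv_0+\lambda_1\bv_1+\cdots+\lambda_m\bv_m\mid\lambda_j\in\N\}$, a representation of which of at most exponential size can be computed in exponential time by standard effective proofs of Parikh's theorem. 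I then assemble a $\VASS$ $\calV$ with counters $\Gamma$ whose control states and transitions simulate the scheduler: from control state $q$, test-and-remove one token from place $a$, enter a gadget for one linear set of $\Psi_{q,a,q'}$ (add $\bv_0$ by a single vector transition, then add copies of $\bv_1,\dots,\bv_m$ via self-loops), and move to control state $q'$. No other guards arise, so for $q\in Q$ a configuration $(q,\mathbf{m})$ is reachable in $\asyncp$ iff it is reachable in $\calV$ from $(q_0,\be_{\gamma_0})$; thus $L(\asyncp)=\emptyset$ iff $\calV$ cannot cover $(q_f,\vecz)$.

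\textbf{The main obstacle: staying in $\EXPSPACE$.} The semilinear representations produced by Parikh's theorem can comprise exponentially many linear sets with period and base vectors of exponential magnitude, so $\calV$ has exponentially many control states and transitions, with transition vectors of exponential value (hence polynomially many bits each). Its \emph{dimension}, however, is only $|\Gamma|$, which is polynomial in $|\asyncp|$. I therefore rely on the dimension-sensitive form of Rackoff's bound: coverability in a $d$-dimensional $\VASS$ is decidable in space $2^{O(d\log d)}\cdot\mathrm{poly}(N)$, where $N$ is the bit size of the $\VASS$ and the exponential dependence is only on $d$. Here $d$ is polynomial and $N$ is at most exponential in $|\asyncp|$ (and $\calV$ can be built in exponential time), so the total space is $2^{\mathrm{poly}(|\asyncp|)}$, i.e.\ $\EXPSPACE$. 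A naive application of Rackoff's bound in terms of the total input size alone would only give doubly exponential space, so the essential point is that the exponential blow-up caused by the per-handler summarization is confined to the control states and transitions of $\calV$ and not to its counter dimension. Alternatively, since emptiness is monotone in the number of pending handlers, one may first replace each handler language by its subword downward closure and use a description of the latter, the route emphasized elsewhere in this paper for the harder Dyck-inclusion problem; this yields the same bound.
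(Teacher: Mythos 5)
This proposition is cited from Ganty--Majumdar and not re-proved in the paper; the only glimpse the paper gives is the informal algorithm sketch in the introduction, which abstracts each handler language by its \emph{subword downward closure} (a regular language by Courcelle's construction), builds an associated $\VASS$, and invokes Rackoff. Your primary route is the other standard one: abstract each handler language by its \emph{Parikh image}, represented as a semilinear set of at most exponential description size, and build a $\VASS$ from that. Both routes reduce to the same bottleneck---an exponential-size $\VASS$ whose counter dimension is only $|\Gamma|$---and both then hinge on the dimension-sensitive form of Rackoff's bound, which you correctly isolate as \emph{the} essential point that keeps this in $\EXPSPACE$ rather than $2$-$\EXPSPACE$. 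You also flag the downward-closure alternative at the end, so you have both variants. The Parikh-image route is arguably more faithful to Ganty--Majumdar's original argument; the downward-closure route is the one this ICALP paper emphasizes because it is the template generalized to the Dyck setting in \cref{thm:extended-downclosure} (where plain Parikh abstraction would destroy the bracket structure). Your lower bound, reducing from $\VASS$ coverability via a polynomial-size program with finite handler bodies consuming tokens one at a time, is the standard one and is correct; the remark about $\EXPSPACE$ being closed under complement to pass from hardness of non-emptiness to hardness of emptiness is the right (and often omitted) hygiene step.
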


A nonterminal $B$ in the grammar $\cG$ of a program $\asyncp$ is called \emph{useful}
if there exists a run $\rho$ of $\asyncp$ reaching $q_f$
in which there exists a derivation tree containing $B$.
More precisely, there are two successive configurations $(q,\mmap) \xrightarrow{w} (q',\mmap')$
in $\rho$ such that there is a rule $q \xhookrightarrow{a,A} q'$ and a word $u \in L(\cG, A)$ with
$\pi_\Sigma(u) = w$, $\mmap' = (\mmap \ominus \multi{a}) \oplus \Parikh(\pi_\Gamma(u))$,
and $B$ occurs in some derivation tree with root $A$ and yield $u$.
There is a simple reduction from checking if a nonterminal is useful to checking
language emptiness
(see the full version)
so we can check if a nonterminal is useful also in $\EXPSPACE$.
Therefore, in the following, we shall assume that all nonterminals are useful.

\section{Checking Dyck Inclusion for $\VASS$ Coverability Languages}
\label{sec:VASScoverInDyck}

As a first technical construction, we show how to check Dyck inclusion for (succinctly defined) $\VASS$ languages.
We shall reduce the problem for programs to this case.

\subsection{Models: $\VASS$ and Succinct Versions}

\subparagraph*{Vector Addition Systems with States}
A \emph{vector addition system with states} ($\VASS$) is a tuple
$\cV = (Q, \Sigma, I, E,q_0,q_f)$ where
$Q$ is a finite set of \emph{states},
$\Sigma$ is a finite alphabet of \emph{input letters},
$I$ is a finite set of \emph{counters}, 
$q_0\in Q$ is the \emph{initial state}, 
$q_f\in Q$ is the \emph{final state}, 
and
$E$ is a finite set of \emph{edges} of the form $q \xrightarrow{x,\delta} q'$, where
$q, q' \in Q$, $x \in \Sigma \cup \set{\varepsilon}$, and
$\delta \in \set{-1,0,1}^I$.\footnote{A more general definition of $\VASS$
  would allow each transition to add an arbitrary vector over the integers.
  We instead restrict ourselves to the set $\set{-1,0,1}$, since this
  suffices for our purposes, and the $\EXPSPACE$-hardness result by
  Lipton~\cite{lipton1976reachability} already holds for VASS of this form.}

A \emph{configuration} of $\cV$ is a pair $(q,\bu) \in Q \times \multiset{I}$.
The elements of $\multiset{I}$ and $\set{-1,0,1}^I$ can also be seen as vectors of length $|I|$ over $\N$ and $\set{-1,0,1}$, respectively, and we sometimes denote them as such.
The edges in $E$ induce a transition relation on configurations: 
there is a transition $(q, \bu)\xrightarrow{x}(q', \bu')$ if
there is an edge $q\xrightarrow{x,\delta} q'$ in $E$ such that
$\bu'(i) = \bu(i) + \delta(i) \geq 0$ for all $i\in I$.
A \emph{run} of the $\VASS$ is a finite sequence of configurations
$c_0 \xrightarrow{x_1} c_1 \xrightarrow{x_2}
\ldots \xrightarrow{x_\ell} c_\ell$ where $c_0=(q_0,\vecz)$.
A run is said to reach a state $q\in Q$ if the last configuration in the run is of the
form $(q,\mmap)$ for some multiset $\mmap$.
An \emph{accepting} run is a run whose final configuration has state $q_f$.
The \emph{(coverability) language} of $\cV$ is defined as 
\[ L(\cV)=\{ w \in \Sigma^* \mid \text{there exists a run }
(q_0,\vecz)=c_0 \xrightarrow{x_1} %
\ldots \xrightarrow{x_\ell} c_\ell=(q_f,\bu) \text{ with } w=x_1\cdots x_\ell \}. \]
The size of the $\VASS$ $\cV$ is defined as $|\cV| = |I| \cdot |E|$.

\subparagraph*{Models with Succinct Control} 

In this paper we need various models with \emph{doubly succinct} control,
i.e., models with doubly exponentially many states.
Informally speaking, a machine with finite control $\cB$, e.g.\ an $\NFA$ or a $\VASS$, is doubly succinct
if its set of control states is $\Lambda^M$ where $M \in \N$ is an exponential number given in binary encoding,
and $\Lambda$ is a finite alphabet.
The initial and final state of $\cB$ are the states $0^M$ and $1^M$ for some letters $0,1 \in \Lambda$.
Finally, the transitions of $\cB$ are given by \emph{finite-state transducers} $\cT$,
i.e., asynchronous multitape automata recognizing relations $R \subseteq (\Lambda^M)^k$.
For example, a \emph{doubly succinct $\NFA$} ($\dsNFA$ in short)
contains binary transducers $\cT_a$ for each $a \in \Sigma \cup \{\varepsilon\}$ where $\Sigma$ is the input alphabet,
and $\cB$ contains a transition $p \xrightarrow{x} q$ if and only if $(p,q)$ is accepted by $\cT_x$.
A \emph{doubly succinct VASS} ($\dsVASS$, for short) contains binary transducers
$\cT_{x,i},\cT_{x,\bar i},\cT_{x,\varepsilon}$ for each $x \in \Sigma \cup \{\varepsilon\}$ and $i \in I$,
where $I$ is the set of counters.
A state pair $(p,q)$ accepted by $\cT_{x,i}$ specifies a transition $p \xrightarrow{x,\be_i} q$ in $\cB$,
where $\be_i$ only increments counter $i$ and leaves other counters the same.
Similarly $\cT_{x,\bar i}$ and $\cT_{x,\varepsilon}$ specify decrementing transitions
and transitions without counter updates.

Later we will also use \emph{(singly) succinct} $\ECFG$s, which are extended context-free grammars
whose set of nonterminals is $\Lambda^M$ where $M$ is a unary encoded number.
The set of productions is given in a suitable fashion by transducers.
Let us remark that the precise definition of (doubly) succinct automata or grammars is not important for our paper,
e.g.\ one could also use circuits instead of transducers to specify the transitions/productions.

\subsection{Checking Dyck Inclusion for $\dsVASS$}

We prove our first technical contribution: an $\EXPSPACE$ procedure to check non-inclusion of a $\VASS$ language in a Dyck language.
This involves checking if one of (OV), (DV), or (MV) occurs.
We begin by showing how these violations can be detected for a (non-succinct) $\VASS$.

To this end, first we show that offset-uniformity of a $\VASS$ language implies
a doubly exponential bound $B$ on the offset values for prefixes of accepted words
(\cref{thm:vass-boundedness}).
Given an alphabet $X$ and a number $k \in \N$, we define the language
\[ 
\Bounded({X,k}) =\{w\in(X\cup\bar{X})^* \mid \text{for every prefix $v$ of $w$: $|\offset(v)|\le k$}\}. 
\]

\begin{theorem}\label{thm:vass-boundedness}
Let $\calV$ be a $\VASS$ with $L(\calV)\subseteq (X\cup\bar{X})^*$. 
If $L(\calV)$ is offset-uniform, then $L(\calV)\subseteq \Bounded({X,2^{2^{p(|\calV|)}}})$ for some polynomial function $p$. 
\end{theorem}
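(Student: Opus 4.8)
The plan is to show that if the offset of prefixes is not doubly exponentially bounded, then one can pump a run to produce two accepted words with different offsets, contradicting offset-uniformity. First I would fix a $\VASS$ $\calV = (Q,\Sigma,I,E,q_0,q_f)$ with $d = |I|$ counters and let $n = |Q|$. Consider an accepting run $\rho\colon (q_0,\vecz) = c_0 \xrightarrow{x_1} c_1 \cdots \xrightarrow{x_\ell} c_\ell = (q_f,\bu)$ and suppose some prefix $v = x_1\cdots x_j$ of the accepted word $w$ has $|\offset(v)|$ larger than a bound $B$ to be determined. Say $\offset(v)$ is large and positive (the negative case is symmetric, reading the run backward or tracking $|w|_{\bar X}-|w|_X$); then the prefix of $\rho$ up to $c_j$ reads strictly more $X$-letters than $\bar X$-letters, with the difference exceeding $B$.

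The key step is a two-sided pumping argument along the prefix $\rho_{\le j}$ and the suffix $\rho_{> j}$. By a Rackoff-style / Lipton-style analysis of long runs in a $\VASS$, if the prefix $\rho_{\le j}$ contributes a large $X$-surplus, it must contain a \emph{positive self-covering loop}: a state $q$ visited at two positions, say at configurations $(q,\bw_1)$ and $(q,\bw_2)$ with $\bw_1 \le \bw_2$ componentwise, such that the infix between them reads a word with nonzero offset (necessarily positive in the case at hand, after choosing the loop appropriately among the ``surplus-generating'' portion of the run). The doubly exponential bound enters here: the classical bound on the length of a shortest run before such a covering loop must appear, or equivalently the bound on counter values forced along it, is $2^{2^{p(|\calV|)}}$ for a suitable polynomial $p$ (this is exactly the bound from Rackoff's theorem \cite{Rackoff78} and Lipton's construction \cite{lipton1976reachability}, both cited in the paper). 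So if $|\offset(v)| > 2^{2^{p(|\calV|)}}$ for the right polynomial $p$, such a loop is guaranteed inside the prefix.

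Having isolated such a loop, I would argue: iterating the loop once more keeps the whole run valid — the prefix part stays valid because $\bw_2 \ge \bw_1$ lets us re-run the loop body, and the suffix $\rho_{>j}$ stays valid because pumping a positive-offset loop only \emph{increases} some counters relative to the original run, never decreases them below the original (nonnegative) values, so the accepting run to $q_f$ still goes through. The pumped run is again accepting, but the offset of its accepted word has changed by the (nonzero) offset of the loop body. Hence $L(\calV)$ contains two words with different offsets, contradicting offset-uniformity. I would conclude that every prefix $v$ of every $w \in L(\calV)$ satisfies $|\offset(v)| \le 2^{2^{p(|\calV|)}}$, i.e.\ $L(\calV) \subseteq \Bounded(X,2^{2^{p(|\calV|)}})$.

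The main obstacle I anticipate is making the pumping argument genuinely \emph{one-directional}: one needs a loop whose body has \emph{strictly positive} offset (not merely nonzero), and one needs the loop to be locatable so that iterating it preserves validity of \emph{both} the prefix and the suffix simultaneously. The clean way is probably to first extract, from the prefix, a covering loop $(q,\bw_1)\le(q,\bw_2)$ via Rackoff's bound applied to a modified $\VASS$ that also carries a bounded counter tracking the offset up to threshold $B$ (so that ``offset exceeds $B$'' becomes a coverability statement and Rackoff's doubly exponential bound applies directly); then separately check that among all such loops at least one has nonzero offset, because the total prefix offset is large while each loop-free segment contributes a polynomially bounded amount. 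Getting the bookkeeping of ``bounded extra counter'' to interact correctly with the existing $\{-1,0,1\}$ restriction and with the transducer-free (non-succinct) $\VASS$ here is the fiddly part, but it is exactly the kind of instrumentation the paper already alludes to ("this number can be tracked by adding doubly exponentially bounded counters as in Lipton's construction").
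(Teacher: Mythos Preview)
Your pumping strategy reduces the theorem to an unproven lemma: \emph{in any $\VASS$ accepting run, a prefix with offset exceeding $2^{2^{p(|\calV|)}}$ must contain a self-covering pair $c_i\le c_j$ ($i<j$, same state, dominated counters) whose intervening segment has nonzero offset}. Your appeals to Rackoff do not establish this. Rackoff's theorem bounds the length of a \emph{shortest} covering run; it does not locate self-covering loops inside an arbitrary given run, and certainly not loops with a prescribed nonzero offset. The suggested fix---tracking the offset in an extra counter and invoking Rackoff on the augmented $\VASS$---still only produces short \emph{alternative} witnesses of coverability, not a loop inside the run you started with. And the remark that ``each loop-free segment contributes a polynomially bounded amount'' conflates state-repetition-free segments (indeed of length $\le|Q|$) with segments lacking a self-covering pair; the latter are controlled bad sequences for Dickson's lemma on $\N^d$, and their length is not bounded doubly exponentially in $|\calV|$.

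Note too that the contrapositive of your final pumping step is the one-line observation: offset-uniformity forces \emph{every} self-covering loop along an accepting run to have offset zero (pump once; the new accepted word's offset differs by the loop's offset, which must then vanish). So under the hypothesis the loop you seek does not exist, and your lemma would have to be proved for arbitrary $\VASS$ without assuming uniformity---a statement that, if true, looks no easier than the theorem itself.

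The paper's proof avoids pumping entirely. For each configuration $c$ in the set $C$ of configurations that are both reachable and co-reachable, the residual language $L(c)$ is offset-uniform; let $f(c)$ be its unique offset. Monotonicity of $\VASS$ semantics gives $L(c_1)\subseteq L(c_2)$ whenever $c_1\le c_2$, hence $f(c_1)=f(c_2)$, so $f(c)$ is determined by any minimal co-reachable configuration below $c$. By Rackoff's bound (in the explicit form of~\cite{BozzelliG11}) there are only doubly-exponentially many such minimal elements, hence only that many values of $f$. Since adjacent configurations differ in $f$ by at most $1$ and the configuration graph restricted to $C$ is connected with $0\in f(C)$, the image $f(C)$ is an interval, bounding every $|f(c)|$---and hence every prefix offset---doubly exponentially.
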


\begin{proof}
Let $\calV = (Q,X \cup \bar X,I,E,q_0,q_f)$ be a VASS where $L(\calV) \neq \emptyset$ is offset-uniform.
The unique offset of $L(\calV)$ is bounded double exponentially in $|\calV|$ since $L(\calV)$ contains
some word that is at most double exponentially long, a fact that follows from Rackoff's bound on covering runs~\cite{Rackoff78}.
Let $C \subseteq Q \times \multiset{I}$ be the set of configurations that are reachable from $(q_0,\vecz)$
and from which the final state can be reached.
Observe that for any configuration $c \in C$ the language
$L(c) = \{ w \in (X \cup \bar X)^* \mid \exists \bu \colon c \xrightarrow{w} (q_f,\bu) \}$ is also offset-uniform
since $L(c) \subseteq \{ w \in (X \cup \bar X)^* \mid vw \in L(\calV)\}$ where $v \in (X \cup \bar X)^*$ is any word
with $(q_0,\vecz) \xrightarrow{v} c$.
Define the function $f\colon C\to \mathbb{Z}$ where $f(c)$ is the unique offset of the words in $L(c)$.
It remains to show that $|f(c)|$ is bounded double exponentially for all $c \in C$.

Let $M$ be the set of all configurations from which the final state can be reached (hence $C\subseteq M$).
Consider the following order on $\VASS$ configurations $Q \times \multiset{I}$: $(q, \bu) \leq (q', \bu')$ iff $q = q'$ and $\bu(i) \leq \bu'(i)$ for each $i \in I$. 
The cardinality of the set $\min(M)$ of minimal elements in $M$ with respect to this order is bounded doubly exponentially in the size of $\calV$. This follows directly from the fact that Rackoff's doubly-exponential bound~\cite{Rackoff78} on the length of a covering run does not depend on the start configuration (but only the size of the VASS and the final configuration). An explicit bound for $|\min(M)|$ is given in~\cite[Theorem~2]{BozzelliG11}.

Observe that if $c_1 \in M$ and $c_2 \in C$ with $c_1 \le c_2$ then $L(c_1) \subseteq L(c_2)$
and therefore $L(c_1)$ is also offset-uniform, having the same offset as $L(c_2)$.
Hence, if for two configurations $c_1,c_2\in C$ there exists a configuration $c\in M$ with $c \le c_1$
and $c \le c_2$, then $f(c_1)=f(c_2)$.
Since for every $c_2 \in C$ there exists $c_1 \in \min(M)$ with $c_1 \le c_2$,
the function $f$ can only assume doubly exponentially many values on $C$. 

Finally, we claim that $f(C) \subseteq \Z$ is an interval containing 0,
which proves that the norms of elements in $f(C)$ are bounded by the number of different values,
i.e., double exponentially.
Since we assumed $L(\calV) \neq \emptyset$,
some final configuration $(q_f,\bu) \in C$ is reachable from $(q_0,\vecz)$,
and therefore $0 \in f(C)$ since $\varepsilon \in L((q_f,\bu))$.
Consider the configuration graph $\mathcal{C}$ of $\calV$ restricted to $C$.
For any edge $c_1 \to c_2$ in $\mathcal{C}$ we have $|f(c_1) - f(c_2)| \le 1$
since VASS transitions consume at most one input symbol.
Moreover, the underlying undirected graph of $\mathcal{C}$ is connected since any configuration
is reachable from $(q_0,\vecz) \in C$.
Therefore $f(C)$ is an interval, which concludes the proof.
\end{proof}

Note that although $\Bounded(X,k)$ is a regular language for each $X$ and $k$,
\cref{thm:vass-boundedness} does not imply that every offset-uniform VASS language
is regular. For example, the $\VASS$ language $\{(x\bar{x})^m (y\bar{y})^n \mid m\ge n\}$
is offset-uniform, but it is not regular. This is because
\cref{thm:vass-boundedness} only implies boundedness of the number of occurrences of letters in the input
words, but the $\VASS$'s own counters might be unbounded.

The main consequence of \cref{thm:vass-boundedness} is that in a $\VASS$ we can track the offset
using a doubly succinct control state.
Thus, we have the following corollary.

\begin{corollary}
\label{cor:offset-zero}
The following problems can be decided in $\EXPSPACE$: 
Given a $\VASS$ or $\dsVASS$ $\calV$, does $\offset(w) = 0$ hold for all $w \in L(\calV)$? 
\end{corollary}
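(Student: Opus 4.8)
The plan is to reduce the problem, in polynomial space, to two instances of the language emptiness problem for $\dsVASS$, which is in $\EXPSPACE$ (the cited result of Baumann, Mansutti, Totzke, Zetzsche). The statement $\offset(w) = 0$ for all $w \in L(\calV)$ is equivalent to the conjunction of two conditions: (i) $L(\calV)$ is offset-uniform, and (ii) if $L(\calV) \neq \emptyset$, its common offset value is $0$. First I would handle the case $L(\calV) = \emptyset$ separately: emptiness of a ($\ds$)$\VASS$ coverability language is decidable in $\EXPSPACE$, and if the language is empty the answer is trivially ``yes''. So from now on assume $L(\calV) \neq \emptyset$.

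For condition (i), offset-uniformity, I would build an auxiliary $\dsVASS$ that simulates \emph{two independent copies} of $\calV$ in parallel on two separate copies of the counter set, reading two words $u, v \in L(\calV)$, while additionally tracking $\offset(u) - \offset(v)$ in a bounded control component. Here I invoke \cref{thm:vass-boundedness} in the following guarded way: I first (cheaply, within the same construction) enforce that the tracked offset-difference never leaves the interval $[-B, B]$ with $B = 2^{2^{p(|\calV|)}}$; if the run would push it outside this window it is simply blocked. By \cref{thm:vass-boundedness}, whenever $L(\calV)$ \emph{is} offset-uniform every prefix offset of every accepted word stays within $[-B,B]$, so the offset difference of two complete accepted words lies in $[-2B, 2B]$ — thus the window $[-2B,2B]$ never suppresses a genuine witness to \emph{non}-uniformity as long as the language is uniform; and if the language is non-uniform, a witness of bounded offset difference still exists within $[-2B,2B]$ because Rackoff-type bounds give short accepting runs, so both offsets (hence their difference) are at most doubly exponential. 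The counter tracking this difference is realized, à la Lipton, by doubly-exponentially bounded counters encoded in the doubly-succinct control, so the machine is a genuine $\dsVASS$ of polynomial description size. Its coverability language is nonempty iff there exist $u,v \in L(\calV)$ with $\offset(u) \neq \offset(v)$, i.e.\ iff $L(\calV)$ is \emph{not} offset-uniform; we check this with the $\EXPSPACE$ emptiness algorithm and negate.

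For condition (ii), assuming (i) has passed, I would build a second $\dsVASS$ that runs a single copy of $\calV$, again tracking $\offset$ of the consumed prefix in a bounded control component as in \cref{thm:vass-boundedness} (window $[-B,B]$, Lipton-style bounded counters), and accepts exactly when $\calV$ reaches $q_f$ \emph{and} the tracked offset is nonzero. Its coverability language is nonempty iff some accepted word of $\calV$ has nonzero offset; since we already know the language is offset-uniform, this holds iff the common offset is nonzero. Again we run the $\EXPSPACE$ emptiness test and negate. The algorithm answers ``yes'' iff $L(\calV) = \emptyset$, or ($L(\calV)$ is offset-uniform \emph{and} its offset is $0$). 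Each subcheck is in $\EXPSPACE$, and $\EXPSPACE$ is closed under complement and finite (polynomial) combination, so the whole procedure is in $\EXPSPACE$. The only difference for an explicit $\VASS$ input versus a $\dsVASS$ input is the bookkeeping in translating the input into the doubly-succinct formalism of the two auxiliary machines, which is a routine polynomial-space encoding.

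The main obstacle is getting the offset-tracking right: the offset of a prefix is \emph{a priori} unbounded, so one cannot naively store it in finite control. The key is that \cref{thm:vass-boundedness} supplies a doubly-exponential a priori bound \emph{conditional on offset-uniformity}, and one must be careful that imposing this bound as a hard constraint in the auxiliary machines does not create false positives or negatives — this is why the uniformity check uses the doubled window $[-2B,2B]$ and why, in the non-uniform direction, one separately argues via Rackoff's bound that a bounded-offset counterexample always exists. Implementing a counter that is bounded by a doubly-exponential value but addressable with only polynomially many control bits is exactly Lipton's construction, which is folklore in this area and incurs only polynomial blow-up in the doubly-succinct description; I would cite it rather than redo it.
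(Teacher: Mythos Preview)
Your two-check decomposition is more complicated than needed and has a gap in step~(i). The paper uses a single test: it checks whether $L(\calV)\cap\overline{K_0\cap\Bounded(X,M)}=\emptyset$, where $K_0=\{w:\offset(w)=0\}$ and $M$ is the doubly-exponential bound of \cref{thm:vass-boundedness}. The set $K_0\cap\Bounded(X,M)$ is recognized by a doubly-succinct DFA tracking the current offset in $[-M,M]$ and moving to a rejecting sink on overflow; one complements it and takes the product with $\calV$. The crucial difference from your check~(ii) is that a run whose offset leaves the window is \emph{accepted} by the complement automaton, not blocked. This is sound because $L(\calV)\subseteq K_0$ implies $L(\calV)$ is offset-uniform, whence \cref{thm:vass-boundedness} guarantees no accepted word ever leaves the window; the converse direction is trivial. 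No separate uniformity test is needed.

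The gap in your step~(i) is the appeal to ``Rackoff-type bounds give short accepting runs, so both offsets are at most doubly exponential.'' Rackoff bounds the length of a covering run in a specified $\VASS$, but you have not exhibited a $\VASS$ whose coverability is equivalent to the existence of $u,v\in L(\calV)$ with $\offset(u)\ne\offset(v)$. Rackoff applied to the plain two-copy $\VASS$ (no offset tracking) yields \emph{some} short pair $u,v\in L(\calV)$, with no guarantee that their offsets differ. Encoding the inequality $\offset(u)\ne\offset(v)$ directly as a coverability condition is not straightforward: $\VASS$ counters cannot go negative, and coverability acceptance cannot compare two unbounded counter values, so one cannot just add counters for the two offsets and test inequality at the end. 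Putting the offsets into finite control already presupposes the bound you are trying to justify, which is circular. Since your check~(ii) blocks out-of-window runs, it can miss violations on its own, and your check~(i) as argued does not fill that hole.
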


\begin{proof}
First assume $\calV$ is a $\VASS$.
We show that the problem can be reduced to the intersection non-emptiness problem for a $\VASS$ and a doubly succinct $\NFA$,
i.e., given a $\VASS$ $\calV$ and a doubly succinct NFA $\calA$, is the intersection $L(\calV) \cap L(\calA)$ nonempty?
One can construct in polynomial time a doubly succinct $\VASS$ for $L(\calV) \cap L(\calA)$,
as a product construction between $\calV$ and $\calA$.
Since the emptiness problem for $\dsVASS$ is in $\EXPSPACE$ (\cite[Theorem 5.1]{BaumannMTZ22}),
we can also decide emptiness of $L(\calV) \cap L(\calA)$ in $\EXPSPACE$.

Define the number $M = 2^{2^{p(|\calV|)}}$ where $p$ is the polynomial from \cref{thm:vass-boundedness}.
Let $K_0 = \{ w \in (X \cup \bar{X})^* \mid \offset(w) = 0 \}$.
According to \cref{thm:vass-boundedness}, we have $L(\calV)\subseteq K_0$
if and only if $L(\calV)\subseteq K_0 \cap \Bounded(X,M)$.
By the remarks above, it suffices to construct a doubly succinct NFA for the complement of $K_0 \cap \Bounded(X,M)$.
The following doubly succinct deterministic finite automaton $\calA$ recognizes $K_0 \cap \Bounded(X,M)$:
Given an input word over $X \cup \bar{X}$, the automaton tracks the current offset in the interval $[-M,M]$,
stored in the control state as a binary encoding of length $\log M = 2^{p(|\calV|)}$ together with a bit indicating the sign.
If the absolute value of the offset exceeds $M$, the automaton moves to a rejecting sink state.
The state representing offset $0$ is the initial and the only final state.
Finally, we complement $\calA$ to obtain a doubly succinct NFA $\bar \calA$, with a unique final state,
for the complement of $K_0 \cap \Bounded(X,M)$.

Now assume $\calV$ is a $\dsVASS$. Using Lipton's construction simulating doubly exponential counter values~\cite{lipton1976reachability}, we can construct
a (conventional) $\VASS$ $\calV'$, size polynomial in $|\calV|$, with the same language
(similar to \cite[Theorem 5.1]{BaumannMTZ22}).
We can now apply the above construction. 
\end{proof}

Next, we check for (DV) or (MV), assuming offset uniformity.
We will reduce both kinds of violations to the problem
of searching for \emph{marked Dyck factors}.
A word of the form $u \# v \bar \# w$ is called a \emph{marked Dyck factor}
if $u,v,w \in \{x, \bar x\}^*$ and $v \in \Dyck_x$.

Intuitively, if a (DV) occurs in a word $w$, there is a first time that the offset reaches $-1$.
Placing a $\bar{\#}$ at the place where this happens, and a $\#$ right at the beginning,
we have a word of the form $\#u\bar{\#}v$ where $u \in \Dyck_x$.
Similarly for (MV), we replace two letters $z \in X$ and $\bar{y} \in X$ with $z \neq y$
by $\#$ and $\bar{\#}$, respectively, and look for a word $u \# v \bar{\#} w$, where $v\in \Dyck_x$.

\begin{proposition}
\label{prop:VASScheck}
The following problems can be decided in $\EXPSPACE$: 
Given an offset-uniform $\VASS$ or $\dsVASS$ $\calV$, does $L(\calV)$ contain a marked Dyck factor? 
\end{proposition}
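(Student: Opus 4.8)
The plan is to reduce the problem to emptiness of a doubly succinct $\VASS$, which is decidable in $\EXPSPACE$ by \cite[Theorem~5.1]{BaumannMTZ22}. Fix $\calV$ and set $M := 2^{2^{p(|\calV|)}}$, where $p$ is the polynomial from \cref{thm:vass-boundedness}; if $\calV$ is already a $\dsVASS$, I would first replace it by an equivalent (plain) $\VASS$ of polynomial size using Lipton's simulation of doubly exponential counter values \cite{lipton1976reachability} (as in the proof of \cref{cor:offset-zero}), so that $M$ remains double exponential in the size of the given machine. I would then build a machine $\calV'$ that runs $\calV$ but additionally maintains, in its control state, (a) a phase flag in $\{1,2,3\}$ recording whether it is currently reading $u$ (before $\#$), $v$ (strictly between the two markers), or $w$ (after $\bar\#$), so that it can enforce that the input has exactly one $\#$ followed later by exactly one $\bar\#$, with only letters from $\{x,\bar x\}$ elsewhere; and (b) while in phase~$2$, a ``bracket height'' $o$, initialised to $0$ upon reading $\#$, incremented on $x$, decremented on $\bar x$, with $\calV'$ moving to a rejecting sink whenever a decrement would make $o$ negative, and requiring $o = 0$ as a side condition of the transition that reads $\bar\#$. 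Acceptance of $\calV'$ requires $\calV$ to accept and the phase to be $3$. By construction $L(\calV') \neq \emptyset$ iff $L(\calV)$ contains a word $u\#v\bar\# w$ with $u,v,w \in \{x,\bar x\}^*$, $\dip(v) = 0$ and $\offset(v) = 0$, i.e.\ iff $L(\calV)$ contains a marked Dyck factor.

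The one point that makes this work — and where offset-uniformity is essential — is that $o$ need only range over a bounded interval, so that it can be stored in a doubly succinct finite control rather than in an unbounded counter. Indeed, let $s = u\#v\bar\# w \in L(\calV)$ and let $v'$ be any prefix of $v$. Both $u$ and $u\#v'$ are prefixes of $s$, so by offset-uniformity and \cref{thm:vass-boundedness} we get $|\offset(u)| \le M$ and $|\offset(u\#v')| \le M$; since $\#$ does not contribute to the offset, $\offset(v') = \offset(u\#v') - \offset(u)$, hence $|\offset(v')| \le 2M$. Thus $o$ stays in $\{0,1,\dots,2M\}$ along every run that has a chance of accepting, so it can be represented by a binary string of length $O(\log M) = 2^{O(p(|\calV|))}$, i.e.\ exponential in $|\calV|$. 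Consequently the control of $\calV'$ is of the form $\Lambda'^{M'}$ with $M'$ exponential in $|\calV|$ (dominated by the $o$-component); after the routine normalisation making its initial and final control states $0^{M'}$ and $1^{M'}$, and noting that its transition relation — a transition of $\calV$ (resp.\ the transducer-defined transitions, if $\calV$ was a $\dsVASS$) together with a phase update and a binary increment/decrement/comparison on $o$ — is recognised by finite-state transducers, since binary incrementing, decrementing and comparison with a fixed bound are standard transducer tasks, $\calV'$ is a genuine $\dsVASS$ whose counters are exactly those of $\calV$ (resp.\ of its Lipton expansion).

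It then remains to invoke \cite[Theorem~5.1]{BaumannMTZ22}: emptiness of $\calV'$, and hence the existence of a marked Dyck factor in $L(\calV)$, is decidable in $\EXPSPACE$. I expect the only real obstacle to be the bound on $o$: offset-uniformity is precisely what prevents the height of $v$ from being unbounded, which in turn is what keeps the problem in $\EXPSPACE$ — if $o$ had to be an unbounded counter, the exact zero-test required when reading $\bar\#$ would be unavailable in a (doubly succinct) $\VASS$, matching the Ackermann-completeness of the variant without offset-uniformity mentioned in the introduction. All the remaining steps (the phase bookkeeping, the product with $\calV$, and the transducer descriptions) are routine.
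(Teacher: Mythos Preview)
Your proposal is correct and follows essentially the same approach as the paper: both reduce to $\dsVASS$ emptiness by tracking the bracket height of $v$ in a doubly-succinct finite control, using the identical calculation $\offset(v') = \offset(u\#v') - \offset(u) \in [-2M,2M]$ derived from offset-uniformity and \cref{thm:vass-boundedness} to bound that height. The only cosmetic difference is that the paper phrases the construction as an intersection of $\calV$ with a separately described doubly succinct $\NFA$ (explicitly listing the checks: nonnegative offset, offset $\le 2M$, offset zero at~$\bar\#$), whereas you build the product $\calV'$ directly; you should make explicit that $\calV'$ also goes to a rejecting sink when $o$ would exceed $2M$, but otherwise the arguments coincide.
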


\begin{proof}
As in \cref{cor:offset-zero}, given a $\dsVASS$, we can convert to a polynomial-sized $\VASS$ with the same language and apply the following algorithm.

We again reduce to the intersection nonemptiness problem between a $\VASS$ and a doubly succinct $\NFA$, and use the fact that nonemptiness of $\dsVASS$ is in $\EXPSPACE$ 
\cite[Theorem 5.1]{BaumannMTZ22}.
As above, define the number $M = 2^{2^{p(|\calV|)}}$ where $p$ is the polynomial from \cref{thm:vass-boundedness}.
The automaton keeps track of the offset and also verifies that the input has the correct format $u \# v \bar \# w$ where $u,v,w \in \{x, \bar x\}^*$.
Furthermore, upon reaching $\#$ it starts tracking the current offset and verifies that (i)~the offset stays nonnegative,
(ii)~the offset never exceeds~$2M$, and (iii)~the offset is zero when reaching $\bar \#$.
If $L(\calV)$ intersects $L(\calA)$, then clearly $\calV$ is a positive instance of the problem.
Conversely, assume that $L(\calV)$ contains a word $u \# v \bar \# w$ with $v \in \Dyck_x$.
By offset-uniformity of $\calV$ and by~\cref{thm:vass-boundedness}, each prefix $v'$ of $v$ satisfies 
$\offset(v') = \offset(uv') - \offset(u) \le M - (-M) = 2M$.
Therefore $u \# v \bar \# w \in L(\calA)$.
\end{proof}

Let us put everything together.
Let $\rho \colon (X \cup \bar X)^* \to \{x, \bar x \}^*$ be the morphism
that replaces all letters from $X$ (resp., $\bar{X})$ by the letter $x$ (resp., $\bar x$).
Given a $\dsVASS$ $\calV$ over $X \cup \bar X$ we can construct in polynomial time
three $\dsVASS$ $\calV_\ioff, \calV_\idip, \calV_\imis$ where
\begin{equation*}
\label{eq:vasss}
\begin{aligned}
	L(\calV_\ioff) &= \rho(L(\calV)), \\
	L(\calV_\idip) &= \{ \# \rho(v) \bar \# \rho(\bar y w) \mid v \bar y w \in L(\calV) \text{ for some } v,w \in (X \cup \bar X)^*,\,y \in X \}, \\
	L(\calV_\imis) &= \{ \rho(u) \# \rho(v) \bar \# \rho(w) \mid u y v \bar z w \in L(\calV) \text{ for some } u,v,w \in (X \cup \bar X)^*, \, y \neq z \in X \}.
\end{aligned}
\end{equation*}
Observe that $L(\calV) \subseteq \Dyck_x$ if and only if $L(\calV_\ioff)$ has uniform offset $0$
and $L(\calV_\idip)$ and $L(\calV_\imis)$ do not contain marked Dyck factors.

Hence, to decide whether $L(\calV) \subseteq \Dyck_X$ we first test that $L(\calV_\ioff)$ has uniform offset 0,
using~\cref{cor:offset-zero}, rejecting if not.
Otherwise, we can apply \cref{prop:VASScheck} to test whether
$L(\calV_\idip)$ or $L(\calV_\imis)$ contain marked Dyck factors.
If one of the tests is positive, we know $L(\calV) \not \subseteq \Dyck_X$, otherwise $L(\calV) \subseteq \Dyck_X$.

\begin{theorem}
\label{thm:VASScoverInD}
Given a $\dsVASS$ $\calV$ over the alphabet $X \cup \bar X$, 
checking whether $L(\calV) \subseteq \Dyck_X$ is $\EXPSPACE$-complete.
\end{theorem}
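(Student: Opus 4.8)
The plan is to prove both directions of the completeness statement. The $\EXPSPACE$-hardness direction is immediate: language emptiness for $\VASS$ (even non-succinct) is already $\EXPSPACE$-hard by Lipton's construction, and from any $\VASS$ $\calV$ over a one-letter alphabet one reduces emptiness to non-inclusion in $\Dyck_X$ by, e.g., appending a single letter $\bar x$ at the end so that the resulting language is contained in $\Dyck_X$ exactly when $\calV$ is empty; a $\dsVASS$ subsumes a $\VASS$, so hardness transfers. Hence the bulk of the work is the $\EXPSPACE$ upper bound.

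For the upper bound, the approach is exactly the decomposition already assembled in the preceding paragraphs of the section, and the proof amounts to invoking the pieces in the right order. First I would recall the characterization of Dyck non-inclusion via the three violation types (OV), (DV), (MV) from \cite{RitchieSpringsteel1972}, specialized to the single-bracket-class image under the morphism $\rho$. Then I would note that from $\calV$ one builds in polynomial time the three $\dsVASS$ $\calV_\ioff$, $\calV_\idip$, $\calV_\imis$ with the languages displayed above — each is a straightforward product/guessing construction layered on top of $\calV$, staying within $\dsVASS$ because the extra bookkeeping (which two letters are recolored to $\#,\bar\#$, and the projection $\rho$) needs only finitely many control states multiplied in, realizable by transducers. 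Next, $L(\calV)\subseteq\Dyck_X$ holds iff $L(\calV_\ioff)$ is offset-uniform with offset $0$ and neither $L(\calV_\idip)$ nor $L(\calV_\imis)$ contains a marked Dyck factor. The first condition is decided in $\EXPSPACE$ by \cref{cor:offset-zero}; once it passes, $L(\calV_\idip)$ and $L(\calV_\imis)$ are offset-uniform (they inherit offset $0$ from the OV-test on $\calV$, modulo the $\#,\bar\#$ bookkeeping), so \cref{prop:VASScheck} applies and decides the marked-Dyck-factor tests in $\EXPSPACE$. Running these three $\EXPSPACE$ procedures and combining their verdicts gives the algorithm.

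The one genuine subtlety to spell out — and the step I expect to be the main obstacle in writing a fully rigorous proof — is the interplay between the OV-test and the offset-uniformity precondition of \cref{prop:VASScheck}. \cref{prop:VASScheck} requires its input $\dsVASS$ to be offset-uniform, and $L(\calV_\idip)$ and $L(\calV_\imis)$ are only offset-uniform (with the "right" offset, namely $0$, after accounting for the inserted markers) once we know $L(\calV_\ioff)$ has uniform offset $0$; the bound $M = 2^{2^{p(|\calV|)}}$ used inside the automaton of \cref{prop:VASScheck} for the inner factor $v$ is the one from \cref{thm:vass-boundedness} applied to $\calV$, and the estimate $\offset(v') = \offset(uv') - \offset(u) \le M - (-M) = 2M$ for prefixes $v'$ of $v$ is exactly where offset-uniformity plus the boundedness theorem are jointly used. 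So the proof must first discharge the OV-test and only then feed the other two $\dsVASS$ into \cref{prop:VASScheck}, with a short argument that passing the OV-test indeed makes them offset-uniform. With that ordering made explicit, the remaining content is the routine observation that a constant number of $\EXPSPACE$ subroutines compose to an $\EXPSPACE$ algorithm, completing the proof.
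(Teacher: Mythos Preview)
Your proposal is correct and follows essentially the same approach as the paper: construct the three auxiliary $\dsVASS$ $\calV_\ioff$, $\calV_\idip$, $\calV_\imis$, first run the offset-zero test via \cref{cor:offset-zero}, and then (conditionally on its success, so that offset-uniformity is guaranteed) apply \cref{prop:VASScheck} to the other two; hardness is handled by the standard reduction from $\VASS$ emptiness. The subtlety you flag about the ordering of the tests and the offset-uniformity precondition of \cref{prop:VASScheck} is exactly the point the paper relies on as well.
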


Let us remark that \cref{thm:VASScoverInD} can also be phrased slightly more
generally.  Above, we have defined the language of a $\VASS$ to be the set of
input words for which a final state is reached. Such languages are also called
\emph{coverability languages}. Another well-studied notion is the
\emph{reachability language} of a $\VASS$, which consists of those words for
which a configuration $(q_f,\vecz)$ is reached. Moreover, a $\VASS$ is
\emph{deterministic} if for each input letter $x$ and each state $q$, there is
at most one $x$-labeled transition starting in $q$ (and there are no
$\varepsilon$-transitions). We can now phrase \cref{thm:VASScoverInD} as
follows: Given a $\VASS$ coverability language $K$ and a reachability language
$L$ of a deterministic $\VASS$, it is $\EXPSPACE$-complete to decide whether
$K\subseteq L$. This is in contrast to inclusion problems where $K$ is drawn from
a subclass of the coverability languages: This quickly leads to
Ackermann-completeness~\cite{CzerwinskiH22}. In fact, even if we replace
$\Dyck_X$ in \cref{thm:VASScoverInD} with the set of prefixes of
$\Dyck_{\{x\}}$, the problem becomes Ackermann-complete (see the full version
of this work).

\section{Checking Dyck Inclusion for Programs} %
\label{sec:algorithm}

We now describe our algorithm for checking inclusion in $\Dyck_X$ for programs.
Our argument is similar to the case of $\dsVASS$: we first construct three auxiliary programs
$\asyncp_\ioff$, $\asyncp_\idip$, and $\asyncp_\imis$,
and then we use them to detect each type of violation in the original program.
We construct the program $\asyncp_\ioff$ for checking offset violation by projecting the Dyck letters
to the one-dimensional Dyck alphabet $\{x,\bar{x}\}$.
The programs $\asyncp_\idip$ and $\asyncp_\imis$ are constructed by first placing two markers like for $\VASS$,
and then projecting to $\{x,\bar{x}\}$.

As in the algorithm for $\VASS$, we check whether $L(\asyncp_\ioff)$ has uniform offset 0,
and whether $L(\asyncp_\idip)$ and $L(\asyncp_\imis)$ contain marked Dyck factors.
For these checks, we convert the three programs into $\dsVASS$ $\calV_\ioff$, $\calV_\idip$, and $\calV_\imis$, respectively,
in such a way that violations are preserved.
To be more precise, this conversion from programs to $\dsVASS$ will preserve the \emph{downward closure}
with respect to a specific order that we define below.
The global downward closure procedure is obtained by composing a local downward closure procedure applied to each task.
On the task level, the order $\extsw$ is a combination of the subword order on the handler names in $\Gamma$
and the syntactic order of $\Dyck_X$ over the event letters.
The core technical result is a transformation from context-free grammars into $\dsNFA$
which preserve the downward closure with respect to $\extsw$.

One key aspect of our downward closure construction is an important condition on the pumps that appear
in the context-free grammar. 
\begin{definition}
A context-free grammar $\calG$ is \emph{tame-pumping} if for every pump $A \derivs u A v$, we have $\offset(u)\ge 0$ and $\offset(v)=-\offset(u)$.
A derivation $A \derivs u A v$ is called an \emph{increasing pump} if $\offset(u)>0$, otherwise it is called a \emph{zero pump}.
An asynchronous program is \emph{tame-pumping} if its grammar is tame-pumping.
\end{definition}
Note that while our definition of a tame-pumping grammar is syntactic, it actually only depends on the generated language, assuming every nonterminal occurs in a derivation: In that case, a grammar is tame-pumping if and only if (i)~the set of offsets and (ii)~the set of dips of words in its language are both finite.

The following lemma summarizes some properties of tame-pumping and why it is useful for our algorithm.
The proof can be found in the full version.

\begin{restatable}{lemma}{checktamepumping}
\label{lem:checkTamePumping}
\begin{enumerate}
\item
	We can check in $\coNP$ whether a given context-free grammar over $\{x,\bar x\}$ is tame-pumping.
	Furthermore, given a nonterminal $A_0$, we can check in $\NP$ whether $A_0$ has a zero pump (resp., increasing pump).
\item \label{lem:dipBoundedCFG}
	There exists a polynomial $p$ such that, if $\calG$ is tame-pumping, then for every nonterminal $A$ of $\cG$ and every $w \in L(\cG,A)$ we have $\dip(w) \le 2^{p(|\calG|)}$.
\item
If $\asyncp$ is not tame-pumping, then $L(\asyncp) \not \subseteq \Dyck_X$.
\end{enumerate}
\end{restatable}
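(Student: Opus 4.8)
The plan is to prove the three parts in turn, reducing each to a small well-structured combinatorial fact about the (polynomial-size, after Chomsky normalization) grammar $\calG$ over $\{x,\bar x\}$.

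\textbf{Part 1 (deciding tame-pumping).}
I would first fix, for each nonterminal $A$, the set $O(A)=\{\offset(u)\mid A\derivs uAv\}$ and $Z(A)=\{\offset(v)\mid A\derivs uAv\}$ of left/right offsets of pumps through $A$. A pump $A\derivs uAv$ decomposes along a path in a derivation tree, so its left/right offset is a sum of ``context offsets'' of the productions used, together with a recursive contribution from other nonterminals on the path; in particular one can witness a pump of small total offset using a skeleton plus at most one extra occurrence of each nonterminal, hence by a pump of polynomial height (and therefore of offset bounded exponentially in $|\calG|$, but recorded in binary). So $\calG$ fails to be tame-pumping iff there is a pump $A\derivs uAv$ with $\offset(u)<0$ or $\offset(u)+\offset(v)\ne 0$, and such a pump of polynomial height (hence polynomial-size certificate: a derivation tree for $A\derivs uAv$ with no nonterminal repeated more than twice on a path) can be guessed and verified in polynomial time. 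This puts non-tame-pumping in $\NP$, i.e.\ tame-pumping in $\coNP$. For the ``furthermore'' part: $A_0$ has a zero pump (resp.\ increasing pump) iff there is a pump $A_0\derivs uA_0v$ with $\offset(u)=0$ (resp.\ $\offset(u)>0$); again, if such a pump exists at all, one of polynomial height exists (shortcut any repeated nonterminal, which by tameness---or even without it, by a compactness argument---does not change the sign/zeroness of the net offset in the required way), so we can guess and check it in $\NP$. I would be careful to note that checking an increasing pump does \emph{not} need $\calG$ to be tame-pumping, and checking a zero pump can be phrased purely about offsets.

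\textbf{Part 2 (dip bound for tame-pumping grammars).}
Assume $\calG$ is tame-pumping. Take $w\in L(\cG,A)$ and a derivation tree $T$ for it; write $T$ as a skeleton $T_0$ with pumps inserted. The skeleton has height at most $|N|$, so the yield of $T_0$ has length, and hence dip, at most exponential in $|\calG|$; this is the base contribution. Now I insert the pumps one at a time. Each pump $B\derivs uBv$ is either a zero pump ($\offset(u)=0$, and by tameness $\offset(v)=0$) or an increasing pump ($\offset(u)\ge 1$, $\offset(v)=-\offset(u)$). The crucial point is that inserting a pump at a position whose ``prefix offset'' in the current word is $p$ changes the dip only through the minimum prefix-offset inside $u$ and inside $v$; since $u$ has length at most (skeleton-of-$u$ bound), the dip of $u$ itself is at most exponential, so inserting $u$ can only lower the running minimum by an exponential amount \emph{per pump that is actually inserted on that root-to-leaf path}. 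What keeps this from compounding badly: an increasing pump raises the offset to the right of $u$ by $\offset(u)\ge 1$, so a long nested stack of increasing pumps only makes later prefixes \emph{higher}, not lower; and zero-pumps are offset-neutral so nesting them does not drag the minimum down past the exponential per-pump slack either. A clean way to package this: show by induction on the number of inserted pumps that $\dip(\text{current yield})\le \dip(\text{skeleton yield}) + (\text{number of distinct nonterminals}) \cdot D$, where $D$ is the exponential bound on the dip of a single pump's ``$u$''-part; the distinctness is what makes it polynomially many summands of an exponential, giving the stated $2^{p(|\calG|)}$. The honest accounting that a repeated pump does not add a fresh $D$ each time it is used is the main obstacle here, and I expect to handle it by always choosing, among derivation trees for a fixed $w$, one in which no nonterminal repeats ``too often'' on a path beyond what is forced, and bounding the dip contribution of each maximal pumped segment once.

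\textbf{Part 3 (non-tame $\Rightarrow$ not included in $\Dyck_X$).}
If $\asyncp$ is not tame-pumping, its grammar $\cG$ has a pump $A\derivs uAv$ with $\offset(u)<0$ or $\offset(u)+\offset(v)\ne0$. Since we assumed all nonterminals are useful, $A$ occurs in some derivation tree used along an accepting run, so $A$ is reachable from the relevant rule's nonterminal and $A\derivs w_0$ for some terminal word $w_0$; thus from that accepting run we can manufacture new accepting runs whose traces are obtained by inserting $k$ copies of the pump, i.e.\ the trace contains an infix $u^k \cdot(\text{stuff})\cdot v^k$. If $\offset(u)+\offset(v)\ne 0$, then for large $k$ this infix changes the global offset of the trace away from whatever it was, so for some $k$ the trace has $\offset\ne 0$ (an (OV)); more simply, the set of offsets realized is infinite, so not all of them are $0$. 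If instead $\offset(u)<0$ (and $\offset(v)$ is whatever), I observe $\dip$ can be driven above any bound: the word $u^k$ read from offset $0$ reaches offset $k\cdot\offset(u)\to-\infty$, and it sits at the appropriate place in a full trace so that, possibly after first also pumping to fix the global offset to $0$, we obtain a trace with a prefix of negative offset, a (DV). Either way $L(\asyncp)\not\subseteq\Dyck_X$. I would present this as: non-tameness makes either the offset-set or the dip-set of $L(\asyncp)$ infinite, and an infinite offset-set forces an (OV) while an infinite dip-set with bounded offsets forces a (DV); this also matches the remark in the text that tame-pumping is equivalent to finiteness of both the offset-set and the dip-set.

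I expect Part~2 to be the main obstacle---getting the dip bound down to a single exponential requires the ``charge each distinct nonterminal once'' argument to be made rigorous---while Parts~1 and~3 are routine once the right polynomial-size pump certificates are identified.
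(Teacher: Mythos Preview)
Your Part~3 is essentially the paper's argument: usefulness lets you place the bad pump into an accepting run, pumping only enlarges the multiset of spawned handlers (so the run stays valid), and then $\offset(u)+\offset(v)\neq 0$ gives an offset violation after one insertion while $\offset(u)<0$ gives a dip violation after sufficiently many. The extra step you mention (``possibly after first also pumping to fix the global offset to $0$'') is not needed.

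Part~1 has a real gap. You claim a polynomial-height derivation of a violating pump is a ``polynomial-size certificate,'' but in Chomsky normal form a tree of height $h$ has up to $2^h$ leaves; the side subtrees hanging off the $A$-to-$A$ path are themselves pump-free trees of height $\le |N|$ and hence of exponential size. So what you would have to guess and check is exponential, not polynomial. The paper avoids writing down any pump at all: it builds in polynomial time a grammar for $\{u_\markl v_\markr \mid A_0 \derivs uA_0 v\}$ over marked copies $x_\markl,\bar x_\markl,x_\markr,\bar x_\markr$, invokes the Verma--Seidl--Schwentick construction to obtain in polynomial time an existential Presburger formula for its Parikh image, and then expresses ``$\offset(u)<0$'', ``$\offset(u)+\offset(v)\neq 0$'', ``$\offset(u)=0$'', ``$\offset(u)>0$'' as Presburger constraints on letter counts. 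Satisfiability of existential Presburger sentences is in $\NP$, which gives $\coNP$ for tame-pumping and $\NP$ for zero/increasing pumps. If you want to rescue your approach you would have to guess only the $A$-to-$A$ path and, for each side nonterminal, an offset it can realise---but verifying realisability of those offsets is again a Parikh-image question, so you are back to the paper's machinery.

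Part~2 as you wrote it is at best circular. Your inductive bound involves a constant $D$ that upper-bounds the dip of ``a single pump's $u$-part,'' but the $u$-part of a pump is itself a concatenation of words from various $L(\cG,D_i)$, whose dips are exactly what you are trying to bound; restricting to pump-free pumps does not help once you start nesting them. The paper's argument is both simpler and bypasses this issue entirely. First, $|\offset(w)|\le 2^{|\calG|}$ for every $w\in L(\cG,A)$, since removing all pumps (which, by tameness, does not change the offset) leaves a tree of height $\le |N|$. Second, fix any prefix $w_1$ of $w$ and the path $P$ from the root to the leaf producing its last letter; removing pumps along $P$ only \emph{decreases} $\offset(w_1)$ (each removed left part has $\offset\ge 0$), so it suffices to lower-bound the offset of the resulting prefix $w_1'$. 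After removal $P$ has length $\le |N|$, so $w_1'=u_1\cdots u_n$ with $n\le |\calG|$ and each $u_i\in L(\cG,A_i)$, whence $|\offset(w_1')|\le |\calG|\cdot 2^{|\calG|}\le 2^{2|\calG|}$. This gives $\dip(w)\le 2^{2|\calG|}$ with no induction on inserted pumps at all.
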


Thus, if $\asyncp$ is not tame-pumping, the refinement checking algorithm rejects immediately.
From now on, we assume that $\asyncp$ is tame-pumping.

\subsection{Combining the subword order and the syntactic order}
\label{sec:dc-def}

Suppose $\Gamma$ is an alphabet and let $\Theta=\Gamma\cup\{x,\bar{x}\}$. Define $\bar{a}=a$ for $a \in \Gamma$.
By $\subword$, we denote the \emph{subword ordering} on $\Gamma^*$,
i.e.\ $u \subword v$ if and only if $u$ can be obtained from $v$ by deleting some letters.
Formally there exist words $u_1, \dots, u_n, v_0, \dots, v_n \in \Gamma^*$
such that $u = u_1 \cdots u_n$ and $v = v_0 u_1 v_1 \cdots u_n v_n$.
For $u,v\in\{x,\bar{x}\}^*$, we write $u\sdyck v$ if $\offset(u)=\offset(v)$ and $\dip(u)\ge \dip(v)$.
In fact, $\sdyck$ is the \emph{syntactic order} with respect to the Dyck language,
i.e.\ if $u \sdyck v$ and $rus\in \Dyck_x$ then $rvs\in \Dyck_x$ for all $r,s$.
We define the ordering $\extsw'$ on $\Theta^*$ by
$z_1 \extsw' z_2$ if and only if $\pi_{x,\bar{x}}(z_1)\sdyck \pi_{x,\bar{x}}(z_2)$,
and $\pi_\Gamma(z_1)\subword \pi_\Gamma(z_2)$.
For example, $a \bar{x}x c \extsw' x abc \bar{x}$ because $ac$ is a subword of $abc$,
and both $\bar{x} x$ and $x \bar{x}$ have offset 0, but $\bar{x} x$ has a larger dip.

Let $\#, \bar \#$ be two fresh letters, called \emph{markers}.
The set of \emph{marked words} is defined as
\[ \Decomps=\Theta^*\{\varepsilon,\#\}\Theta^*\{\varepsilon,\bar{\#}\}\Theta^*. \]
A marked word should be viewed as an infix of a larger word $u \# v \bar \# w$.
The set of \emph{admissible} marked words, denoted by $\AdmDecomps$,
consists of those words $z\in\Decomps$ which are an infix of a word $u \# v \bar \# w$ where $v \in \Dyck_x$.
For example, a marked word $u \# v$ is admissible if $v$ is a prefix of a Dyck word.

On the set of admissible marked words, we define an ordering $\extsw$.
To do so, we first define for each marked word $z \in \Decomps$
two words $\inside{z}$ and $\outside{z}$ in $\Theta^*$ as follows:
Let $u,v,w \in \Theta^*$ such that either $z = v$, $z = u \# v$, $z = v \bar \# w$, or $z = u \# v \bar \# w$.
Then we define $\inside{z} = v$ and $\outside{z} = uw$ (here, $u = \varepsilon$ if it is not part of $z$, same for $w$).
Given two admissible marked words $z_1, z_2 \in \AdmDecomps$
we define $w \extsw w'$ if and only if $z_1$ and $z_2$ contain the same markers,
and $\inside{z_1} \extsw' \inside{z_2}$, and $\outside{z_1} \extsw' \outside{z_2}$.
For example, $a \bar{x}x c \# a \extsw x abc \bar{x} \# ab$ because $a \bar{x}x c \extsw' x abc \bar{x}$
and $a \extsw' ab$.

For a language $L\subseteq \Decomps$ we denote by $\dc{L}$ the
downward closure of $L$ within $\AdmDecomps$ with respect to the ordering
$\extsw$. Thus, we define:
\[ \dc{L}=\{u\in \AdmDecomps \mid \exists v\in L \cap \AdmDecomps \colon u\extsw v\}. \]

\begin{theorem}\label{thm:extended-downclosure}
	Given a tame-pumping $\CFG$ $\calG$, we can compute in polynomial space a doubly
	succinct $\NFA$ $\calA$ such that
	$\dc{L(\calA)}=\dc{L(\calG)}$ and $|\calA|$ is polynomially bounded in $|\calG|$.
\end{theorem}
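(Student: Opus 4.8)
The plan is to build the $\dsNFA$ $\calA$ by analyzing the structure of derivation trees of $\calG$ and replacing the full tree language by a regular over-approximation that has the same $\extsw$-downward closure. First I would fix a Chomsky normal form for $\calG$ and recall the skeleton/pump decomposition: every derivation $S \derivs w$ factors as a skeleton (pumpfree derivation tree, of which there are only exponentially many, each of polynomial height) into which pumps $A \derivs u A v$ are inserted. Since $\calG$ is tame-pumping, every pump satisfies $\offset(u) \ge 0$ and $\offset(v) = -\offset(u)$, and by \cref{lem:checkTamePumping}\eqref{lem:dipBoundedCFG} every subderivation generates a word of dip at most $2^{p(|\calG|)}$. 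The key insight I would exploit is that, for the purpose of computing $\dc{L(\calG)}$, a pump $A \derivs u A v$ contributes only: (i) the subword $\pi_\Gamma(u)$ and $\pi_\Gamma(v)$ it can pump into the handler-name component (and these can be pumped arbitrarily often, so only the \emph{set of letters} appearing matters, giving a regular $\Gamma$-component), and (ii) its effect on the $\{x,\bar x\}$-component, which by tameness is controlled entirely by the single number $\offset(u) \ge 0$ together with bounded dips. So the continuous (unbounded) part of the language is governed by finitely much data attached to each skeleton edge, and everything else — dips, offsets along prefixes — stays doubly-exponentially bounded by \cref{lem:checkTamePumping} and hence can be tracked in the control state of a $\dsNFA$.

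Concretely, I would design $\calA$ to guess a skeleton of $\calG$ on the fly (its nodes form a polynomial-height tree, describable by a polynomial-length string over a bounded alphabet, hence a state of a singly- or doubly-succinct automaton), and at each skeleton node decide how many times to insert each available pump. For zero pumps ($\offset(u)=0$) the automaton just needs to nondeterministically emit letters from the relevant $\Gamma$-alphabet and $\{x,\bar x\}$-words of the appropriate bounded dip; for increasing pumps ($\offset(u)>0$) it must emit matching amounts on the "up" side $u$ and the "down" side $v$, which it coordinates using a doubly-exponentially-bounded counter stored in the control state (this is where the double succinctness and the bound $M = 2^{2^{p(|\calG|)}}$ from \cref{thm:vass-boundedness}/\cref{lem:checkTamePumping} enter). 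Throughout the run $\calA$ also tracks the running offset and dip of $\pi_{x,\bar x}$ of the prefix produced so far, capped at the doubly-exponential bound — which is legitimate precisely because on $\AdmDecomps$ any word of larger dip is $\extsw$-below one already generated, or is simply not admissible — and it tracks whether/where the markers $\#, \bar\#$ have been placed so as to restrict to $\Decomps$ and handle the $\inside{\cdot}/\outside{\cdot}$ bookkeeping. The transitions of $\calA$ are then definable by transducers of polynomial size, so $|\calA|$ is polynomial in $|\calG|$ and $\calA$ is computable in $\PSPACE$ (one computes the transducer descriptions; one must also invoke \cref{lem:checkTamePumping}(1), checkable in $\coNP \subseteq \PSPACE$, to know the relevant pump data).

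Finally I would prove the two inclusions $\dc{L(\calA)} \subseteq \dc{L(\calG)}$ and $\dc{L(\calG)} \subseteq \dc{L(\calA)}$. For the first, every word $\calA$ produces corresponds, by construction, to a genuine derivation of $\calG$ up to $\extsw$: the guessed skeleton with the guessed pump-multiplicities yields an actual derivation tree whose yield $\extsw$-dominates the emitted word (the only loss is that $\calA$ may emit fewer $\Gamma$-letters or a word of larger dip in the $\{x,\bar x\}$-component than an exact pump would, both of which go the right way under $\extsw'$). For the second, given $w \in L(\calG) \cap \AdmDecomps$, decompose its derivation tree into a skeleton plus pumps; the skeleton is one $\calA$ can guess, and the inserted pumps — after projecting away bounded-dip details that are $\extsw'$-irrelevant and clamping the offset counter (safe because $w$ is admissible, so its relevant offsets are bounded by $M$) — are exactly reproducible by $\calA$'s pump-insertion moves, yielding $w' \in L(\calA)$ with $w \extsw w'$. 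I expect the main obstacle to be the second inclusion, specifically arguing that clamping the running offset/dip at the doubly-exponential bound does not lose any admissible word up to $\extsw$: one needs the interplay between tame-pumping (which bounds dips of subderivations) and admissibility (which bounds the offset of the "inside" part via the $\Dyck_x$ constraint), and a careful case analysis on where markers sit relative to the pumped portions. Making the automaton's guessing of skeletons and pump-multiplicities precise while keeping all auxiliary data within the doubly-exponential window — so that it genuinely fits into a $\dsNFA$ of polynomial description size — is the technical heart of the argument.
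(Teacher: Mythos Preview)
Your proposal has the right skeleton-plus-pumps philosophy, but it glosses over the central difficulty and in one place invokes the wrong bound. The paper's proof hinges on a distinction you do not make: a pump $A\derivs uAv$ is \emph{undivided} if every word derived from $A$ contains both markers or neither (so $u$ and $v$ land in the same component, inside or outside, and by tame-pumping contribute net offset $0$ there), and \emph{divided} if words derived from $A$ contain exactly one marker (so $u$ lands in $\outside{\cdot}$ and $v$ in $\inside{\cdot}$, or vice versa, and offset is \emph{transferred} between the two components). Only divided pumps threaten to make the offset at a marker boundary unbounded, and handling them is where the real work is.

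The paper therefore proceeds in two stages. First (\cref{lem:convertTosECFG}) it abstracts away all undivided pumps, producing a succinct extended grammar that is \emph{almost-pumpfree}: only divided pumps remain and every $N_0$-subtree is a single leaf. Second (\cref{lem:sECFGtoDCAut}) it proves the key bound, \cref{lem:boundingOffsetInUPFtree}: in an almost-pumpfree grammar, for any \emph{admissible} word containing exactly one marker, the offsets $|\offset(u(t))|,|\offset(v(t))|$ at every node are at most $2^{p(|\cG|)}$. The proof uses admissibility essentially---it forces $\offset(\inside{\cdot})\ge 0$, which in turn caps the total offset contributed by all divided pumps combined. Your appeal to \cref{thm:vass-boundedness} is misplaced: that theorem concerns offset-uniform $\VASS$ languages and says nothing about the offset at a marker inside a grammar-derived word; and your remark that ``on $\AdmDecomps$ any word of larger dip is $\extsw$-below one already generated, or is simply not admissible'' does not bound the \emph{offset}, which $\sdyck$ must preserve exactly.

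Finally, with the bound in hand the paper does not use a single counter to coordinate the two sides of a divided pump as you suggest. For each $A\in N_\#\cup N_{\bar\#}$ it builds a succinct transducer $\calT_{A,\bx}$ accepting exactly the pairs $(u^{\rev},v)$ with $A\derivs uAv$ and prescribed dip/offset quadruple $\bx$; the $\dsNFA$ then guesses, for each relevant skeleton node, such a quadruple and a state-repetition-free \emph{skeleton run} of $\calT_{A,\bx}$, and on its two visits to that node emits an arbitrary word from the run's left (resp.\ right) ideal together with a fixed $\{x,\bar x\}$-string realizing the guessed dip and offset. This is what guarantees that the $\Gamma$-letters emitted on the two sides are jointly realizable by one actual pump and that offset and dip at each marker are preserved exactly---something a bounded counter alone does not ensure. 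Your construction would need to be refined substantially along these lines before the two inclusions could be argued.
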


We explain how to prove \cref{thm:extended-downclosure} in \cref{sec:extended-downclosure}.
Let us make a few remarks.
While downward closed sets with respect to the subword ordering are always regular,
this does not hold for $\extsw$.
Consider the language $L = (ax)^*$ where $a \in \Gamma$ is a handler name and $x \in X$ is an event letter.
Then $\dc{L}$ consists of all words $w \in \{a,x,\bar x\}^*$ where $|w|_a \le |w|_x - |w|_{\bar x}$,
which is not a regular language.
Furthermore, the automaton in \cref{thm:extended-downclosure} may indeed require
double exponentially many states. For example, given a number $n$,
consider the language $L=\{u\bar{x}^{2^n}\#x^{2^n}\bar{u} \mid
u\in\{ax,bx\}^*\}$ where $\Gamma = \{a,b\}$ is the set of handler names
and $X = \{x\}$. Here we define $\overline{a_1 a_2 \cdots a_n} = \bar a_n \cdots \bar a_2 \bar a_1$
for a word $a_1 \cdots a_n \in \{a,b,x\}^*$ where $\bar a = a$ and $\bar b = b$.
This is generated by a tame-pumping context-free
grammar of size linear in $n$. However, for any $\calA$ with
$\dc{L(\calA)}=\dc{L}$, projecting to just $a$ and $b$ yields the
language $K=\dc{\{uu^\rev \mid u\in\{a,b\}^*,~|u|\le 2^n\}}$, for which
an NFA requires at least $2^{2^n}$ states.

Finally, note that the restriction to admissible words is crucial: If we defined the ordering $\extsw$ on all words of $\Decomps$, then for the tame-pumping language $L=\{x^n\#\bar{x}^n \mid n\in\N\}$, the downward closure would not be regular, because an NFA would be unable to preserve the unbounded offset at the separator $\#$. A key observation in this work is that in combination with tame pumping, admissibility guarantees that the offset at the borders $\#$ and $\bar{\#}$ is bounded (see~\cref{lem:boundingOffsetInUPFtree}), which enables a finite automaton to preserve it.

Given a tame-pumping asynchronous program $\asyncp$, we can now compute a $\dsVASS$~$\calV$ with the same downward closure:
Its counters are the handler names $a \in \Gamma$ in $\asyncp$.
For each nonterminal $A$ we apply \cref{thm:extended-downclosure} to $\calG_A$,
which is the grammar of $\asyncp$ with start symbol $A$, and obtain a $\dsNFA$ $\calB_A$.
We replace each transition $q \xhookrightarrow{a,A} q'$ by the following gadget:
First, it decrements the counter for the handler name $a$.
Next, the gadget simulates the $\dsNFA$ $\calB_A$ where handlers $b \in \Gamma$ are interpreted as counter increments.
Finally, when reaching the final state of $\calB_A$ we can non-deterministically switch to $q'$.

\begin{restatable}{corollary}{aptovass}
	\label{cor:ap-to-vass}
	Given an asynchronous program $\asyncp$ with tame-pumping,
	we can compute in polynomial space a doubly succinct $\VASS$ $\calV$
	such that $\dc{L(\asyncp)}=\dc{L(\calV)}$ and $|\calV|$ is polynomially bounded in $|\asyncp|$.
\end{restatable}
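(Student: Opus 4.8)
The plan is to make the gadget construction sketched just above precise and then to verify that it preserves the downward closure. For the construction I would take the counters of $\calV$ to be the handler names in $\Gamma$, and let the control of $\calV$ record a global state of $\asyncp$, the transition rule currently being simulated, and a state of the relevant $\dsNFA$. From the initial control state the machine first increments the counter $\gamma_0$, modelling the initial bag $\multi{\gamma_0}$. To simulate a rule $q \xhookrightarrow{a,A} q'$, the gadget decrements the counter $a$, then runs the $\dsNFA$ $\calB_A$ obtained by applying \cref{thm:extended-downclosure} to $\calG_A$, the grammar of $\asyncp$ with start symbol $A$ (which is tame-pumping since $\calG$ is): each letter $b \in \Gamma$ emitted by $\calB_A$ becomes an increment of counter $b$, whereas letters in $\{x, \bar{x}, \#, \bar{\#}\}$ are emitted as output of $\calV$; on reaching the final state of $\calB_A$ the gadget may nondeterministically move to control state $q'$. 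Acceptance of $\calV$ is by coverability of $q_f$. Since $Q$ and $\Delta$ have polynomial size and each $\calB_A$ is a polynomially sized $\dsNFA$ computable in polynomial space, padding the control components to a common format and combining the transducers yields a $\dsVASS$ $\calV$ of polynomial size, computable in polynomial space; this part is routine bookkeeping.

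The core of the argument is the identity $\dc{L(\asyncp)} = \dc{L(\calV)}$, which I would prove by transporting accepting runs in both directions; the two directions are essentially symmetric, so consider $\dc{L(\calV)} \subseteq \dc{L(\asyncp)}$. So take $w \in L(\calV) \cap \AdmDecomps$; it suffices to show $w \in \dc{L(\asyncp)}$, since $\extsw$ is transitive on admissible words. After reading off the gadget structure, an accepting $\calV$-run decomposes $w$ into task chunks $y_1, \dots, y_n$ with $y_i \in L(\calB_{A_i})$, together with a consistent sequence of counter updates. Global admissibility of $w$ forces every $y_i$ to be admissible: a marker-free chunk can be placed entirely in the outside part, and the at most two chunks carrying $\#$ or $\bar{\#}$ inherit the relevant Dyck-prefix or Dyck-suffix property from $\inside{w}$. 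Hence $y_i \in \dc{L(\calB_{A_i})} = \dc{L(\calG_{A_i})}$ by \cref{thm:extended-downclosure}, so there is $z_i \in L(\calG_{A_i}) \cap \AdmDecomps$ with $y_i \extsw z_i$. Replacing each $y_i$ by $z_i$ should yield an accepting run of $\asyncp$: because $\pi_\Gamma(y_i) \subword \pi_\Gamma(z_i)$, the bag of pending handlers stays pointwise above the counter valuation of the $\calV$-run, so every scheduler step remains enabled; and because $\extsw$-relaxation preserves offsets exactly and only increases dips, and the syntactic Dyck order $\sdyck$ is compatible with concatenation, the resulting $\asyncp$-output $w'$ is again admissible and satisfies $w \extsw w'$, whence $w \in \dc{L(\asyncp)}$. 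The converse inclusion is the mirror image, using $\dc{L(\calG_{A_i})} = \dc{L(\calB_{A_i})}$ in the other direction and the observation that a smaller bag of pending handlers can always be completed to a larger one.

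I expect the main obstacle to be the admissibility bookkeeping. The order $\extsw$ is defined only on $\AdmDecomps$, and $\AdmDecomps$ is closed under neither direction of the underlying order $\extsw'$, so the delicate point is to show that a task-local replacement carried out inside a globally admissible word again produces a globally admissible word. The key observation is that, by admissibility of the whole word, the offset reached just after the $\#$-chunk dominates the dips of all later chunks, and this domination survives because the replacements never increase dips and preserve offsets; tame-pumping --- in particular the dip bound of \cref{lem:dipBoundedCFG} together with the bound on the offset at the markers (cf.\ \cref{lem:boundingOffsetInUPFtree}) --- is exactly what makes the required task-local replacements available. The only genuinely case-heavy aspect will be handling the distinguished chunks carrying $\#$ and $\bar{\#}$, including the degenerate cases in which they coincide or one or both of the markers is absent.
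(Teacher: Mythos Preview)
Your construction and overall strategy match the paper's: build $\calV$ from the gadget sketch, then transport runs chunk-by-chunk using $\dc{L(\calB_A)}=\dc{L(\calG_A)}$ from \cref{thm:extended-downclosure}. The paper organizes the correctness argument slightly differently, proving $\dc{L_\asyncp((q,\mmap))}=\dc{L_\calV((q,\mmap))}$ by induction on run length and isolating the admissibility bookkeeping into a one-line compatibility lemma (\cref{lem:extsw-compatible}: if $u_1,u_2,v_1,v_2\in\AdmDecomps$, $u_1u_2\in\AdmDecomps$, and $u_i\extsw v_i$, then $u_1u_2\extsw v_1v_2$), whose proof is the case split you anticipate.

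One remark: your last paragraph over-attributes the difficulty. Tame-pumping, the dip bound of \cref{lem:checkTamePumping}, and the offset bound of \cref{lem:boundingOffsetInUPFtree} are all consumed inside the proof of \cref{thm:extended-downclosure} and play no direct role in this corollary. The admissibility preservation you worry about follows purely from elementary properties of $\sdyck$ and $\subword$ (offsets are preserved, dips only decrease upward, factors of admissible words are admissible), with no appeal to boundedness; the compatibility lemma makes this explicit.
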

The details of the proof are given in the full version.

\subsection{The algorithm}
\label{subsec:algorithm}

We are now ready to explain the whole algorithm.
Given an asynchronous program $\asyncp = (Q, X \cup \bar X, \Gamma, \cG, \Delta, q_0, q_f, \gamma_0)$,
we want to check if $L(\asyncp) \subseteq \Dyck_X$.
Recall that, wlog, we can assume all nonterminals are useful,
meaning every nonterminal is involved in some accepting run.
The algorithm is presented in~\cref{alg:outline}.
As a first step, the algorithm verifies that $\asyncp$ is tame-pumping using \cref{lem:checkTamePumping}.
Next we construct the following auxiliary asynchronous programs $\asyncp_\ioff$, $\asyncp_\idip$, $\asyncp_\imis$,
to detect offset, dip, and mismatch violations in $L(\asyncp)$.
Let $\rho \colon (X \cup \bar X)^* \to \{x,\bar x\}^*$ be the morphism 
which replaces all letters in $X$ by unique letter $x$ and all letters in $\bar X$ by unique letter $\bar x$.
The programs $\asyncp_\ioff$, $\asyncp_\idip$, $\asyncp_\imis$ recognize the following languages
over the alphabet $\{ x, \bar x, \#, \bar \# \}$:
\begin{equation}
\label{eq:aux-programs}
\begin{aligned}
	L(\asyncp_\ioff) &= \{ \rho(w) \mid w \in L(\asyncp) \}, \\
	L(\asyncp_\idip) &= \{ \# \rho(v) \bar \# \rho(\bar y w) \mid v \bar y w \in L(\asyncp) \text{ for some } v,w \in (X \cup \bar X)^*,\,y \in X \}, \\
	L(\asyncp_\imis) &= \{ \rho(u) \# \rho(v) \bar \# \rho(w) \mid u y v \bar z w \in L(\asyncp), \\ & \qquad\qquad\qquad\qquad\qquad\qquad \text{for some } u,v,w \in (X \cup \bar X)^*, \, y \neq z \in X \}.
\end{aligned}
\end{equation}
In fact, if the original asynchronous program $\asyncp$ is tame-pumping,
we can ensure that $\asyncp_\ioff$, $\asyncp_\idip$, $\asyncp_\imis$ are also tame-pumping
(see the full version for details).

\begin{algorithm}[t]
  \caption{Checking non-inclusion of $L(\asyncp)$ in the Dyck language $\Dyck_X$ in $\EXPSPACE$. \label{alg:outline}}
  \SetKw{Return}{return}
  \SetKwInOut{Input}{input}
  \SetKwSwitch{Guess}{Violation}{Other}{guess}{do}{violation}{otherwise}{end}
  \Input{Asynchronous program $\asyncp$ for a language $L \subseteq (X\cup \bar{X})^*$} \\
  \lIf{\text{$\asyncp$ does not have tame-pumping (\emph{\cref{lem:checkTamePumping}})}}{\Return{$L \not\subseteq \Dyck_X$}}
  Construct asynchronous programs $\asyncp_\ioff$, $\asyncp_\idip$, $\asyncp_\imis$ (\cref{eq:aux-programs}).
  
  Construct $\dsVASS$ $\calV_\ioff$, $\calV_\idip$, $\calV_\imis$
  with $\dc{L(\calV_{\mathsf{x}})} = \dc{L(\asyncp_{\mathsf{x}})}$ for $\mathsf{x} \in \{\ioff, \idip, \imis\}$ (\cref{cor:ap-to-vass}).
  
  \lIf{\text{$\calV_\ioff$ does not have uniform offset 0 (\emph{\cref{cor:offset-zero}})}}{\Return{$L \not\subseteq \Dyck_X$}}
  \lIf{\text{$L(\calV_\idip)$ or $L(\calV_\imis)$ contains a marked Dyck factor (\emph{\cref{prop:VASScheck}})}}{\Return{$L \not\subseteq \Dyck_X$}}
  
  \Return{$L \subseteq \Dyck_X$}
\end{algorithm}
 
It remains to verify whether $L(\asyncp_\ioff)$ has uniform offset 0,
	and $L(\asyncp_\idip)$ and $L(\asyncp_\imis)$ do not contain marked Dyck factors.
By \cref{cor:ap-to-vass} we can compute for each $\mathsf{x} \in \{\ioff, \idip, \imis\}$ 
a $\dsVASS$ $\calV_\mathsf{x}$ with $\dc{L(\calV_{\mathsf{x}})} = \dc{L(\asyncp_{\mathsf{x}})}$.
Since $\extsw$ preserves offsets we know that $L(\asyncp_\ioff)$ has uniform offset 0
if and only if $L(\calV_\ioff)$ has uniform offset 0, which can be decided in exponential space by~\cref{cor:offset-zero}.
Finally, we check whether $L(\calV_\idip)$ or $L(\calV_\imis)$ contain a marked Dyck factor by~\cref{prop:VASScheck}.
This is correct, because a language $L$ contains a marked Dyck factor if and only if $\dc{L}$ contains a marked Dyck factor:
On the one hand, the ``only if'' direction is clear because $L \subseteq \dc{L}$.
On the other hand, if $u \# v \bar \# w \in \dc{L}$ is a marked Dyck word
then there exists a word $u' \# v' \bar \# w' \in L$ with $v \sdyck v'$, and therefore $v' \in \Dyck_x$.

\newcommand{\shuf}{\mathsf{shuf}}
\newcommand{\sNFA}{\mathsf{sNFA}}
\newcommand{\extswlr}{\sqsubseteq_{\markl,\markr}}
\newcommand{\extswl}{\sqsubseteq_{\markl}}
\newcommand{\extswr}{\sqsubseteq_{\markr}}

\section{Computing Downward Closures and the Proof of Theorem~\ref{thm:extended-downclosure}}
\label{sec:extended-downclosure}
\begin{figure}

\centering

\begin{tikzpicture}[yscale=-1]

\tikzstyle{xkcd} = [] %
\tikzstyle{blob}=[inner sep = 1pt, fill, circle]
\tikzstyle{undivided}=[draw=blue!70!black,fill=blue, fill opacity=0.3]
\tikzstyle{undividednode}=[inner sep = 1pt, fill=blue!70!black, circle, fill opacity = 1]

\colorlet{ourGray}{gray} %

\begin{scope}
\coordinate (a) at (0,0);
\coordinate (b) at (-3,5);
\coordinate (c) at (3,5);

\draw[ourGray,xkcd] (1.5,3.5) ++(-0.4,0.9)  ++(0.35,0) ++(0.1,-0.1) -- (1.55,5);

\draw[ourGray,xkcd] (0,4) ++(-0.4,0.9)  ++(0.35,0) ++(0.1,-0.1) -- (0.05,5);

\draw (a) -- (b) -- (c) -- cycle;

\coordinate (d) at (0,1.3);
\coordinate (e) at (0,2);
\node[blob] (f) at (0,3) {};
\node[blob, label={[yshift=-2em]:\footnotesize $\#$}] (m1) at (-1,5) {};
\node[blob, label={[yshift=-2em]:\footnotesize $\overline \#$}] (m2) at (1,5) {};

\draw[ourGray,xkcd] (0,4) -- ++(-0.33,-0.33);
\draw[ourGray,xkcd] (1.5,3.5) -- (0,2.4);

\draw[semithick,xkcd] (a.center) -- (d.center)  -- (e.center)  -- (f.center);
\draw[semithick,xkcd] (f.center) -- (m1.center);
\draw[semithick,xkcd] (f.center) -- (m2.center);

\filldraw[undivided] (d.center) -- ++(-0.7,0.9)  -- ++(0.6,0) -- (e.center) -- ++(0.15,0.2) -- ++(0.6,0) -- cycle;

\node[undividednode] at (a) {};
\node[undividednode] at (d) {};
\node[undividednode] at (e) {};

\filldraw[undivided] (1.5,3.5) node[undividednode] {} -- ++(-0.4,0.9)  -- ++(0.35,0) -- ++(0.1,-0.1) node[undividednode] {} -- ++(0.1,0.1) -- ++(0.35,0) -- cycle;

\filldraw[undivided] (0,4) node[undividednode] {} -- ++(-0.4,0.9)  -- ++(0.35,0) -- ++(0.1,-0.1) node[undividednode] {} -- ++(0.1,0.1) -- ++(0.35,0) -- cycle;

\node at (3,2.3) {$\implies$};
\end{scope}

\tikzstyle{divided}=[draw=red!70!black,fill=red, fill opacity=0.3]
\tikzstyle{dividednode}=[inner sep = 1pt, fill=red!70!black, circle, fill opacity = 1]

\begin{scope}[xshift=15em,yshift=-2em]

\coordinate (a) at (0,1);
\coordinate (a1) at (0,1.25);
\coordinate (a2) at (0,1.5);
\coordinate (a3) at (0,1.75);
\coordinate (b) at (0,2);
\coordinate (c) at (-1,5);
\coordinate (d) at (1,5);

\draw[black!60,arrows={- ____ Straight Barb[left,reversed] ____ Straight Barb[left,reversed] ____ Straight Barb[left,reversed] ____ Straight Barb[left,reversed]}] (a.center) -- (b.center);
\draw[black!60,arrows={- ____ Straight Barb[left] ____ Straight Barb[left] ____ Straight Barb[left] ____ Straight Barb[left]}] (b.center) -- (a.center);

\draw[semithick] (a.center) -- (b.center);

\draw[black!60,arrows={- ____ Straight Barb[left,reversed] ____ Straight Barb[left,reversed] ____ Straight Barb[left,reversed] ____ Straight Barb[left,reversed] ____ Straight Barb[left,reversed] ____ Straight Barb[left,reversed] ____ Straight Barb[left,reversed] ____ Straight Barb[left,reversed] ____ Straight Barb[left,reversed] ____ Straight Barb[left,reversed] ____ Straight Barb[left,reversed] ____ Straight Barb[left,reversed] ____ Straight Barb[left,reversed]}] (b.center) -- (c.center);
\draw[black!60,arrows={- ____ Straight Barb[left] ____ Straight Barb[left] ____ Straight Barb[left] ____ Straight Barb[left] ____ Straight Barb[left] ____ Straight Barb[left] ____ Straight Barb[left] ____ Straight Barb[left] ____ Straight Barb[left] ____ Straight Barb[left] ____ Straight Barb[left] ____ Straight Barb[left] ____ Straight Barb[left]}] (c.center) -- (b.center);

\draw[semithick] (b.center) -- (c.center);

\draw[black!60,arrows={- ____ Straight Barb[left,reversed] ____ Straight Barb[left,reversed] ____ Straight Barb[left,reversed] ____ Straight Barb[left,reversed] ____ Straight Barb[left,reversed] ____ Straight Barb[left,reversed] ____ Straight Barb[left,reversed] ____ Straight Barb[left,reversed] ____ Straight Barb[left,reversed] ____ Straight Barb[left,reversed] ____ Straight Barb[left,reversed] ____ Straight Barb[left,reversed] ____ Straight Barb[left,reversed]}] (b.center) -- (d.center);
\draw[black!60,arrows={- ____ Straight Barb[left] ____ Straight Barb[left] ____ Straight Barb[left] ____ Straight Barb[left] ____ Straight Barb[left] ____ Straight Barb[left] ____ Straight Barb[left] ____ Straight Barb[left] ____ Straight Barb[left] ____ Straight Barb[left] ____ Straight Barb[left] ____ Straight Barb[left] ____ Straight Barb[left]}] (d.center) -- (b.center);

\draw[semithick] (b.center) -- (d.center);

\node[blob] at (a) {};
\node[blob] at (b) {};
\node[blob, label={[yshift=-2em]:\footnotesize $\#$}] at (c) {};
\node[blob, label={[yshift=-2em]:\footnotesize $\overline \#$}] at (d) {};

\filldraw[divided] (-0.33,3) node[dividednode] {} -- ++(-0.2,0.1) -- (-0.86,4.1) -- (-0.66,4) node[dividednode] {}%
-- (-0.52,4.2) -- (-0.2,3.2) -- cycle;

\node at (1.5,3) {$\implies$};

\end{scope}

\begin{scope}[xshift=25em,yshift=-5em]

\coordinate (a) at (0,3);
\coordinate (b) at (0,4);
\coordinate (c) at (-0.5,4.8);
\coordinate (d) at (0.5,4.8);

\draw[black!60,arrows={- ____ Straight Barb[left,reversed] ____ Straight Barb[left,reversed] ____ Straight Barb[left,reversed] ____ Straight Barb[left,reversed]}] (a.center) -- (b.center);
\draw[black!60,arrows={- ____ Straight Barb[left] ____ Straight Barb[left] ____ Straight Barb[left] ____ Straight Barb[left]}] (b.center) -- (a.center);

\draw[semithick] (a.center) -- (b.center);

\draw[black!60,arrows={- ____ Straight Barb[left,reversed] ____ Straight Barb[left,reversed] ____ Straight Barb[left,reversed] ____ Straight Barb[left,reversed]}] (b.center) -- (c.center);
\draw[black!60,arrows={- ____ Straight Barb[left] ____ Straight Barb[left] ____ Straight Barb[left] ____ Straight Barb[left]}] (c.center) -- (b.center);

\draw[semithick] (b.center) -- (c.center);

\draw[black!60,arrows={- ____ Straight Barb[left,reversed] ____ Straight Barb[left,reversed] ____ Straight Barb[left,reversed] ____ Straight Barb[left,reversed]}] (b.center) -- (d.center);
\draw[black!60,arrows={- ____ Straight Barb[left] ____ Straight Barb[left] ____ Straight Barb[left] ____ Straight Barb[left]}] (d.center) -- (b.center);

\draw[semithick] (b.center)  -- (d.center);

\node[blob] at (a) {};
\node[blob] at (b) {};
\node[blob] at (c) {};
\node[blob] at (d) {};
\node[blob, label={[yshift=-2em]:\footnotesize $\#$}] at (c) {};
\node[blob, label={[yshift=-2em]:\footnotesize $\overline \#$}] at (d) {};

\end{scope}

\end{tikzpicture}

\caption{Abstracting undivided pumps (in blue) and divided pumps (in red).}
\label{fig:abstracting-pumps}

\end{figure}
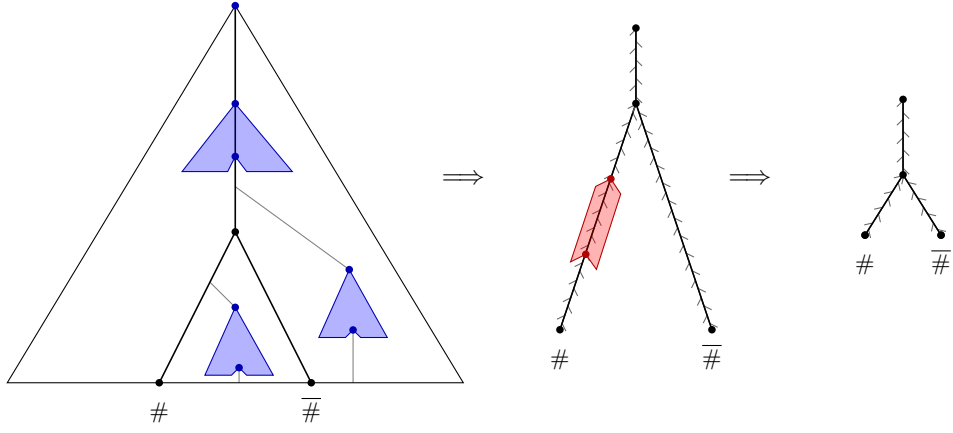

It remains to show how the automaton $\calA$ for the downward closure in \cref{thm:extended-downclosure} is constructed. 
As a warm-up, let us illustrate how to construct from a context-free grammar~$\cG$ an NFA $\calA$
for the subword closure of $L(\cG)$, cf.\ \cite{Courcelle1991}.
Here, \emph{subword closure} refers to the downward closure with respect to the subword ordering $\preccurlyeq$.
Notice that this is a special case of \cref{thm:extended-downclosure}, namely where $L(\cG) \subseteq \Gamma^*$.
The basic idea is that every derivation tree of $\cG$ can be obtained by inserting pumps
into a \emph{skeleton}---a derivation tree without vertical repetitions of nonterminals.
The skeleton can be guessed by an (exponentially large) automaton $\calA$
and the effects of pumps are abstracted as follows:
For each nonterminal $A$ one can compute the subalphabets $\Gamma_{A,\markl}, \Gamma_{A,\markr} \subseteq \Gamma$
containing all letters occurring on the left side $u$ and the right side $v$ of a pump $A \derivs uAv$.
Instead of inserting pumps, the automaton for the subword closure inserts arbitrary words
$u' \in \Gamma_{A,\markl}^*$ and $v' \in \Gamma_{A,\markr}^*$ on the left or right side of $A$, respectively.
This is sufficient because for any word $w$, the subword closure of the language $w^*$ contains
exactly those words that consist only of letters present in $w$.

The difficulty in proving \cref{thm:extended-downclosure} is to preserve, not only the subword closure, 
but also the downward closure with respect to the syntactic order $\sdyck$ on the letters $\{x,\bar x\}$.
To do so, we need to distinguish between two types of pumps.
Consider the derivation tree for a marked word $z = u \# v \bar \# w$, depicted left in \cref{fig:abstracting-pumps}.
Observe that removing one of the three pumps in blue does not change the offset of $\inside{z} = v$ or $\outside{z} = uw$,
because $\cG$ is tame-pumping. Such pumps, which are completely contained in $\inside{z}$ or $\outside{z}$, will be called \emph{undivided}.
However, one needs to be more careful when removing \emph{divided} pumps, e.g., the red pump in the second derivation tree of \cref{fig:abstracting-pumps}.
Removing the red pump decreases the offset of $\outside{z}$, while increasing
the offset of $\inside{z}$ by the same amount.

We will proceed in two transformations, which preserve the downward closure w.r.t.\ $\extsw$.
In the first transformation we obtain a grammar whose derivation trees do not contain any undivided pumps.
In the second step we additionally eliminate divided pumps.

\subsection{Abstracting undivided pumps}
\label{ssec:absOrdPumps}

Recall that $\Decomps = \Theta^* \{\#,\varepsilon\} \Theta^* \{\bar \#,\varepsilon\} \Theta^*$
where $\Theta = \Gamma \cup \{x,\bar x\}$.
In the following we only consider \emph{uniformly marked} grammars $\cG$,
that is, we assume $L(\cG)$ is contained in one of the subsets $\Theta^* \# \Theta^* \bar \# \Theta^*$, $\Theta^* \# \Theta^*$, $\Theta^* \bar \# \Theta^*$, or $\Theta^*$.
This is not a restriction since we can split the given grammar $\cG$ into four individual grammars,
covering the four types of marked words, and treat them separately.
This allows us to partition the set of nonterminals $N$ into $N_{\# \bar \#} \cup N_\# \cup N_{\bar \#} \cup N_0$
where $N_{\# \bar \#}$-nonterminals only produce marked words in $\Theta^* \# \Theta^* \bar \# \Theta^*$,
$N_\#$-nonterminals only produce marked words in $\Theta^* \# \Theta^*$, etc.
A pump $A \derivs uAv$ is \emph{undivided} if $A \in N_{\# \bar \#} \cup N_0$, and \emph{divided} otherwise.
Our first goal will be to eliminate undivided pumps.
A derivation tree without undivided pumps may still contain exponentially large subtrees below $N_0$-nonterminals.
Such subtrees will also be ``flattened'' in this step, see the first transformation step in \cref{fig:abstracting-pumps}.

\begin{definition}
	\label{defn:UBFtree}
	A context-free grammar $\cG = (N,\Theta \cup \{\#,\bar \#\},P,S)$ is \emph{almost-pumpfree} iff
  \begin{description}
    \item[(C1)] $\cG$ does not have undivided pumps, and
    \item[(C2)]\label{it:specialPathC2} for all productions $A \to \alpha$ with $A \in N_0$ either $\alpha = a \in \Theta$
    or $\alpha = (\Gamma')^*$ for some $\Gamma' \subseteq \Gamma$.
  \end{description} 
\end{definition}

We will now explain how to turn any uniformly marked $\CFG$ into an almost-pumpfree one.
The resulting (extended) grammar will be exponentially large but can be represented succinctly.
Recall that a \emph{succinct ECFG} ($\sECFG$) is an extended context-free grammar $\cG$
whose nonterminals are polynomially long strings
and whose productions are given by finite-state transducers.
For example, one of the transducers accepts the finite relation of all triples $(A,B,C)$ such that there exists a production $A \to BC$.
Productions either adhere to Chomsky normal form or have the form $A \to B$.
The latter enables us to simulate $\PSPACE$-computations in the grammar without side effects,
see \cref{ECFG-PSPACE} below.

\begin{restatable}{proposition}{converttosecfg}\label{lem:convertTosECFG}
	Given a uniformly marked tame-pumping $\CFG$ $\cG$,
	one can compute in polynomial space a tame-pumping almost-pumpfree $\sECFG$ $\cG'$ such that $\dc{L(\cG)}=\dc{L(\cG')}$
and $|\cG'|$ is polynomially bounded in $|\cG|$.
\end{restatable}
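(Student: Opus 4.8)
The plan is to pass from $\cG$ to $\cG'$ in two conceptual moves that together establish (C1) and (C2), while always tracking that the downward closure w.r.t.\ $\extsw$ is preserved and that the construction stays in polynomial space. First I would put $\cG$ into Chomsky normal form (treating any extended productions $A\to(\Gamma')^*$, should they arise later, as atomic terminals). The key structural fact to exploit is the one already highlighted in the text: every derivation tree is a skeleton (a pumpfree derivation tree, so of at most exponential height since $|N|$ bounds the number of nonterminal repetitions along a path) into which pumps have been spliced. An undivided pump sits entirely inside the $\inside{\cdot}$ part or entirely inside the $\outside{\cdot}$ part, and since $\cG$ is tame-pumping such a pump $A\derivs uAv$ has $\offset(u)\ge 0$ and $\offset(v)=-\offset(u)$; hence inserting or deleting it changes neither the offset nor (by a short argument using $\dip$-monotonicity and Lemma~\ref{lem:checkTamePumping}\eqref{lem:dipBoundedCFG}, which bounds dips) the $\sdyck$-class beyond what the abstraction below records, and it only adds letters that are $\subword$-recorded. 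So undivided pumps may be abstracted exactly as in the classical subword-closure construction of Courcelle, except that for the letters $x,\bar x$ we must additionally remember how much offset and dip such a pump can contribute on each side of $A$.

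Concretely, for each nonterminal $A\in N_{\#\bar\#}\cup N_0$ I would precompute, for the left side and the right side of its pumps, the subalphabet of $\Gamma$-letters that can occur there (as usual), together with a finite summary of the achievable $(\dip,\offset)$ behaviour of $\pi_{x,\bar x}$ of the pump-words; because $\cG$ is tame-pumping, the offset on the left is some value $k\ge0$, the offset on the right is $-k$, and these $k$ are bounded exponentially by Lemma~\ref{lem:checkTamePumping}\eqref{lem:dipBoundedCFG} (and~\ref{lem:checkTamePumping}(1) lets us detect zero vs.\ increasing pumps in $\NP$, so the relevant summaries are computable in $\PSPACE$). All of this data is an exponential-size table indexed by $A$, but each entry is polynomially describable; thus we can build an $\sECFG$ whose nonterminals are tuples ``(skeleton-nonterminal, accumulated left-offset/dip budget, accumulated right-offset/dip budget)'' of polynomial length, whose productions (i)~simulate a skeleton production of $\cG$, (ii)~on either side of a nonterminal $A\in N_{\#\bar\#}\cup N_0$ insert an extended production $(\Gamma')^*$ together with a finite gadget emitting the right amount of $x$'s / $\bar x$'s to realise the recorded offset, and (iii)~below an $N_0$-nonterminal, never descend into a subtree but instead directly emit a single terminal or an extended block $(\Gamma')^*$ — this is exactly (C2). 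The soundness direction (every word of $\dc{L(\cG')}$ lies in $\dc{L(\cG)}$) is routine once the gadgets are honest about offsets; the completeness direction (every word of $\dc{L(\cG)}$ is recovered) is where one argues that pumps can be pushed into the abstraction without loss, using that the subword closure of $w^*$ keeps exactly the letters of $w$, and that $\sdyck$ only cares about $(\dip,\offset)$ which the summaries preserve.

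The two transducer-style productions $A\to B$ are used precisely as the excerpt foreshadows (cf.\ the promised Lemma~\ref{ECFG-PSPACE}): to let the grammar run the $\PSPACE$ precomputation — checking usefulness, computing the $\Gamma$-subalphabets, checking tame-pumping witnesses — as side-effect-free chains of nonterminal renamings, so that the productions really are generated by polynomial-size transducers. One must also re-verify that $\cG'$ is itself tame-pumping: its only pumps are the divided ones inherited (in abstracted form) from $\cG$, and for those the identity $\offset(v)=-\offset(u)$ is preserved by construction because each inserted offset-emitting gadget is matched on the two sides; the newly introduced extended productions $A\to(\Gamma')^*$ with $\Gamma'\subseteq\Gamma$ generate only $\varepsilon$-offset words and hence do not create bad pumps. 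Finally $|\cG'|$ is polynomial because nonterminals are polynomial-length tuples and each transducer is of polynomial size.

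The main obstacle I expect is the completeness half of the downward-closure equation together with the bookkeeping of offsets at the markers: one has to show that abstracting a divided pump (which is only \emph{partly} eliminated in this step — full elimination is the job of the second transformation) does not lose any $\extsw$-minimal word, and in particular that the offset that a divided pump would have pushed \emph{across} the marker $\#$ or $\bar\#$ is exactly the bounded quantity that admissibility of marked words pins down (the point emphasised right before the proposition, and captured in the promised Lemma~\ref{lem:boundingOffsetInUPFtree}). Getting the interface between ``what the first transformation records'' and ``what the second transformation still has to do'' right — so that the two propositions compose cleanly — is the delicate part; the rest is careful but standard automata/grammar engineering.
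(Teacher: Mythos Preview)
Your high-level plan matches the paper: abstract undivided pumps by recording $\Gamma$-subalphabets together with some summary of the $(x,\bar x)$-behaviour, flatten $N_0$-subtrees to single leaves for (C2), and run the required $\PSPACE$ checks inside the grammar via chains $A\to B$. The gap is in what you record.

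You assert that for a pump $A\derivs uAv$ the left offset $k=\offset(u)$ is bounded exponentially ``by Lemma~\ref{lem:checkTamePumping}\eqref{lem:dipBoundedCFG}''. That lemma bounds $\dip(w)$ for $w\in L(\calG,A)$, not offsets of pump-sides; an increasing pump $A\derivs uAv$ with $\offset(u)>0$ iterates to $A\derivs u^nAv^n$ with $\offset(u^n)=n\cdot\offset(u)$, so the set of achievable left offsets is infinite. Consequently your nonterminal tuples ``(skeleton-nonterminal, accumulated left-offset/dip budget, accumulated right-offset/dip budget)'' cannot have bounded length, and a ``gadget emitting the right amount of $x$'s/$\bar x$'s to realise the recorded offset'' has no bounded description.

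The paper never tracks offsets. For an undivided pump at $A\in N_{\#\bar\#}$ both sides $u,v$ land in $\outside{\cdot}$ and tame-pumping gives $\offset(u)+\offset(v)=0$, so an offset-zero abstraction on each side suffices: the pump is simulated as $\bar{x}^{d_\markl}x^{d_\markl}\Gamma_\markl^*\,BC\,\bar{x}^{d_\markr}x^{d_\markr}\Gamma_\markr^*$, where $d_\markl,d_\markr$ are guessed dip values (these \emph{are} exponentially bounded by Lemma~\ref{lem:checkTamePumping}\eqref{lem:dipBoundedCFG}) and $\Gamma_\markl,\Gamma_\markr\subseteq\Gamma$ are guessed subalphabets, the combination verified in $\PSPACE$ via Lemma~\ref{dips-for-spawns}. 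This leaves a subtlety you do not address: replacing an increasing pump by an offset-zero gadget drops the offset between its two sides, so dips occurring \emph{below} it in the original tree are no longer absorbed, and the $\subseteq$-direction of the closure equality would fail. The paper therefore detects increasing pumps (again via Lemma~\ref{dips-for-spawns}) and, once one has been seen on the current path, stops simulating dips for all descendant pumps (and, in the $N_0$-subtree abstraction, additionally inserts a compensating $x^D\cdots\bar{x}^D$ block with $D$ the global dip bound). If you replace your offset bookkeeping by this dips-plus-increasing-pump-flag scheme, the remainder of your outline goes through essentially as in the paper.
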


To prove \cref{lem:convertTosECFG}, we first need some auxiliary results,
which are mainly concerned with computing the minimal dips and letter occurrences
within undivided pumps of a grammar~$\cG$.
Recall that for the subword closure we computed for each nonterminal $A$ the subalphabets $\Gamma_{\markl,A}$ and $\Gamma_{\markr,A}$,
and inserted arbitrary words over $\Gamma_{\markl,A}$ and $\Gamma_{\markr,A}$ left and right to the nonterminal $A$.
For the refined order $\extsw$ we may only use a letter $a \in \Gamma$
after simulating the minimal dip which is required to produce the letter $a$.

For a word $w \in \Theta^*$ we define the set $\psi(w)$ of all pairs $(n,m) \in \N^2$
such that $n \ge \dip(w)$ and $m = n + \offset(w)$.
In other words, $\psi(w)$ is the reachability relation induced by $w$, interpreted as counter instructions.
Recall that \emph{Presburger arithmetic} is the first-order theory of $(\N,+,<,0,1)$.
As an auxiliary step, we will compute existential Presburger formulas capturing the relation
$\psi(u) \times \psi(v)$ for all pumps $A \derivs uAv$ of a nonterminal $A$.

In the following lemma, when we say that we can \emph{compute a formula for a relation $R \subseteq \N^k$ in polynomial space}, we mean that there is non-deterministic polynomial-space algorithm,
where each non-deterministic branch computes a polynomial-size formula for a relation $R_i$
such that if  $R_1,\ldots,R_n$ are the relations of all the branches,
then $R = \bigcup_{i = 1}^n R_i$.
Here we tacitly use the fact that $\NPSPACE = \PSPACE$ \cite{savitch1970relationships}.

\begin{lemma}
	\label{lem:PAformulaEffect}
	Given an offset-uniform $\CFG$ with $L(\cG) \subseteq \Theta^* \$ \Theta^*$,
	where $\$ \notin \Theta$,
	we can compute in polynomial space an existential Presburger formula for the relation
	\[
		\bigcup_{u \$ v \in L} \psi(u) \times \psi(v) \subseteq \N^4.
	\]
\end{lemma}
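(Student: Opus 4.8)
The plan is to reduce the computation of a Presburger formula for $\bigcup_{u\$v\in L}\psi(u)\times\psi(v)$ to Parikh-image reasoning on a suitably instrumented grammar, exploiting that $\cG$ is offset-uniform and that the separator $\$$ splits every derived word into a ``left'' part and a ``right'' part. First I would observe that, because $L(\cG)\subseteq\Theta^*\$\Theta^*$ and $\$$ occurs exactly once, we may normalize $\cG$ (in polynomial space, guessing the relevant nonterminals) so that there is a distinguished ``spine'' from the start symbol down to the unique production introducing $\$$; above that production every nonterminal is ``$\$$-producing'' and can be split into a left-nonterminal and a right-nonterminal, and below/off the spine the nonterminals are ordinary $\Theta$-nonterminals. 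This is the standard decomposition of a CFG with a marked letter into a linear ``transducer-like'' skeleton whose ``off-spine'' subderivations are ordinary CFLs; it can be carried out with only polynomial blow-up and only polynomially many nondeterministic choices, which is exactly what the ``compute a formula in polynomial space'' convention allows.

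Next, I would handle the two pieces separately. For each pair $u\$v\in L$, the word $u$ is a yield of the left half of the skeleton and $v$ of the right half. The key point is that $\psi(w)$ for $w\in\Theta^*$ depends only on $\dip(w)$ and $\offset(w)$: indeed $\psi(w)=\{(n,n+\offset(w)) : n\ge\dip(w)\}$. By offset-uniformity of $\cG$, $\offset(u)$ and $\offset(v)$ are fixed constants (computable in polynomial space from a short witness word, via Rackoff-style bounds as in \cref{thm:vass-boundedness}, or more simply since offsets of derived words of a CFG form an arithmetic-progression-like set whose single value can be read off a short derivation). So the only nontrivial quantities are $d_L:=\dip(u)$ and $d_R:=\dip(v)$. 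I would therefore build, from the skeleton, an $\exists$PA formula $\Phi(d_L,d_R)$ describing the set of achievable pairs $(d_L,d_R)$. For this I use the classical Parikh/Verma--Seidl--Schwentick theorem: the set of Parikh images of a CFL is effectively semilinear and an existential Presburger formula for it has polynomial size; and $\dip$ of a word is expressible via a $\max$ over prefixes, which in turn can be captured by introducing, at each ``$x$ or $\bar x$'' terminal position, an auxiliary counter tracking the running offset and recording that $d_L$ (resp.\ $d_R$) is at least the negative of that running offset at every prefix cut inside $u$ (resp.\ $v$). Concretely, I would instrument the grammar so that nonterminals carry, in addition to their name, the pair (entry running offset, exit running offset) of the factor they derive together with the minimal running offset inside it; these are $\Theta$-bounded integers by tame-pumping / offset-uniformity and Lemma~\ref{lem:dipBoundedCFG}, so this information fits in polynomially many bits, and the consistency of these annotations along productions is a local condition encodable by the transducers. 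The Parikh image of this instrumented grammar, projected onto the dip-coordinates, then yields $\Phi(d_L,d_R)$.

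Finally I would assemble the answer: the required relation is
\[
\bigl\{(n_1,m_1,n_2,m_2)\in\N^4 : \exists d_L,d_R\ \Phi(d_L,d_R)\wedge n_1\ge d_L\wedge m_1=n_1+\offset_L\wedge n_2\ge d_R\wedge m_2=n_2+\offset_R\bigr\},
\]
which is an existential Presburger formula of polynomial size (the constants $\offset_L,\offset_R$ are the fixed left/right offsets determined along the chosen nondeterministic branch). Taking the union over all polynomially-many nondeterministic branches of the normalization/annotation procedure gives the relation $\bigcup_{u\$v\in L}\psi(u)\times\psi(v)$ exactly, as required by the polynomial-space-computation convention. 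The main obstacle I anticipate is the bookkeeping in the instrumentation step: one must argue carefully that $\dip(u)$ and $\dip(v)$ can be captured by local annotations of polynomially-bounded size even though $u$ and $v$ themselves can be exponentially long — this is where tame-pumping and the doubly-bounded-dip consequence (\cref{lem:dipBoundedCFG}) are essential, since they guarantee the running offset inside any derived factor stays within a polynomial-bit range, so that ``min running offset so far'' is a finite-state quantity that the $\sECFG$ transducers can thread through derivations; and one must check that offset-uniformity really does pin down a single left-offset and right-offset per branch rather than a set, so that no extra quantifier alternation creeps in.
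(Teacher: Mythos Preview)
There is a genuine gap. Your instrumentation step annotates nonterminals with ``entry running offset, exit running offset, minimal running offset inside'' and claims these fit in polynomially many bits by appeal to tame-pumping and \cref{lem:dipBoundedCFG}. But the lemma only assumes $\cG$ is offset-uniform, not tame-pumping, and offset-uniformity does not bound dips or running offsets: the grammar for $\{\bar{x}^n x^n \$ : n\ge 0\}$ is offset-uniform yet has unbounded dip on the $u$-side. So the instrumented grammar you describe would in general have infinitely many nonterminals. A secondary issue is the claim that $\offset(u)$ and $\offset(v)$ are fixed constants: offset-uniformity of $L$ fixes only $\offset(u)+\offset(v)$, and the two summands may vary (e.g.\ $L=\{x\$\bar{x},\,\bar{x}\$x\}$), so your final formula with hard-coded $\offset_L,\offset_R$ is not justified.

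The paper avoids both problems by working with $\mindip(A)$, the \emph{minimum} dip achieved by any word in $L(\cG,A)$. This quantity is exponentially bounded for every CFG---no tame-pumping needed---simply because every productive nonterminal derives some word of at most exponential length (take a pump-free derivation). The proof then has two steps: (i)~compute the annotation $A\mapsto\mindip(A)$ by a saturation in $\PSPACE$, iterating $D(A)\gets\min\bigl(D(A),\,\max\{D(B),\,D(C)-\offset(B)\}\bigr)$ over productions $A\to BC$ from an exponential upper bound down to the fixpoint; (ii)~invoke a known result (Proposition~3.8 of Baumann et al., 2023) which, for an \emph{annotated} offset-uniform grammar, already constructs the desired Presburger formula in $\NP$, using precisely the Parikh-image machinery (Verma--Seidl--Schwentick) you point to. So your instinct to finish via Parikh images is right; what is missing is the reduction to the single, always-bounded per-nonterminal quantity $\mindip$ in place of an unbounded running-offset instrumentation.
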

\begin{proof}[Proof sketch]
The result of \cref{lem:PAformulaEffect} was already proved in \cite[Proposition~3.8]{BaumannGMTZ23},
under the additional assumption that the given context-free grammar $\calG$ for $L$ is \emph{annotated}
(they even show that in this case the formula can be computed in $\NP$).
We call $\calG$ annotated if for every nonterminal $A$ the minimal dip that can be achieved by a word in 
$L(\cG,A)$ is given as an input, denoted by $\mindip(A)$.
Hence, it remains to show how to compute the annotation of an offset-uniform grammar in $\PSPACE$,
which is possible using a simple saturation algorithm.
For each nonterminal $A$, the algorithm stores a number $D(A)$ satisfying $D(A) \ge \mindip(A)$.
Initially, $D(A)$ is set to an upper bound for $\mindip(A)$,
which by \cref{lem:checkTamePumping}~(2) can be chosen to be exponentially large in $|\calG|$.
In each round the function $D$ is updated as follows:
For each production $A \to BC$ we set $D(A)$ to the minimum of $D(A)$ and $\max\{ D(B), D(C) - \offset(B) \}$,
where $\offset(B)$ is the unique offset of $L(\cG,B)$.
Clearly, the algorithm can be implemented in polynomial space since the numbers are bounded exponentially.
Termination of the algorithm is guaranteed since the numbers $D(A)$ are non-increasing.
\end{proof}

With \cref{lem:PAformulaEffect} in hand, we can now prove the following lemma,
which allows us to check whether pumps with certain letter occurrences exist
for certain minimal dips.

\begin{restatable}{lemma}{dipsforspawns}\label{dips-for-spawns}
	Given a tame-pumping $\CFG$ $\cG$ such that $L(\cG) \subseteq \Decomps$,
	a nonterminal~$A$ in $\cG$, a letter $a \in \Gamma$ and two numbers $d_\markl,d_\markr \in \N$,
	we can decide in $\PSPACE$ if there exists a derivation $A \derivs uAv$
	such that $u$ contains the letter $a$ (or symmetrically, whether $v$ contains the letter $a$), $\dip(u) \leq d_\markl$, and $\dip(v) \leq d_\markr$.
	Furthermore, we can also decide in $\PSPACE$ whether a derivation with the above properties exists
	that also satisfies $\offset(u) > 0$.
\end{restatable}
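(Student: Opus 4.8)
The plan is to reduce the two decision questions to satisfiability of an existential Presburger sentence built from \cref{lem:PAformulaEffect}. First I would form, from $\cG$ and $A$, the \emph{pump-context grammar} $\cH$ over $\Theta\cup\{\$\}$ (with $\$\notin\Theta$ fresh) generating $\{u\$v\mid A\derivs[\cG] uAv\}$: this is the usual dualization of $\cG$ at $A$, with a fresh nonterminal $\bar B$ for each $B\in N$ (with $L(\cH,\bar B)=\{u\$v\mid B\derivs[\cG] uAv\}$), the base rule $\bar A\to\$$, and for every production $B\to\gamma_1\cdots\gamma_k$ and every $i$ with $\gamma_i\in N$ a rule $\bar B\to\gamma_1\cdots\gamma_{i-1}\,\bar\gamma_i\,\gamma_{i+1}\cdots\gamma_k$; its size is polynomial in $|\cG|$. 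To force $u$ to contain $a$ (respectively, $v$ to contain $a$, for the symmetric variant) I would take the product of $\cH$ with a fixed small DFA checking that some $a$ occurs before (respectively, after) the unique $\$$; this yields a polynomial-size grammar $\cH^a$, and after removing useless nonterminals I may assume all of its nonterminals are useful.

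Next I would verify that $\cH^a$ meets the hypotheses of \cref{lem:PAformulaEffect}. Tame-pumping of $\cG$ gives $\offset(u)\ge 0$ and $\offset(v)=-\offset(u)$ for every pump $A\derivs[\cG] uAv$, so $\offset(u\$v)=0$: hence $L(\cH^a)$ is offset-uniform (offset $0$) and is contained in $\Theta^*\$\Theta^*$. Every pump of $\cH^a$ projects to a pump of $\cG$, so $\cH^a$ is again tame-pumping, and by \cref{lem:dipBoundedCFG} all of its dips are (exponentially) bounded. The point requiring care is that \cref{lem:PAformulaEffect} is used via a per-nonterminal notion of offset: if a nonterminal of $\cH^a$ carries several offsets, I would decorate each nonterminal with its offset value — these values are bounded by tame-pumping, hence encodable as polynomially long strings — obtaining a succinctly represented grammar with the same language on which the same argument goes through. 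Applying \cref{lem:PAformulaEffect} then produces, in $\PSPACE$, an existential Presburger formula $\Phi(n_\markl,m_\markl,n_\markr,m_\markr)$ defining the relation $\bigcup_{u\$v\in L(\cH^a)}\psi(u)\times\psi(v)\subseteq\N^4$.

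Finally I would express both questions as queries on $\Phi$. For any word $w$ and any $d\in\N$ one has $(d,m)\in\psi(w)$ for some $m\in\N$ if and only if $\dip(w)\le d$ — and then necessarily $m=d+\offset(w)$, which is a natural number since $d\ge\dip(w)\ge-\offset(w)$. Hence a pump $A\derivs[\cG] uAv$ with $u$ containing $a$, $\dip(u)\le d_\markl$ and $\dip(v)\le d_\markr$ exists iff the sentence $\exists m_\markl\,\exists m_\markr\colon\Phi(d_\markl,m_\markl,d_\markr,m_\markr)$ is satisfiable, with $d_\markl,d_\markr$ inserted as binary constants; for the stronger variant that also demands $\offset(u)>0$, I add the conjunct $m_\markl>d_\markl$, using that $\offset(u)=m_\markl-d_\markl$ whenever $(d_\markl,m_\markl)\in\psi(u)$. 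Satisfiability of existential Presburger sentences is in $\NP$, and \cref{lem:PAformulaEffect} runs in $\NPSPACE=\PSPACE$ (each nondeterministic branch outputting one formula whose relations together cover the target), so the whole procedure is in $\PSPACE$. The main obstacle is precisely the middle step: massaging the pump-context grammar into exactly the form demanded by \cref{lem:PAformulaEffect} — in particular discharging the per-nonterminal offset requirement through a succinct decoration while keeping every object polynomial-sized; once that is in place, the outer reduction is routine.
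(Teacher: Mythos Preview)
Your proposal is correct and follows essentially the same route as the paper: build the pump grammar for $A$, intersect with a regular language forcing the letter $a$ on the appropriate side of $\$$, invoke \cref{lem:PAformulaEffect} (using tame-pumping to guarantee offset-uniformity with offset $0$), and then express the dip bounds and the optional $\offset(u)>0$ condition as an existential Presburger query. Your extra decoration step for per-nonterminal offsets is unnecessary---once useless nonterminals are removed from an offset-uniform grammar, every remaining nonterminal automatically has a unique offset (since any two derivations $B\derivs w_1$, $B\derivs w_2$ can be completed to words in $L$ with the same surrounding context)---but it does no harm.
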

\begin{proof}[Proof sketch]
We first construct the $\CFG$ $\cG_A$
for the language of pumps of the nonterminal $A$, meaning for
$L(\cG_A) = \{u\$v \mid A \derivs_\cG uAv \}$.
Then we intersect with the regular language $\Theta^*a\Theta^*\$\Theta^*$,
and apply \cref{lem:PAformulaEffect} to the resulting grammar.
This is possible, because tame-pumping implies that the grammar for the pumps
has a uniform offset of zero.
We can modify the resulting Presburger formula from \cref{lem:PAformulaEffect} to check for the required dips,
and modify it further to check for the positive offset for $u$.
Finally, we use the fact that testing satisfiability of an existential Presburger formula
is in $\NP$ \cite{BoroshTreybig76}.
\end{proof}

Now we are almost ready to prove \cref{lem:convertTosECFG}.
The last thing we need is for an $\sECFG$ to perform $\PSPACE$-computations on paths
in its derivation trees:

\begin{observation}\label{ECFG-PSPACE}
  An $\sECFG$ can simulate $\PSPACE$-computations on exponentially long paths in its
  derivation trees.
  This is because the nonterminals are polynomially long strings and can therefore act as
  polynomial space Turing tape configurations.
  Moreover, the transducers of the $\sECFG$ can easily be constructed to enforce the step-relation of a Turing machine.
  If we apply this enforcement to productions of the form $A \rightarrow B$, then the path that
  simulates the $\PSPACE$-computation will not even have any additional side paths until
  after the computation is complete.
  Thus, only the result of the computation will affect the derived word.
  
  Since grammars and transducers are non-deterministic
  (and $\NPSPACE = \PSPACE$), 
  we can even implement non-determinism and guessing within such computations.
\end{observation}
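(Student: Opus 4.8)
The plan is to carry out the classical simulation of a space-bounded Turing machine inside a grammar, exploiting two standard facts: a configuration of a polynomial-space machine is a string of polynomial length, and the one-step relation of such a machine is a length-preserving rational relation recognizable by a transducer of polynomial size. Fix a nondeterministic Turing machine $T$ that runs in space $p(|w|)$ on an input $w$, where $w$ is whatever data is available at the time the $\sECFG$ is constructed (in the applications, e.g.\ in \cref{lem:convertTosECFG}, $w$ is a grammar); this input is hard-wired into the simulation. Pad every configuration of $T$ on $w$ to a string of the fixed nonterminal length $M$ (which is unary and can be chosen $\ge p(|w|)$): such a string records the $p(|w|)$ tape cells, together with the control state attached to the cell currently under the head, and, if forced termination is desired, a binary step counter of $O(p(|w|))$ bits. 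These padded configuration strings are legitimate nonterminals of an $\sECFG$; write $A_c$ for the nonterminal encoding configuration $c$.

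Next I would construct a length-preserving transducer $\cT_{\mathsf{step}}$ recognizing the relation $\{(c,c')\mid c\vdash_T c'\}$: scanning $c$ and $c'$ synchronously, it checks that every length-three window of $c'$ agrees with the corresponding window of $c$ according to the transition table of $T$, the usual local-consistency construction, which yields a transducer of size polynomial in $|T|$ and in $M$. I then declare the chain ($A\to B$) productions of the $\sECFG$ to be exactly the pairs $A_c\to A_{c'}$ with $(c,c')\in L(\cT_{\mathsf{step}})$, together with a handful of \emph{entry} productions from designated ambient nonterminals into $A_{c_0}$ (the initial configuration) and \emph{exit} productions from the halting/accepting configuration nonterminals back to ambient nonterminals of the surrounding grammar; the remaining transducers of the $\sECFG$ implement the rest of the grammar as usual, since the $\sECFG$ formalism already permits productions of the form $A\to B$ and specifies productions by transducers. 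A partial derivation $A_{c_0}\Rightarrow A_{c_1}\Rightarrow\cdots$ then tracks a computation path of $T$; because each simulation production is a chain production contributing no terminal, the derivation tree built so far is a bare path with no side branches and empty yield, so the derived word is untouched until an exit production is applied. Non-determinism of $T$ is reflected by the fact that $\cT_{\mathsf{step}}$ may relate $c$ to several successors $c'$, hence several productions $A_c\to A_{c'}$ are applicable, and the grammar's inherent non-determinism resolves the choice; by Savitch's theorem $\NPSPACE=\PSPACE$, so this is no gain in power. A halting, non-looping computation of $T$ corresponds to a finite such path, of length at most the number of distinct configurations, which is $2^{O(p(|w|))}$; with the optional step counter every path has length at most $2^{O(p(|w|))}$ regardless of looping, matching the claim that these are exponentially long paths in the derivation tree.

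The argument is essentially bookkeeping; the one place that needs care is the transducer $\cT_{\mathsf{step}}$, namely verifying that once configurations are padded to the common length $M$ the one-step relation is genuinely length-preserving, and that the window-based local-consistency automaton correctly handles the control state riding on the head cell and the cell into which the head moves (in particular at the boundary windows), while staying polynomial in size. Everything else, in particular the absence of side paths during the simulation (an immediate consequence of using only chain productions) and the implementation of guessing via grammar and transducer non-determinism, follows directly from the definitions of $\sECFG$ recalled above.
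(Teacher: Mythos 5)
Your proposal is correct and fills in precisely the details the paper leaves implicit in this observation: encoding padded TM configurations as the polynomial-length nonterminal strings, realizing the one-step relation via a sliding-window local-consistency transducer, using only chain ($A\to B$) productions during the simulation so the derivation tree remains a bare path with empty yield, and relying on the grammar's and transducers' inherent non-determinism (together with $\NPSPACE=\PSPACE$) to implement guessing. This is the same route the authors sketch, just written out.
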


We are ready to present a proof sketch of \cref{lem:convertTosECFG}.
The main idea is that $\cG'$ simulates derivation trees of $\cG$ by 
keeping track of at most polynomially many nodes, and abstracting away pumps via the
previous auxiliary results.

If a nonterminal $A$ of $\cG$ does not belong to $N_0$ (i.e., it produces a marker),
then $\cG'$ guesses a production $A \rightarrow BC$ to apply.
If $A$ furthermore belongs to $N_{\#\bar \#}$,
then $\cG'$ also guesses a pump to apply in the form of a $4$-tuple consisting of two
dip values $d_\markl,d_\markr \in \N$ and two alphabets $\Gamma_\markl,\Gamma_\markr \subseteq \Gamma$.
Guessing and storing the dip values is possible in $\PSPACE$,
since they are exponentially bounded by \cref{lem:checkTamePumping}~(2).
For each $a \in \Gamma_\markl$, \cref{dips-for-spawns} is used on input
$A,a,d_\markl,d_\markr$ to check in $\PSPACE$ whether a matching pump exists.
A symmetric version of \cref{dips-for-spawns} is also used for each $a \in \Gamma_\markr$.
Then, if all checks succeed, $\cG'$ simulates the pump as
$A \rightarrow \bar{x}^{d_\markl}x^{d_\markl}\Gamma_\markl^*
BC \bar{x}^{d_\markr}x^{d_\markr}\Gamma_\markr^*$.
This simulation clearly preserves minimal dips and handler names,
whereas by tame-pumping the combined offset of a pump is zero anyway,
and therefore need not be computed.

If a nonterminal $A$ belongs to $N_0$, then $\cG'$ abstracts away
its entire subtree.
To this end it generates a pumpfree subtree on-the-fly using depth-first search,
which is possible in $\PSPACE$ since without pumps the tree has polynomial height.
During this process pumps are simulated using the same strategy as before.

We also need to ensure that nonterminals of $\cG'$ in $N_0$ only have productions that
allow for a single leaf node below them. 
To this end $\cG'$ only ever derives letters and alphabets $\Gamma'^*$ one at a time.
Consider the up to two \emph{main paths} in a derivation tree of $\cG'$,
by which we mean the paths leading from the root to a marker.
Whenever $\cG'$ simulates a pump as $A \rightarrow u'Av'$ in the above process,
it extends the main path by $|uv|$ and in each step only derives a single nonterminal from $N_0$
to the left or right.
When $\cG'$ abstracts an entire subtree of a nonterminal in $N_0$,
then this subtree is also produced to the left or right of the main path, without leaving said path.

Additionally, whenever $\cG'$ simulates a pump of some $A$, then $\cG'$ assumes that this pump is the combination of all pumps that occur in the original derivation tree for that instance of $A$.
Thus, below such a pump, it remembers in polynomial space, that $A$ is not allowed to occur anymore.
Finally, whenever $\cG'$ checks by \cref{dips-for-spawns} that a pump exists with
$\offset(u) > 0$, then this is a so-called increasing pump, and it can be repeated to achieve
an infix with arbitrary high offset.
Thus, dip values below this pump cannot make up for this offset and therefore will no
longer be simulated.

\subsection{Abstracting divided pumps}
\label{sec:absSplitPumps}
We have now removed all the undivided pumps and are left with derivation trees as in the middle picture of \cref{fig:abstracting-pumps}.
In this subsection, we will show the following:
\begin{restatable}{lemma}{sECFGtoDCAut}
	\label{lem:sECFGtoDCAut}
	Given a tame-pumping almost-pumpfree $\sECFG$ $\cG$ with $L(\cG) \subseteq \Decomps$, one can construct in polynomial space a $\dsNFA$ $\calB$ such that $\dc{L(\calB)}=\dc{L(\cG)}$
and $|\calB|$ is polynomially bounded in $|\cG|$.
\end{restatable}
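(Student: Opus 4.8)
We have a tame-pumping almost-pumpfree $\sECFG$ $\cG$, which means (by \cref{defn:UBFtree}) its derivation trees have at most two \emph{main paths} leading from the root to the markers $\#$ and $\bar\#$, and below any $N_0$-nonterminal the tree is essentially a single production of the form $A \to a$ or $A \to (\Gamma')^*$. All remaining pumps are \emph{divided}: they straddle a main path, so their left leg contributes to $\outside{z}$ and their right leg to $\inside{z}$ (or vice versa). The idea is to guess the two main paths with a $\dsNFA$, abstract each divided pump by its ``effect data'' (minimal dips on each leg, the alphabets of handler names on each leg, and whether the left leg has positive offset, i.e.\ is an increasing pump), and emit, in place of the pump, the words $\bar{x}^{d}x^{d}\Gamma'^*$ as we did for undivided pumps. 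The key extra difficulty over the undivided case is that a divided pump changes the offsets of both $\inside{z}$ and $\outside{z}$ by equal and opposite amounts, so the automaton must track, at the two markers, the offsets accumulated so far, and these can a priori be unbounded. The crucial input is that \emph{admissibility} (the membership constraint $v \in \Dyck_x$ in the definition of $\AdmDecomps$) together with tame pumping forces the offsets at the borders $\#$ and $\bar\#$ to be bounded—this is exactly the content flagged as \cref{lem:boundingOffsetInUPFtree} in the surrounding text.

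\textbf{Step 1: bound the border offsets.} First I would prove that there is an exponential bound $B = 2^{p(|\cG|)}$ such that for every derivation tree of $\cG$ yielding an admissible marked word $z = u\#v\bar\#w$, the offset of the prefix of the full word up to $\#$ and up to $\bar\#$ can be assumed to lie in $[-B,B]$ (after possibly deleting some divided pumps, which only moves us down in $\extsw$). Concretely: if we walk along the main paths, each divided pump contributes an offset change to the ``border'' equal to $\offset$ of one of its legs; since $\cG$ is tame-pumping, increasing pumps have positive left-leg offset, and an admissible $z$ cannot have arbitrarily large offset on the $\inside$ part (it must be completable to a Dyck word, which caps its maximal prefix offset, and the offset at $\bar\#$ must be exactly matched). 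Repeated increasing pumps would break admissibility, so each increasing pump is used boundedly often along a main path; zero pumps don't change offsets. Combined with \cref{lem:checkTamePumping}~(2) bounding dips inside $N_0$-subtrees, this gives the exponential border-offset bound, exactly as the text promises via \cref{lem:boundingOffsetInUPFtree}.

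\textbf{Step 2: build the $\dsNFA$.} The $\dsNFA$ $\calB$ has control states that are polynomially long strings, so a doubly-exponential number of states; I use this budget to store (a) the current nonterminal on each main path (polynomially long since $\cG$ is an $\sECFG$), (b) the offsets so far relative to each marker, encoded in binary with $\log B = \mathrm{poly}(|\cG|)$ bits, and (c) a bounded bit of bookkeeping (which pumps have been declared ``used up''). Following \cref{ECFG-PSPACE}, $\calB$ simulates, along each main path, the production choices of $\cG$: at an $N_{\#\bar\#}$ node it picks a production and, as in \cref{lem:convertTosECFG}, guesses a divided-pump abstraction—dip values $d_\markl,d_\markr$ and alphabets $\Gamma_\markl,\Gamma_\markr$—checking existence of a matching pump via (a divided version of) \cref{dips-for-spawns}, all in $\PSPACE$ and hence implementable inside a $\dsNFA$. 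It emits $\bar{x}^{d_\markl} x^{d_\markl}\Gamma_\markl^*$ on the $\outside$ side and $\bar{x}^{d_\markr} x^{d_\markr}\Gamma_\markr^*$ on the $\inside$ side (or swapped, depending on orientation), updating the stored offsets accordingly; if the pump is increasing it marks that its subtree cannot reintroduce higher-offset material below. When a main path hits an $N_0$ node, $\calB$ emits the single letter or the starred alphabet and that path terminates; when it hits the marker, it writes $\#$ (resp.\ $\bar\#$). Whenever an offset would exceed $B$ in absolute value along a path that must still close to an admissible word, $\calB$ goes to a dead state, which is sound by Step 1.

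\textbf{Step 3: correctness.} I would prove $\dc{L(\calB)} = \dc{L(\cG)}$ by two inclusions. For $\subseteq$: every accepting run of $\calB$ corresponds to a choice of skeleton of $\cG$ with divided pumps reinserted enough times to realize the requested dips and letters (this is where \cref{dips-for-spawns} guarantees the pumps we declared actually exist), producing a derivation tree of $\cG$ whose yield $\extsw'$-dominates, on both the $\inside$ and $\outside$ components, the word emitted by $\calB$—here the $\bar{x}^d x^d\Gamma'^*$ gadget is $\extsw'$-below the true pump leg because the true leg has dip $\le d$, offset $0$ (zero pump) or is absorbed into the offset tracking (increasing pump), and contains at least the letters of $\Gamma'$. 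For $\supseteq$: given any derivation tree of $\cG$ yielding an admissible $z$, apply Step 1 to reduce to the bounded-border-offset case, then read off the main paths and, for each divided pump actually present, the abstraction data $(d_\markl,d_\markr,\Gamma_\markl,\Gamma_\markr)$; feeding these guesses to $\calB$ yields a word $\extsw z$. The size and $\PSPACE$-computability claims follow since all state components are polynomially long and every transition check is a $\PSPACE$ procedure composed from \cref{dips-for-spawns}, \cref{lem:PAformulaEffect}, and satisfiability of existential Presburger.

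\textbf{Main obstacle.} The hard part is Step 1: establishing the exponential bound on the offsets at the markers for admissible marked words, i.e.\ \cref{lem:boundingOffsetInUPFtree}. One must argue that the admissibility constraint ``$v$ extends to a Dyck word'' genuinely propagates to a bound on how much a sequence of divided pumps along the main paths can inflate the border offset, ruling out the situation flagged in the text (the language $\{x^n\#\bar x^n\}$) where naive tracking fails—here it is precisely the interaction of tame pumping with admissibility that rescues us, and making that interaction quantitative and robust under pump deletion is the delicate combinatorial core of the proof.
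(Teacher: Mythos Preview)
Your proposal has the right skeleton—bound the border offsets via admissibility and tame pumping, then let a $\dsNFA$ guess the tree structure and abstract the remaining (divided) pumps—but Step~2 has a genuine gap that would make the argument fail as written.

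The problem is your abstraction of a divided pump. You propose to reuse the gadget from the undivided case: guess dips $(d_\markl,d_\markr)$ and alphabets $(\Gamma_\markl,\Gamma_\markr)$, verify via a variant of \cref{dips-for-spawns}, and emit $\bar{x}^{d_\markl}x^{d_\markl}\Gamma_\markl^*$ on one side and $\bar{x}^{d_\markr}x^{d_\markr}\Gamma_\markr^*$ on the other. But for a divided pump $A\derivs uAv$, the leg $u$ lands in $\outside{z}$ and $v$ in $\inside{z}$, each with offset $\pm\offset(u)$; your offset-zero gadget gives the wrong offsets on both parts, so the output is not $\extsw$-comparable to the grammar word. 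More fundamentally, even after patching in an offset $\delta$, the coarse tuple loses the \emph{correlation} between the two legs. In the undivided case you could freely combine several pumps of $A$ because both legs land on the same side and their total offset is zero; here combining $k$ pumps each with left-offset $\delta$ yields left-offset $k\delta$, so you cannot independently realise every letter of $\Gamma_\markl$ and of $\Gamma_\markr$ at a prescribed offset. The paper's solution is different: for each $A\in N_\#\cup N_{\bar\#}$ and quadruple $\bx=(d_\markl,\delta_\markl,d_\markr,\delta_\markr)$ it builds a singly succinct two-tape \emph{transducer} $\calT_{A,\bx}$ (states = nonterminal of $\cG$ paired with current dip/offset) accepting exactly the pairs $(u^{\rev},v)$ with $A\derivs uAv$ and $e(u^{\rev})=(d_\markl,\delta_\markl)$, $e(v)=(d_\markr,\delta_\markr)$. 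A \emph{skeleton run} of $\calT_{A,\bx}$ is an exponentially long simple path whose left and right \emph{ideals} have the shape $\Gamma_0^*\{a_1,\varepsilon\}\Gamma_1^*\cdots\{a_n,\varepsilon\}\Gamma_n^*$. Because each transducer edge reads on only one tape, any word from the left ideal pairs with any word from the right ideal to form an actual pump—this is exactly the decoupling your coarse tuple cannot achieve. The $\dsNFA$ stores the skeleton run and emits from the left ideal (plus a string realising $(d_\markl,\delta_\markl)$) on the first visit of the node and from the right ideal on the second visit.

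Two smaller points. First, a $\dsNFA$ has \emph{exponentially} long state strings (the parameter $M$ is given in binary), not polynomially long; this room is essential, since $\calB$ must remember the whole exponential-size skeleton of $\cG$ together with one exponentially long skeleton run $\rho_t$ per node on the main paths—otherwise, after emitting left legs while descending to $\#$, it cannot reproduce the matching right legs on the way back up. Your ``current nonterminal plus bounded bookkeeping'' does not suffice. Second, in Step~1 the claim ``deleting some divided pumps only moves us down in $\extsw$'' is false: deleting a divided pump changes both $\offset(\inside{z})$ and $\offset(\outside{z})$. The paper proves the bound directly: split $\offset(u)=U_0+U_1$ into skeleton and pump contributions, bound $|U_0|$ by the number of nonterminals via (C2), and bound $U_1=-V_1$ using $\offset(v)\ge 0$ (admissibility) together with $\offset(\hat v_i)\le 0$ (tame pumping).
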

We give a proof sketch here, the details can be found in the full version of the paper.
Our starting point in the proof of \cref{lem:sECFGtoDCAut} is the following key observation: The offsets which occur during the production of any \emph{admissible} marked word $w$ which contains exactly one marker are bounded. This allows us to keep track of the offset precisely, which is necessary for us to solve the marked Dyck factor (MDF) problem. 

For a node $t$ in a derivation tree $T$, let $w(t)$ denote the word derived by the subtree rooted at $t$ and let $u(t)=\inside{w(t)}$, $v(t)=\outside{w(t)}$. 
\begin{restatable}{lemma}{boundingOffsetInUPFtree}
	\label{lem:boundingOffsetInUPFtree}
	There exists a polynomial $p$ such that for any uniformly marked,
	tame-pumping, almost-pumpfree $\sECFG$ $\cG$ the following holds.  Let $T$ be a
	derivation tree of $\cG$ which produces an admissible marked word containing
	$\#$ or $\bar{\#}$, but not both. Then we have $|\offset(u(t))|,
	|\offset(v(t))|\le 2^{p(|\cG|)}$. 
\end{restatable}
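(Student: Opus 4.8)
The plan is to reduce first to the case $L(\cG)\subseteq\Theta^*\#\Theta^*$; the case with a single $\bar{\#}$ is symmetric, after reversing words and exchanging $x$ with $\bar x$. Assuming every nonterminal is useful, the partition of the nonterminals then collapses to $N=N_\#\cup N_0$. The bound is immediate at every node $t$ labelled by a nonterminal in $N_0$: by condition (C2) of \cref{defn:UBFtree} the subtree at $t$ has depth one, so $w(t)$ is a single terminal or a word over some $\Gamma'\subseteq\Gamma$, whence $|\offset(u(t))|=|\offset(w(t))|\le 1$ and $\offset(v(t))=0$. So the work is confined to nodes $t$ labelled by $A\in N_\#$, for which $w(t)=\alpha_t\,\#\,\beta_t$ with $u(t)=\inside{w(t)}=\beta_t$ and $v(t)=\outside{w(t)}=\alpha_t$, and we must bound $|\offset(\alpha_t)|$ and $|\offset(\beta_t)|$.

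Next I would analyse the path in $T$ from $t$ to the leaf labelled $\#$: all of its nodes lie in $N_\#$, each such node has its off-path children in $N_0$, and hence (by the $N_0$ analysis above) each step of the path contributes a factor of offset at most $1$ in absolute value, sitting either to the left of $\#$ or to its right. Viewing the subtree at $t$ as a skeleton into which pumps have been inserted, the skeleton's own $t$-to-$\#$ path is repetition-free, so it has length at most $|N_\#|\le 2^{p_1(|\cG|)}$ for a suitable polynomial $p_1$, since a succinct $\ECFG$ has only exponentially many nonterminals. Writing $\alpha_t^{\mathrm{sk}}\,\#\,\beta_t^{\mathrm{sk}}$ for the word produced by the skeleton at $t$, this gives $|\offset(\alpha_t^{\mathrm{sk}})|,|\offset(\beta_t^{\mathrm{sk}})|\le 2^{p_1(|\cG|)}$. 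Each inserted pump has the form $B_k\derivs P_k\,B_k\,Q_k$ with $B_k\in N_\#$; as the $\#$ is produced inside the inner copy of $B_k$, every $P_k$ lies entirely to the left of $\#$ and every $Q_k$ entirely to its right, and since offsets add under concatenation no matter how the pumps nest,
\[
\offset(\alpha_t)=\offset(\alpha_t^{\mathrm{sk}})+\sum_k\offset(P_k),\qquad
\offset(\beta_t)=\offset(\beta_t^{\mathrm{sk}})+\sum_k\offset(Q_k).
\]
By tame-pumping, $\offset(P_k)\ge 0$ and $\offset(Q_k)=-\offset(P_k)$ for every $k$.

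The last and, I expect, hardest ingredient is to use admissibility to cap $\sum_k\offset(P_k)$: neither property alone suffices, since repeating a divided pump with $\offset(P_k)>0$ would otherwise make offsets unbounded, whereas admissibility alone still permits words like $x^n\#$. Since $T$ produces an admissible word $z$ and $w(t)$ is a factor of $z$ containing its unique $\#$, writing $z$ as an infix of some $u\#v\bar\#w$ with $v\in\Dyck_x$ shows that the part of $z$ to the right of $\#$, hence its sub-factor $\beta_t$, is a prefix of the Dyck word $v$; in particular $\offset(\beta_t)\ge 0$. Combined with the two displayed equalities and $\offset(Q_k)=-\offset(P_k)\le 0$ this gives $\sum_k\offset(P_k)=\offset(\beta_t^{\mathrm{sk}})-\offset(\beta_t)\le 2^{p_1(|\cG|)}$, whence $0\le\offset(\beta_t)\le 2^{p_1(|\cG|)}$ and, using also $\offset(P_k)\ge 0$, $-2^{p_1(|\cG|)}\le\offset(\alpha_t)\le 2^{p_1(|\cG|)+1}$. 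Taking $p=p_1+1$ finishes the proof. The bookkeeping to watch is that every $P_k$ really sits strictly to the left of the marker (so that the cancellation $\offset(Q_k)=-\offset(P_k)$ lands on the side where admissibility applies) and that, although the $t$-to-$\#$ path in $T$ itself may be exponentially long because of repeated divided pumps, its skeleton is short enough to contribute only an exponentially bounded offset.
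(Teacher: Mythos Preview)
Your proof is correct and follows essentially the same argument as the paper's. Both decompose the subtree at $t$ into a skeleton plus inserted (divided) pumps, bound the skeleton contributions $|\offset(\alpha_t^{\mathrm{sk}})|,|\offset(\beta_t^{\mathrm{sk}})|$ by the number of nonterminals using condition~(C2), and then use tame-pumping ($\offset(P_k)\ge 0$, $\offset(Q_k)=-\offset(P_k)$) together with admissibility ($\offset(\beta_t)\ge 0$) to bound the pump contribution $\sum_k\offset(P_k)$; your presentation is in fact slightly more explicit about why admissibility of the full word $z$ transfers to the factor $w(t)$ and about the $N_0$ case.
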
 

\begin{proof}
	\newcommand{\UsumPumps}{\sum_{i=1}^k \offset(\hat{u}_i)}
	\newcommand{\VsumPumps}{\sum_{i=1}^k \offset(\hat{v}_i)}
	We consider the case when the word derived is of the form $u\#v$, the case for $v \bar{\#} w$ being symmetric.
	Our derivation tree $T$ has a skeleton $T'$ into which pumps are inserted to form $T$. This means $u\#v=u'_k\hat{u}_k\cdots u'_1\hat{u}_1u'_0\#v'_0\hat{v}_1v'_1\cdots \hat{v}_kv'_k$, where $u'_k\cdots u'_0\#v'_0\cdots v'_k$ is the word generated by $T'$ and each pair $(\hat{u}_i,\hat{v}_i)$ is derived using a pump. Then we have
	\begin{align*}
		\offset(u)&=\overbrace{\offset(u'_k\cdots u'_0)\vphantom{\UsumPumps}}^{=:U_0} + \overbrace{\UsumPumps}^{=:U_1}, \\
		\offset(v)&=\underbrace{\offset(v'_0\cdots v'_k)\vphantom{\VsumPumps}}_{=:V_0} + \underbrace{\VsumPumps}_{=:V_1}.
	\end{align*}
	We claim that each of the numbers $|U_0|,|U_1|,|V_0|,|V_1|$ is
	bounded by $n(\cG)$, the number of nonterminals of $\cG$.
  This clearly implies the
	\lcnamecref{lem:boundingOffsetInUPFtree}: Since $\cG$ is a succinct grammar, it
	has at most exponentially many nonterminals in the size of its description.
	We begin with $U_0,V_0$. The tree $T'$ contains each nonterminal of $\cG$ at most
	once, and by property~(C2) in \cref{it:specialPathC2}, we know that the subtree
	under each nonterminal in $T'$ not containing $\#$ has offset $-1$, $0$, or
	$1$. Thus, $|U_0|,|V_0|\le n(\cG)$. 
	The bound on $|U_1|,|V_1|$ is due to admissibility of $u\#v$: It yields $V_0+V_1=\offset(v)\ge
	0$ and thus $V_1\ge -V_0$. Moreover, by tame-pumping, we know that
	$\offset(\hat{v}_i)\le 0$ for each $i\in[1,k]$, and thus $V_1\le 0$. Together, we
	obtain $V_1\in [-V_0,0]$. Finally, tame-pumping also implies
	$\offset(\hat{u}_i)=-\offset(\hat{v}_i)$ for each $i\in[1,k]$ and hence $U_1=-V_1$.
\end{proof}

\begin{remark}
	Note that the bound only holds under the condition of admissibility. An easy counterexample is the tame-pumping language $L=\{ x^n\#\bar{x}^n \mid n \in \mathbb{N} \}$.
\end{remark}

The $\dsNFA$ $\calB$ of \cref{lem:sECFGtoDCAut} can now be constructed in three steps as follows:

\subparagraph{Step I: Tracking counter effects.} We first observe that since $\cG$ is almost-pumpfree, its pumps $A\derivs uAv$ can be simulated by a transducer that traverses the derivation tree bottom-up. Thus, we can construct a \emph{singly} succinct finite-state transducer $\calT_A$ with size polynomial in $|\cG|$ that captures all pumps $A\derivs uAv$. To be precise, $\calT_A$ accepts exactly those pairs $(u,v)$ for which $A\derivs u^{\rev}Av$. The transducer $\calT_A$ has one state for each nonterminal of $\cG$. 

Since $\calB$ will need to preserve offset and dip, we need to expand $\calT_A$ to track them as well. Here, it is crucial that we only need to do this for $A\in N_{\#}\cup N_{\bar{\#}}$ and pumps $A\derivs uAv$ that are used to derive an admissible word. According to \cref{lem:boundingOffsetInUPFtree} tells us that in such a pump, the absolute values of offsets and dips of $u$ and $v$ are bounded by $2^{q(|\cG|)}$ for some polynomial $q$. Thus, we can modify $\calT_A$ so as to track the dip and offset of the two words it reads. Therefore, for each $A\in N_{\#}\cup N_{\bar{\#}}$ and each quadruple $\bx=(d_{\markl},\delta_{\markl},d_{\markr},\delta_{\markr})$ of numbers with absolute value at most $2^{q(\cG)}$, we can construct in $\PSPACE$ a transducer $\calT_{A,\bx}$ with
\begin{align*}
 \text{$(u,v)$ is accepted by $\calT_{A,\bx}$~~~iff~~~} &\text{$A\derivs u^{\rev}Av$ and $e(u^{\rev})=(d_{\markl},\delta_{\markl})$, and $e(v)=(d_{\markr},\delta_{\markr})$}. 
\end{align*}
Moreover, $\calT_{A,\bx}$ is singly succinct, polynomial-size, and can be computed in $\PSPACE$. Observe that by \cref{lem:boundingOffsetInUPFtree}, if a pump $A\derivs uAv$ is used in a derivation of an admissible word, then for some quadruple $\bx$, the pair $(u^{\rev},v)$ is accepted by $\calT_{A,\bx}$.

\subparagraph{Step II: Skeleton runs.} The automaton $\calB$ has to read words from left to right, rather than two factors in parallel as $\calT_{A}$ and $\calT_{A,\bx}$ do. To this end, it will guess a run of $\calT_{A,\bx}$ without state repetitions; such a run is called a \emph{skeleton run}. For a fixed skeleton run $\rho$, the set of words read in each component of $\calT_{A,\bx}$ is of the shape $\Gamma_0^*\{a_1,\varepsilon\}\Gamma_1^*\cdots \{a_k,\varepsilon\}\Gamma_k^*$, where each $a_i$ is read in a single step of $\rho$ and $\Gamma_i$ is the set of letters from $\Gamma$ seen in cycles in a state visited in $\rho$. Sets of this shape are called \emph{ideals}~\cite{GHKNS-til2020}. The ideal for the left (right) component is called the \emph{left} (\emph{right}) \emph{ideal} of the skeleton run. Note that since $\calT_{A,\bx}$ has exponentially many states, the skeleton run is at most exponentially long.

\subparagraph{Step III: Putting it together.} The $\dsNFA$ $\calB$ guesses and
verifies an exponential size skeleton $T$ of the $\sECFG$ $\cG$. Moreover, for
each node $t$ that is above $\#$ or $\bar{\#}$---but not both---it guesses a
quadruple $\bx=(d_{\markl},\delta_{\markl},d_{\markr},\delta_{\markr})$ with
$d_{\markl},d_{\markr}\in[0,2^{q(|\cG|)}]$,
$\delta_{\markl},\delta_{\markr}\in[-2^{q(\cG)},2^{q(\cG)}]$ and a skeleton run $\rho_t$ of the transducer $\calT_{A,\bx}$, where $A$ is $t$'s
label.
The automaton $\calB$ then traverses the
skeleton $T$ in-order; i.e.\ node, left subtree, right subtree, node; meaning
each inner node is visited exactly twice.
Whenever $\calB$
visits a node $t$ as above, it produces an arbitrary word from an ideal
 of $\rho_t$: For the first (resp.\ second) visit of $t$, it uses the left (resp.\ right) ideal of
$\rho_t$. Moreover, in addition to the word from the left ideal, $\calB$ outputs a string $w\in\{x,\bar{x}\}^*$ with $e(w)=(d_\markl,\delta_\markl)$, where $\bx=(d_{\markl},\delta_{\markl},d_{\markr},\delta_{\markr})$ is the quadruple guessed for $t$ (and similarly for the right ideal).  This way, it preserves offset and dip at the separators $\#$ and $\bar{\#}$.

Since the skeleton $T$ has exponentially many nodes (in $|\cG|$) and each skeleton run $\rho_t$ requires exponentially many bits, the total number of bits that $\calB$ has to keep in memory is also bounded by an exponential in $|\cG|$.

\label{beforebibliography}
\newoutputstream{pages}
\openoutputfile{main.pages.ctr}{pages}
\addtostream{pages}{\getpagerefnumber{beforebibliography}}
\closeoutputstream{pages}
\bibliography{bibliography}

\appendix

\section{Results from Section~\ref{sec:introduction}}

In \cref{sec:introduction}, we make the following observation when speaking about $\VASS$ language inclusion:
checking whether a given $\VASS$ language is included in the set of \emph{prefixes} of the one-letter Dyck language is already equivalent to $\VASS$ reachability.
In the following we show Ackermann completeness of the aforementioned inclusion problem.
The proof employs reductions to and from reachability for $\VASS$, therefore also proving
equivalence between these two problems.

\begin{proposition}
It is Ackermann-complete to decide whether a given $\VASS$ $\cV$
satisfies $L(\cV) \subseteq \mathsf{Pref}(\Dyck_{\{x\}})$,
where $\mathsf{Pref}(\Dyck_{\{x\}})$ is the set of prefixes of words in $\Dyck_{\{x\}}$.
\end{proposition}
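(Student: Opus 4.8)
The plan is to give polynomial-time reductions in both directions between this problem and the reachability problem for $\VASS$ --- given a $\VASS$ $\mathcal{W}$ with counters $J$, decide whether the configuration $(p_f,\vecz)$ is reachable from $(p_0,\vecz)$ --- which is Ackermann-complete~\cite{CzerwinskiPNAckermann2021,lerouxReachabilityProblemPetri2022}. Since the class of Ackermannian problems is closed under polynomial-time reductions and under complementation, this yields Ackermann-completeness of deciding $L(\cV)\subseteq\mathsf{Pref}(\Dyck_{\{x\}})$. The basic observation underlying everything is that $w\in\mathsf{Pref}(\Dyck_{\{x\}})$ holds iff every prefix of $w$ has non-negative offset, so that $L(\cV)\not\subseteq\mathsf{Pref}(\Dyck_{\{x\}})$ holds iff $\cV$ has an accepting run whose word $w$ has some prefix $v$ with $\offset(v)<0$; choosing a shortest such $v$, it ends in the letter $\bar x$, say $v=v'\bar x$, with $\offset(v'\bar x)=-1$ and $v'\in\Dyck_{\{x\}}$. (We may assume every transition of $\cV$ carries a label in $\{x,\bar x,\varepsilon\}$: a transition with any other label that lies on an accepting run immediately witnesses non-inclusion, and detecting whether such a transition exists is itself a reachability question.)

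For the hardness direction I would reduce reachability to \emph{non-}inclusion. Build a $\VASS$ $\cV$ over $\{x,\bar x\}$ with the same counters as $\mathcal{W}$: replace each transition $p\xrightarrow{\delta}p'$ of $\mathcal{W}$ by a short path from $p$ to $p'$ that first performs each increment occurring in $\delta$, emitting one letter $x$ per increment, and then performs each decrement occurring in $\delta$, emitting one letter $\bar x$ per decrement; finally add a fresh unique final state $q_\star$ together with a transition $p_f\xrightarrow{\bar x,\vecz}q_\star$. Doing increments before decrements guarantees that, along any run that stays inside $\mathcal{W}$'s states, the sum of the current counter values is always non-negative and equals the offset of the word read so far. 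Hence every word of $L(\cV)$ has the form $w\bar x$, where $w$ is emitted by a run of $\mathcal{W}$ from $(p_0,\vecz)$ to some $(p_f,\bu)$, and $\offset(w\bar x)=\bigl(\sum_{i\in J}\bu(i)\bigr)-1$. This value is negative precisely when $\bu=\vecz$, and otherwise all prefixes of $w\bar x$ still have non-negative offset; consequently $L(\cV)\not\subseteq\mathsf{Pref}(\Dyck_{\{x\}})$ iff $(p_f,\vecz)$ is reachable from $(p_0,\vecz)$ in $\mathcal{W}$.

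For membership I would reduce non-inclusion to reachability. From $\cV$ construct a $\VASS$ $\cV'$ whose states are two disjoint copies $(q,1),(q,2)$ of the states of $\cV$ plus a fresh state $q_\star$, and whose counters are those of $\cV$ together with one new counter $z$. In ``phase $1$'', $\cV'$ simulates $\cV$ while keeping $z$ equal to the offset of the word read so far: an $x$-step additionally increments $z$, a $\bar x$-step additionally decrements $z$ (so the $\VASS$ semantics blocks it when $z=0$, which is exactly the requirement that the prefix read so far be in $\mathsf{Pref}(\Dyck_{\{x\}})$), and $\varepsilon$-steps leave $z$ untouched; moreover, for every $\bar x$-transition $p\xrightarrow{\bar x,\delta}p'$ of $\cV$ there is a ``switch'' transition $(p,1)\xrightarrow{\delta}(p',2)$ which does \emph{not} touch $z$ and plays the role of the $\bar x$ that is meant to push the offset below $0$. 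In ``phase $2$'', $\cV'$ simulates $\cV$ on the remaining suffix, ignoring labels and never touching $z$; from $(q_f,2)$ there are self-loops decrementing each original counter, followed by a $\vecz$-transition to $q_\star$. The target configuration is $(q_\star,\vecz)$. Since the self-loops at $(q_f,2)$ absorb the original counters (matching the coverability acceptance of $L(\cV)$), and since $z$ is modified only in phase $1$, where at the moment of the switch it equals the offset of the prefix $v'$ read so far, demanding $z=0$ \emph{in the target configuration} exactly enforces $\offset(v')=0$, i.e.\ $v'\in\Dyck_{\{x\}}$. A routine verification using the characterization from the first paragraph then shows $(q_\star,\vecz)$ is reachable in $\cV'$ iff $L(\cV)\not\subseteq\mathsf{Pref}(\Dyck_{\{x\}})$.

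The step I expect to be the main obstacle is the membership reduction, because $\VASS$ cannot test counters for zero: the condition ``the guessed prefix $v'$ is a full Dyck word'' (rather than merely a prefix of one) cannot be checked during the run and must be deferred to the target configuration. One then has to justify both soundness --- the counter $z$ is never decremented below the true running offset, so $z=0$ at the switch really does mean $\offset(v')=0$ --- and completeness --- a shortest negative-offset prefix of an accepted word has offset exactly $0$ immediately before its last letter, so phase $1$ can indeed terminate there with $z=0$. By contrast the hardness direction is routine, once one notices that coupling increments with $x$ and decrements with $\bar x$ makes the offset of the emitted word mirror the total counter load, so the appended $\bar x$ forces a genuine dip below $0$ exactly when all counters have returned to $\vecz$.
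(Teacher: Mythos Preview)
Your proof is correct and, for the hardness direction, essentially identical to the paper's: both encode counter increments/decrements as $x$/$\bar{x}$ letters so that the running offset equals the current counter sum, and then append a single $\bar{x}$ after reaching $p_f$, so that the offset dips below zero iff the counters were all zero. Your remark about performing increments before decrements within each gadget is a nice explicit justification of the invariant $\offset=\sum_i \bu_i$ that the paper leaves implicit.

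For the upper bound the paper does not spell out a construction at all; it only says the reduction to $\VASS$ reachability is ``very simple'' and points to~\cite{LerouxSchmitz2019} and the more general inclusion result of~\cite{CzerwinskiH22}. Your two-phase construction with the extra offset counter $z$ is precisely the kind of direct reduction the paper alludes to, and it is correct: phase~1 keeps $z$ equal to the running offset (so the $\VASS$ semantics enforces $v'\in\mathsf{Pref}(\Dyck_{\{x\}})$), the switch freezes $z$, phase~2 and the cleanup loops handle the coverability acceptance of $\cV$, and requiring $z=0$ in the target configuration enforces $\offset(v')=0$, hence $v'\in\Dyck_{\{x\}}$ and $v'\bar{x}$ is a genuine dip. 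So your write-up is more self-contained than the paper's on this half.
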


\begin{proof}
For the lower bound we reduce from the reachability problem for $\VASS$,
which is Ackermann-complete~\cite{CzerwinskiPNAckermann2021,lerouxReachabilityProblemPetri2022}.
A similar statement and proof idea can be found in \cite[Lemma~12]{CzerwinskiH22}.
The reachability problem asks whether a given $\VASS$ $\cV = (Q,\{x,\bar x\},I,E,q_0,q_f)$ has a run $(q_0,\vecz) \xrightarrow{w} (q_f,\vecz)$
(the input letters are irrelevant here).
We transform $\cV$ into a new $\VASS$ $\cV'$ which makes the counter changes visible in the input word over $\{x,\bar x\}$:
Each increment on one of the counters is translated to reading input $x$, and decrements are translated to reading $\bar x$.
Observe that the coverability language satisfies $L(\cV') \subseteq \mathsf{Pref}(\Dyck_{\{x\}})$,
and that $L(\cV')$ intersects $\Dyck_{\{x\}}$ if and only if 
$\cV$ has a run $(q_0,\vecz) \xrightarrow{w} (q_f,\vecz)$.
Therefore, $\cV$ has \emph{no} run $(q_0,\vecz) \xrightarrow{w} (q_f,\vecz)$
if and only if $L(\cV') \bar x \subseteq \mathsf{Pref}(\Dyck_{\{x\}})$.
It is easy to construct a $\VASS$ $\cV''$ with $L(\cV'') = L(\cV') \bar x$, which completes the reduction.

The Ackermann upper bound can be shown using a very simple reduction to the reachability problem, which can be decided in Ackermann complexity~\cite{LerouxSchmitz2019}. An Ackermann upper bound for a much more general inclusion problem is shown in~\cite[Theorem 2]{CzerwinskiH22}.
\end{proof}

\section{Results from Section~\ref{sec:asyncp}}
\label{app:useful}

Let $\asyncp=(Q, \Sigma, \Gamma, \cG, \Delta, q_0, q_f, a_0)$ be a program and
$\cG = (N,\Sigma \cup \Gamma,P,S)$ be its grammar.
Recall that a nonterminal $B \in N$ is called \emph{useful}
if there exists a run $\rho$ of $\asyncp$ such that $\rho$ reaches $q_f$
and there exists a derivation in $\rho$, whose derivation tree contains $B$.
In \cref{sec:asyncp} we claimed that one can compute the set of useful nonterminals
of a given program $\asyncp$ in exponential space.
We show this in the following.

\begin{lemma}
\label{lem:computeProdNonterms}
	Given a program $\asyncp=(Q, \Sigma, \Gamma, \cG, \Delta, q_0, q_f, a_0)$, we can compute the set of its useful nonterminals in $\EXPSPACE$.
\end{lemma}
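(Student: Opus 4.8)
The plan is to reduce, for a single nonterminal $B$, the question ``is $B$ useful in $\asyncp$?'' to the language emptiness problem for a program $\asyncp_B$ of size polynomial in $|\asyncp|$, and then invoke \cref{thm:safetyAsync}. Since the grammar $\cG=(N,\Sigma\cup\Gamma,P,S)$ of $\asyncp$ has $|N|\le|\asyncp|$ nonterminals, we iterate this test over all $B\in N$, reusing workspace, and output exactly the set of $B$ for which $L(\asyncp_B)\neq\emptyset$; the total space stays in $\EXPSPACE$.

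First I would produce, from $\cG$ and a fixed $B$, a context-free grammar $\cG'$ that tracks the obligation ``a node labelled $B$ must occur below''. Take a disjoint primed copy $N'=\{A'\mid A\in N\}$ and let $\cG'$ have nonterminal set $N\cup N'$, all productions of $\cG$ (on unprimed nonterminals), and, for each production $A\to X_1\cdots X_k\in P$: the production $A'\to X_1\cdots X_{i-1}X_i'X_{i+1}\cdots X_k$ for every index $i$ with $X_i\in N$ (passing the obligation to the $i$-th child), and, if $A=B$, also the production $B'\to X_1\cdots X_k$ (discharging the obligation at this node). A routine structural induction then gives
\[
  L(\cG',A')=\{\,u\in(\Sigma\cup\Gamma)^*\mid \text{some derivation tree of }\cG\text{ with root }A\text{ and yield }u\text{ has a node labelled }B\,\},
\]
and in particular $L(\cG',A')\subseteq L(\cG,A)$ with identical projections $\pi_\Sigma$ and $\pi_\Gamma$ on corresponding words. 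Passing to Chomsky normal form beforehand keeps $|\cG'|$ linear in $|\cG|$ (even a crude bound gives a polynomial).

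Next I would assemble $\asyncp_B=(Q\times\{0,1\},\Sigma,\Gamma,\cG',\Delta_B,(q_0,0),(q_f,1),a_0)$, where the second state component is a flag recording whether a $B$-containing derivation has already been used. For every rule $q\xhookrightarrow{a,A}q'\in\Delta$ we put into $\Delta_B$ the rules $(q,0)\xhookrightarrow{a,A}(q',0)$, $(q,1)\xhookrightarrow{a,A}(q',1)$, and $(q,0)\xhookrightarrow{a,A'}(q',1)$. Every run of $\asyncp_B$ projects to a run of $\asyncp$ by forgetting the flag and replacing $A'$ by $A$; conversely, a witnessing accepting run of $\asyncp$ for the usefulness of $B$ lifts to $\asyncp_B$, using the primed rule on the single transition whose consumed word $u$ admits a $B$-containing derivation tree rooted at the relevant nonterminal (picking that same $u\in L(\cG',A')$, which produces the identical configuration update since $\pi_\Sigma(u)$ and $\pi_\Gamma(u)$ are unchanged), the unprimed copies elsewhere, and the flag becomes $1$ exactly at that step. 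Since the flag is monotone and reaches $1$ iff a primed rule fires, $L(\asyncp_B)\neq\emptyset$ if and only if $B$ is useful, and $|\asyncp_B|$ is polynomial in $|\asyncp|$.

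Finally, the algorithm enumerates $B\in N$, builds $\asyncp_B$ in polynomial time, decides $L(\asyncp_B)\neq\emptyset$ in $\EXPSPACE$ via \cref{thm:safetyAsync}, and collects the nonterminals passing the test; reusing space across the polynomially many iterations keeps the whole procedure in $\EXPSPACE$. The only delicate point is the correctness of the grammar-doubling step — establishing the displayed characterization of $L(\cG',A')$ and matching it against the precise definition of ``useful'' (where it suffices that \emph{some} derivation tree for the word consumed by a transition contains $B$) — but this is a standard induction on derivation trees rather than a genuine obstacle.
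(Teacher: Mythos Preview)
Your proposal is correct and follows essentially the same approach as the paper: reduce usefulness of a fixed nonterminal to language emptiness of an auxiliary program with two copies of the state set, where passage from copy~$0$ to copy~$1$ certifies that the nonterminal was used in some derivation tree along the run, and then invoke \cref{thm:safetyAsync}. The only difference is in bookkeeping: the paper has every production with $B$ on the left spawn a fresh handler~$s$ (so running $s$ flips the copy), whereas you thread the obligation through primed nonterminal copies and flip the copy directly on the transition that invokes~$A'$; both are standard and equivalent.
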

\begin{proof}
It suffices to show that we can check whether a particular nonterminal $X$ is useful. In order to do so, we reduce the problem to global state reachability in an associated program $\asyncp_X$. Program $\asyncp_X$ has two copies of $Q$. It has a copy of $\Delta$ per copy of $Q$ with the first copy of $q_0$ being its start state and the second copy of $q_f$ being its final state. It begins by simulating $\asyncp$ in the first copy. Whenever a production that uses $X$ is applied, $\asyncp_X$ additionally spawns a task $s$ with $s \not\in \Gamma$ such that when $s$ is run, it allows $\asyncp_X$ to move from the first to the second copy of $Q$. Program $\asyncp_X$ then continues to simulate $\asyncp$ in the second copy till it reaches its final state.

Let $\asyncp = (Q, \Sigma, \Gamma, \cG, \Delta, q_0, q_f, a_0)$ be a program,
$\cG = (N,\Sigma \cup \Gamma,P,S)$ its grammar,
and $X \in N$ a nonterminal.
To check if $X$ is useful, we formally construct the new program
$\asyncp_X=(Q', \Sigma, \Gamma', \cG', \Delta', q'_0, q'_f, a_0)$ as follows:
\begin{itemize}
	\item $Q'=Q \times \{ 0,1\}$,
	\item $\Gamma'=\Gamma \cup \{ s\}$ where $s \not\in \Gamma$,
	\item $q'_0=(q_0,0), q'_f=(q_f,1)$,
  \item $G' = (N \cup \{A_s\},\Sigma \cup \Gamma',P',S)$ where $P'$ is constructed from $P$
  by adding the production rule $A_s \rightarrow \varepsilon$ and changing
  every rule with $X$ on the left hand side from $X \rightarrow \alpha$ to $X \rightarrow \alpha s$, and
	\item $\Delta'$ contains the following: within each of the two copies of $Q$, the rules are inherited from $\Delta$, and for each $q \in Q$ we add the rule $(q,0) \xhookrightarrow{s,A_s} (q,1)$ (where $L(G',A_s) = \{\varepsilon\}$ by construction of $G'$).
\end{itemize}
The constructed program $\asyncp_X$ is such that $q'_f$ is reachable iff $X$ is useful in $\asyncp$. By \cref{thm:safetyAsync}, the lemma follows.
\end{proof}

\section{Results from Section~\ref{sec:algorithm}}
\label{sec:app-algo}

\subparagraph*{Tame-pumping grammars and programs}
Here we prove the following:

\checktamepumping*

\begin{proof}
\emph{Part 1.}
Given a context-free grammar $\calG=(N,\{x,\bar x\},P,S)$ in Chomsky normal form
Let $x_{\markl}, \bar x_{\markl}$ and $x_{\markr}, \bar x_{\markr}$ be copies of $x, \bar x$,
and for a word $u = a_1 \cdots a_n \in \{x,\bar x\}^*$
we define $u_{\markl} = (a_1)_{\markl} \cdots (a_n)_{\markl}$ and $u_{\markr} = (a_1)_\markr \cdots (a_n)_\markr$.
We will construct a grammar $\calG_{A_0}$ for the language $L(\calG_{A_0}) = \{ u_{\markl} v_{\markr} \mid A_0 \derivs u A_0 v \}$.
Recall that \emph{Presburger arithmetic} is the first-order theory of $(\N,+,<,0,1)$.
By~\cite{vermaComplexityEquationalHorn2005}, we can compute an existential Presburger formula
$\phi_X$ in polynomial time which defines the Parikh image of $L(\calG_{A_0})$.
From there we can express the desired properties in existential Presburger arithmetic.
Since the truth of existential Presburger sentences is known to be in $\NP$ \cite{BoroshTreybig76}, the lemma follows. 

The grammar $\calG_{A_0}=(N',\{x_{\markl},\bar x_{\markl},x_{\markr},\bar x_{\markr}\},P',A_0)$ is constructed as follows:
Its set of nonterminals $N'=\{ A, A_{\markl},
 A_{\markr} \mid A \in N\}$ 
 contains three copies of each nonterminal in $N$.
It contains the following productions:
\begin{itemize}
	\item for each rule $A \rightarrow BC$ in $P$, add $A_{\markl} \rightarrow B_{\markl} C_{\markl}$,
	$A_{\markr} \rightarrow B_{\markr} C_{\markr}$, $A \rightarrow B_{\markl} C$ and $A \rightarrow B C_{\markr}$,
	\item for each rule $A \rightarrow a$ where $a \in \{x,\bar x\}$
	add $A_{\markl} \rightarrow x_{\markl}$ and $A_{\markr} \rightarrow x_{\markr}$, and 
	\item add $A_0 \rightarrow \varepsilon$,
\end{itemize}
There exists a violation for the tame-pumping property if and only if
there exists a word $w \in L(\calG_{A_0})$ with $|w|_{x_{\markl}} - |w|_{\bar x_{\markl}} \neq |w|_{x_{\markr}} - |w|_{\bar x_{\markr}}$
or $|w|_{x_{\markl}} < |w|_{\bar x_{\markl}}$.
There exists a zero pump (increasing pump) if and only if there is a word $w \in L(\calG_{A_0})$
with $|w|_{x_{\markl}} - |w|_{\bar x_{\markl}} = 0$ ($|w|_{x_{\markl}} - |w|_{\bar x_{\markl}} > 0$).
We already mentioned above that we can construct in polynomial time an existential
Presburger formula for the Parikh image of $L(\calG_{A_0})$, i.e.\ the set
$\{w_{x_{\markl}},w_{\bar x_{\markl}},w_{x_{\markr}},w_{\bar x_{\markr}} \mid w \in L(\calG_{A_0}) \}$.
With this, these statements can be easily expressed in existential Presburger arithmetic.

\emph{Part 2.} We first bring $\calG$ into Chomsky normal form, which increases the size only polynomially.
First, we claim that $|\offset(w)| \le 2^{|\calG|}$ for all $w \in L(\calG,A)$ and all nonterminals $A$.
If $T$ is a derivation tree in $\calG$ then removing a pump from $T$ does not change the offset
of the yield word since $\calG$ is tame-pumping.
After repeatedly removing pumps until no more exist, we obtain a derivation tree whose height is bounded by the number of nonterminals.
Therefore, its size and the absolute value of its offset are bounded by $2^{|\calG|}$.

Recall that $\dip(w)$ is defined as $-\min\{\offset(w_1) \mid w_1$ is a prefix of $w \}$.
Now, let $T$ be a derivation tree for a word $w \in L(\calG)$
and let $w_1$ be some prefix of $w$.
Consider the path $P$ from the root to the leaf $\ell$, producing the last letter of $w_1$.
We remove pumps on the path $P$, which does not increase the offset
of the prefix $w_1$ produced between the leftmost leaf and the leaf $\ell$ because $\calG$ is tame-pumping.
After repeatedly removing pumps we can ensure that the length of $P$ is bounded by the number of nonterminals in $\calG$.
Suppose that $A_1, \dots, A_n$ are the nonterminals branching off to the left on $P$
together with the parent node of $\ell$.
We can write $w_1$ as $u_1 \cdots u_n$ where $u_i \in L(\calG,A_i)$ for all $1 \le i \le n$ and $n \le |\calG|$.
Therefore $|\offset(w_1)| \le \sum_{i=1}^n |\offset(u_i)| \le |\calG| \cdot \max_{i=1}^n |\offset(u_i)| \le |\calG| \cdot 2^{|\calG|} \le 2^{2 \cdot |\calG|}$.
With the polynomial $p$ defined as $p(y) := 2y$ the statement then follows.

\emph{Part 3.}
Suppose that there exists a pump $A \derivs uAv$ where $\offset(u) \neq \offset(v)$ or $\offset(u) < 0$.
Consider an accepting run $\rho$ of $\asyncp$ which uses the nonterminal $A$ in a particular derivation tree $T$.
By inserting the pump repeatedly into $T$ we can produce an offset or dip violation:
First, observe that inserting pumps, can only enlarge the multiset of handler names produced in $T$.
If $\offset(u) \neq \offset(v)$ then a single insertion of the pump changes the offset of the word read in $T$,
which implies $L(\asyncp) \not \subseteq D_X$.
If $\offset(u) < 0$ then, by inserting the pump sufficiently often into $T$,
we obtain a word with negative dip, which also implies $L(\asyncp) \not \subseteq D_X$.
\end{proof}

\subparagraph*{From programs to doubly succinct $\VASS$}
We would like to prove \cref{cor:ap-to-vass}, which states that for a
given asynchronous program we can construct in polynomial space a $\dsVASS$ such
that their respective languages are downward closure equivalent regarding the ordering
$\extsw$ on the set of admissible words $\AdmDecomps$.
To this end, we first need the following auxiliary result:

\begin{lemma}
	\label{lem:extsw-compatible}
	Let $u_1,u_2,v_1,v_2 \in \AdmDecomps$ with $u_1u_2 \in \AdmDecomps$.
	If $u_1 \extsw v_1$ and $u_2 \extsw v_2$ then $u_1u_2 \extsw v_1v_2$.
\end{lemma}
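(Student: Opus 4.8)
The plan is to reduce the statement to two facts: that the auxiliary relation $\extsw'$ on $\Theta^*$ is a precongruence for concatenation, and that $\inside{\cdot}$ and $\outside{\cdot}$ interact with concatenation of marked words in a controlled way.

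First I would observe that $v_1v_2$ has the right shape. Since $u_1\extsw v_1$ forces $u_1$ and $v_1$ to contain exactly the same markers, and likewise for $u_2$ and $v_2$, the pairs $(u_1,u_2)$ and $(v_1,v_2)$ have identical marker configurations; in particular $v_1v_2\in\Decomps$, and $u_1u_2$ and $v_1v_2$ factor into outside-left / inside / outside-right parts according to the same pattern. The key structural step is then: for any $z_1z_2\in\Decomps$, a case analysis on whether $\#$ (resp.\ $\bar\#$) lies in $z_1$ or in $z_2$---eight cases, collapsing to three according to whether the inside region of $z_1$ and/or that of $z_2$ survives between the markers of $z_1z_2$---shows that the pair $(\inside{z_1z_2},\outside{z_1z_2})$ equals $(\inside{z_1}\inside{z_2},\,\outside{z_1}\outside{z_2})$, or $(\inside{z_1},\,\outside{z_1}\inside{z_2})$, or $(\inside{z_2},\,\inside{z_1}\outside{z_2})$; each case is a direct unfolding of the definitions of $\inside{\cdot}$ and $\outside{\cdot}$. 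Applying this to $(u_1,u_2)$ and to $(v_1,v_2)$ lands us in the same one of these three cases.

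Next I would prove that $\extsw'$ is a precongruence: if $a\extsw' a'$ and $b\extsw' b'$ then $ab\extsw' a'b'$. As $\pi_\Gamma$ and $\pi_{x,\bar x}$ are morphisms, this reduces to $\subword$ being a precongruence (standard) and $\sdyck$ being one, which in turn follows from $\offset$ being additive and from $\dip(ab)=\max\{\dip(a),\dip(b)-\offset(a)\}$ being monotone in each argument in the needed direction. Combining this precongruence (noting $\extsw'$ is reflexive, to absorb the degenerate subcases where an $\outside{\cdot}$ is $\varepsilon$), the three-case description above, and the hypotheses $\inside{u_i}\extsw'\inside{v_i}$ and $\outside{u_i}\extsw'\outside{v_i}$---which are exactly what $u_i\extsw v_i$ provides---I obtain $\inside{u_1u_2}\extsw'\inside{v_1v_2}$ and $\outside{u_1u_2}\extsw'\outside{v_1v_2}$.

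The remaining point, which I expect to be the main obstacle, is to show $v_1v_2\in\AdmDecomps$; this is needed for $u_1u_2\extsw v_1v_2$ to be defined at all, and is the only place where the hypothesis $u_1u_2\in\AdmDecomps$ is used (note that $v_1,v_2\in\AdmDecomps$ alone does not suffice). I would first characterize admissibility of a marked word $z$ purely through $\offset$ and $\dip$ of $\inside{z}$, both of which ignore $\Gamma$: $z$ is always admissible if it carries no marker; admissible iff $\dip(\inside{z})=0$ if it carries only $\#$; iff $\dip(\inside{z})+\offset(\inside{z})=0$ if it carries only $\bar\#$ (the ``$\inside{z}$ is a suffix of a Dyck word'' condition---this equivalence is the fiddly bit; alternatively one can invoke the stated syntactic-order property of $\sdyck$ directly); and iff $\dip(\inside{z})=\offset(\inside{z})=0$ if it carries both markers. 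Now $\inside{u_1u_2}\extsw'\inside{v_1v_2}$ yields $\offset(\inside{v_1v_2})=\offset(\inside{u_1u_2})$ and $\dip(\inside{v_1v_2})\le\dip(\inside{u_1u_2})$, and since $\dip(w)\ge-\offset(w)$ for every $w$, each of these conditions is inherited by $v_1v_2$ from $u_1u_2$ (same markers, equal offset, no larger dip). Hence $v_1v_2\in\AdmDecomps$, and together with the two $\extsw'$-relations and the matching markers this gives $u_1u_2\extsw v_1v_2$.
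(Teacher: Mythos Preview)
Your approach is the same as the paper's: show that $\extsw'$ is a precongruence for concatenation, then do a case distinction on the marker placements in $u_1,u_2$ to express $\inside{u_1u_2}$ and $\outside{u_1u_2}$ as concatenations of the pieces $\inside{u_i},\outside{u_i}$ (and likewise for the $v_i$, in the same pattern). The paper only sketches one representative case and asserts ``$v_1v_2$ is also admissible'' without proof; you give the full three-form collapse and actually argue the admissibility of $v_1v_2$ via the $\offset$/$\dip$ characterization, which is a genuine addition over the paper's sketch.

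One small caveat: your claimed characterization of admissibility ``purely through $\offset$ and $\dip$ of $\inside{z}$'' tacitly assumes that $\inside{z}$ lies in $\{x,\bar x\}^*$ whenever $z$ carries a marker. Under a strict reading of the paper's definition of $\AdmDecomps$ (where the middle factor $v$ must literally lie in $\Dyck_x\subseteq\{x,\bar x\}^*$), admissibility also forbids $\Gamma$-letters inside, and that extra condition is not obviously inherited by $v_1v_2$ in every case (e.g.\ when $u_1$ carries $\#$ and $u_2$ carries no marker, one needs $\inside{v_2}=v_2\in\{x,\bar x\}^*$, which does not follow from $u_2\extsw' v_2$ alone). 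The paper's own proof glosses over exactly the same point, and in the only place the lemma is used (the proof of \cref{cor:ap-to-vass}) all words are over $\{x,\bar x,\#,\bar\#\}$, so the issue is moot in context; but strictly speaking your characterization should either add the $\Gamma$-free condition on $\inside{z}$ or note that the intended reading of admissibility is on $\pi_{x,\bar x}(\inside{z})$.
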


\begin{proof}
First, observe that $u_1 \extsw' v_1$ and $u_2 \extsw' v_2$ implies $u_1 u_2 \extsw' v_1 v_2$
because the same holds for the subword order $\subword$ and the syntactic order $\sdyck$.
Now assume that $u_1 \extsw v_1$ and $u_2 \extsw v_2$ and $u_1u_2 \in \AdmDecomps$.
Then $v_1v_2$ is also admissible and contains the same markers as $u_1u_2$.
For an exhaustive proof, we would need to do a (simple but tedious) case distinction,
depending on the markers in $u_1, u_2, v_1, v_2$,
to prove that $u_1u_2 \extsw v_1v_2$.
For example, consider the case where
$u_1$ (and therefore also $v_1$) only contains the marker $\#$, and $u_2$ (and therefore also $v_2$)
only contains the marker $\bar \#$.
Then $\inside{u_1 u_2} = \inside{u_1} \, \inside{u_2} \extsw' \inside{v_1} \, \inside{v_2} = \inside{v_1 v_2}$
and $\outside{u_1 u_2} = \outside{u_1} \, \outside{u_2} \extsw' \outside{v_1} \, \outside{v_2} = \outside{v_1 v_2}$.
The other cases are similar.
\end{proof}

Now we are ready to prove the following.

\aptovass*

\begin{proof}
  Let $\asyncp = (Q, X \cup \bar X, \Gamma, \cG, \Delta, q_0, q_f, \gamma_0)$ be the asynchronous program.
  For each nonterminal $A$ we apply \cref{thm:extended-downclosure} to the grammar $\calG$
  with starting nonterminal $A$
  and obtain a $\dsNFA$ $\calB_A$ with $\dc{L(\calG,A)} = \dc{L(\calB_A)}$.
  
  The set of counters of the $\VASS$ $\calV$ is the set of handler names $\Gamma$.
  The state set of $\calV$ is the disjoint union of an initial state $q_{\mathsf{in}}$,
  the state set $Q$ of $\asyncp$, and $Q \times P$
  where $P$ is the disjoint union of all state sets of the $\dsNFA$s $\calB_A$.
  The final state is $q_f$.
  For $a \in \Gamma$ we use $\be_a$ to denote the vector with a $1$ at the coordinate
  corresponding to $a$ and a $0$ everywhere else.
  The transitions of $\calV$ are given as follows:
  \begin{itemize}
    \item There is a transition $q_{\mathsf{in}} \xrightarrow{\be_{\gamma_0}} q_0$.
    \item Each transition $q \xhookrightarrow{a,A} q'$ in $\asyncp$
    is translated into transitions $q \xrightarrow{-\be_a} (q',p_0)$ and $(q',p_f) \xrightarrow{\vecz} q'$
    where $p_0$ (respectively $p_f$) is the unique initial (respectively final) state of $\calB_A$.
    \item Each transition $p \xrightarrow{a} p'$ where $a \in \Gamma$ in a $\dsNFA$ $\calB_A$ is translated
    into transitions $(q,p) \xrightarrow{\varepsilon, \be_a} (q,p')$ for all $q \in Q$.
    \item Each transition $p \xrightarrow{x} p'$ where $x \in X \cup \bar X$ in a $\dsNFA$ $\calB_A$ is translated
    into transitions $(q,p) \xrightarrow{x, \vecz} (q,p')$ for all $q \in Q$.
  \end{itemize}
  We claim that $\dc{L_\asyncp((q,\mmap))} = \dc{L_\calV((q,\mmap))}$
  for all configurations $(q,\mmap) \in Q \times \multiset{\Gamma}$,
  where $L_\asyncp(c)$ and $L_\calV(c)$ denote the language recognized by $\asyncp$ and $\calV$
  from start configuration $c$.
  We will proceed inductively on the word length.
  
  For the $\subseteq$-direction,
  consider an accepting run $(q,\mmap) \xrightarrow{u} (q',\mmap') \xrightarrow{v_1} \cdots \xrightarrow{v_\ell} (q_f,\mmap_f)$
  of $\asyncp$ on an admissible word $uv$ where $v = v_1 \cdots v_\ell$.
  Hence, there exists a transition $q \xhookrightarrow{a,A} q'$ and a word $w \in L(\calG, A)$ such that 
  $u = \pi_\Sigma(w)$ and $\mmap' = (\mmap \ominus \multi{a}) \oplus \Parikh(\pi_\Gamma(w))$.
  Since $L(\calG,A)$ and $L(\calB_A)$ have the same downward closure,
  there exists $\hat w \in L(\calB_A)$ such that $w \extsw \hat w$.
  Therefore $\calV$ contains a run $(q,\mmap) \xrightarrow{\hat u} (q',\mmap'')$
  where $\hat u = \pi_{X \cup \bar X}(\hat w)$ and $\mmap'' \ge \mmap'$.
  Moreover, by induction hypothesis there is a word $\hat v \in L_\calV((q',\mmap'))$
  such that $v \extsw \hat v$.
  Since $L_\calV((q',\mmap')) \subseteq L_\calV((q',\mmap''))$
  we obtain an accepting run of $\calV$ on $\hat u \hat v$.
  Since $u \extsw \hat u$, \cref{lem:extsw-compatible} implies that $u v \extsw \hat u \hat v$.
  
  The $\supseteq$-direction is analogous:
  Consider an accepting run of $\calV$ starting from $(q,\mmap)$
  and decompose it into minimal subruns which start and end in configurations from $Q \times \multiset{\Gamma}$,
  say $(q,\mmap) \xrightarrow{u} (q',\mmap') \xrightarrow{v_1} \cdots \xrightarrow{v_\ell} (q_f,\mmap_f)$.
  By definition of $\calV$ there exists a transition $q \xhookrightarrow{a,A} q'$ and a word $w \in L(\calB_A)$ such that 
  $u = \pi_\Sigma(w)$ and $\mmap' = (\mmap \ominus \multi{a}) \oplus \Parikh(\pi_\Gamma(w))$.
  Since $L(\calG,A)$ and $L(\calB_A)$ have the same downward closure,
  there exists $\hat w \in L(\calG,A)$ such that $w \extsw \hat w$.
  Therefore $\asyncp$ contains a step $(q,\mmap) \xrightarrow{\hat u} (q',\mmap'')$
  where $\hat u = \pi_{X \cup \bar X}(\hat w)$ and $\mmap'' \ge \mmap'$.
  Moreover, by induction hypothesis there is a word $\hat v \in L_\calV((q',\mmap'))$
  such that $v \extsw \hat v$.
  Since $L_\asyncp((q',\mmap')) \subseteq L_\asyncp((q',\mmap''))$
  we obtain an accepting run of $\asyncp$ on $\hat u \hat v$.
  Since $u \extsw \hat u$, \cref{lem:extsw-compatible} implies that $u v \extsw \hat u \hat v$.
\end{proof}

\subparagraph*{Constructing a program for each type of violation}
As part of the algorithm presented in \cref{subsec:algorithm} we construct
the programs $\asyncp_\ioff$, $\asyncp_\idip$, $\asyncp_\imis$ recognizing languages
over the alphabet $\{ x, \bar x, \#, \bar \# \}$ given by the following equation
(\cref{eq:aux-programs}):
\begin{equation*}
  \begin{aligned}
    L(\asyncp_\ioff) &= \{ \rho(w) \mid w \in L(\asyncp) \}, \\
    L(\asyncp_\idip) &= \{ \# \rho(v) \bar \# \rho(\bar y w) \mid v \bar y w \in L(\asyncp) \text{ for some } v,w \in (X \cup \bar X)^*,\,y \in X \}, \\
    L(\asyncp_\imis) &= \{ \rho(u) \# \rho(v) \bar \# \rho(w) \mid u y v \bar z w \in L(\asyncp), \\ & \qquad\qquad\qquad\qquad\qquad\qquad \text{for some } u,v,w \in (X \cup \bar X)^*, \, y \neq z \in X \}.
  \end{aligned}
\end{equation*}
Here $\rho \colon (X \cup \bar X)^* \to \{x,\bar x\}^*$ is the morphism 
which replaces all letters in $X$ by unique letter $x$ and all letters in $\bar X$ by unique letter $\bar x$.
In \cref{subsec:algorithm} we also mention that not only are these constructions possible
in polynomial time, but furthermore in all three cases tame-pumping is preserved:
if the original asynchronous program $\asyncp$ is tame-pumping,
then we can construct programs $\asyncp_\ioff$, $\asyncp_\idip$, $\asyncp_\imis$
that are also tame-pumping.
We prove this in the following.

\begin{restatable}{lemma}{computeaux}
\label{cor:compute-aux}
Given a tame-pumping asynchronous program $\asyncp$, we can construct in polynomial time
tame-pumping asynchronous programs $\asyncp_\ioff$, $\asyncp_\idip$, $\asyncp_\imis$ for the languages in \cref{eq:aux-programs}.
\end{restatable}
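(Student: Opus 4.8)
The plan is to produce all three programs from $\asyncp$ by combining three simple operations: relabelling event letters through $\rho$, adding a small amount of extra ``phase'' information to the global state, and enlarging the grammar $\cG$ by polynomially many fresh nonterminals that mark up to two positions of a derived word. Two facts are used repeatedly: $\rho$ preserves $\offset$, and the fresh markers $\#,\bar\#$ are not in $X\cup\bar X$ and hence contribute $0$ to $\offset$ and to $\dip$. First I would dispose of $\asyncp_\ioff$: keep $Q$, $\Gamma$, $\Delta$, $q_0$, $q_f$, $\gamma_0$ and replace every event terminal of $\cG$ by its $\rho$-image, obtaining a grammar $\cG^\rho$ with nonterminals $A^\rho$. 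This is polynomial, reads $\rho$ of whatever the original program reads (and spawns the same handlers), hence recognises $\{\rho(w)\mid w\in L(\asyncp)\}$; and it preserves tame-pumping, because $\rho$ sends a pump $A\derivs uAv$ to the pump $A^\rho\derivs\rho(u)A^\rho\rho(v)$ with $\offset(\rho(u))=\offset(u)\ge 0$ and $\offset(\rho(v))=\offset(v)=-\offset(u)$.

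The programs $\asyncp_\idip$ and $\asyncp_\imis$ are built on a \emph{marking gadget}. Given $c\in X\cup\bar X$ and $\sigma\in\{\#,\bar\#\}$, I add for every nonterminal $A$ a fresh nonterminal that derives exactly the words obtained from derivations of $A$ by $\rho$-relabelling and additionally modifying one chosen occurrence of $c$ in a prescribed local way (either \emph{replacing} it by $\sigma$, or \emph{inserting} $\sigma$ just before its $\rho$-image), with the spawned handler multiset unchanged; after first normalising $\cG$ to right-hand sides of length $\le 2$, this is the standard transformation ``push the mark into one nonterminal child, or discharge it at an occurrence of $c$'', and it blows the grammar up only polynomially. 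Then $\asyncp_\idip$ has state set $\{p_0\}\sqcup(Q\times\{1,2\})$, fresh initial handler $s_0$ and single transition $p_0\xhookrightarrow{s_0,A_0}(q_0,1)$ with $A_0\to\#\gamma_0$ (emitting the leading $\#$ and spawning $\gamma_0$); in phase $1$ it simulates $\asyncp$ over $\cG^\rho$ until one transition instead uses a marked nonterminal that inserts $\bar\#$ before a $\bar x$, moving to phase $2$, where it finishes the simulation over $\cG^\rho$; the final state is $(q_f,2)$. Tracking runs shows $L(\asyncp_\idip)=\{\#\rho(v)\bar\#\rho(\bar y w)\mid v\bar y w\in L(\asyncp),\,y\in X\}$.

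For $\asyncp_\imis$ I would do the same with two marking steps, but now the concrete overwritten letter $y\in X$ must be remembered so that the second step can require $z\ne y$; since $|X|$ is polynomial in $|\asyncp|$, $y$ is stored in the global state. Concretely the state set is $(Q\times\{\mathrm{ph}_0\})\sqcup(Q\times\{\mathrm{ph}_1\}\times X)\sqcup(Q\times\{\mathrm{ph}_2\})$, the initial handler stays $\gamma_0$ (no dummy, since $\#$ need not be leading), a phase-$0$ transition may overwrite some $y$ by $\#$ and pass to $(\cdot,\mathrm{ph}_1,y)$, a phase-$1$ transition may overwrite some $\bar z$ with $z\ne y$ by $\bar\#$ and pass to phase $2$, and, to cover the case that both markers fall inside the output of a single task, a direct phase-$0$-to-phase-$2$ transition uses a \emph{doubly}-marked nonterminal (obtained by composing the gadget with itself) whose productions let the two marks travel together down one child, split between two children with $\#$ kept left of $\bar\#$, or be partially or fully discharged. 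Again tracking runs gives the required language, and all constructions are clearly polynomial.

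It remains to argue that tame-pumping is preserved for $\asyncp_\idip$ and $\asyncp_\imis$. An unmarked nonterminal $A^\rho$ has only relabelled productions, hence derives only words over $\{x,\bar x\}\cup\Gamma$, so its pumps are exactly the $\rho$-images of pumps of $\cG$, already tame. For a marked or doubly-marked nonterminal $A'$, any pump $A'\derivs uA'v$ must carry the mark(s) all the way down the path to the lower $A'$ without discharging any marker on that path (otherwise the lower occurrence would not carry the mark); hence $u$ and $v$ contain no marker, and erasing the mark annotations turns the pump into a pump $A^\rho\derivs uA^\rho v$, which inherits $\offset(u)\ge 0$ and $\offset(v)=-\offset(u)$. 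The fresh $A_0$ in $\asyncp_\idip$ is non-recursive, so there is nothing to check there. I expect the main obstacle to be precisely the bookkeeping in $\asyncp_\imis$: the constraint $y\ne z$ is what forces the overwritten $X$-letter into the global state, and the single-task subcase is what forces doubly-marked nonterminals, whose rule transformation (the mark pair may stay together, split, or be partially or fully placed, always with $\#$ left of $\bar\#$) must be spelled out carefully so that it produces exactly $\{\rho(w_1)\#\rho(w_2)\bar\#\rho(w_3)\mid w_1 y w_2\bar z w_3\in L(\cG,A)\}$ while leaving the spawn multiset intact; everything else is routine relabelling and state-splitting.
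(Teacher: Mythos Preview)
Your proof is correct. The paper takes the same approach but packages it more uniformly: rather than building each of the three programs by hand, it observes that all three target languages are images of $L(\asyncp)$ under an \emph{offset-preserving} rational transduction $T$ (meaning $\offset(u)=\offset(v)$ for every $(u,v)\in T$), applies the standard triple construction (product of the grammar with a transducer for $T$) once, and proves once that tame-pumping is preserved whenever $T$ is offset-preserving. Your marked and doubly-marked nonterminals are exactly the nonterminals $A_{p,p'}$ that this triple construction produces for the particular transducers at hand, and your phase-tracking in the global state is exactly the transducer-state component the paper adds to $Q$. Your tame-pumping argument---in a pump of a marked nonterminal the mark is never discharged, so $u,v$ are marker-free and the pump projects to a pump of the original grammar---is the concrete instance of the paper's observation that any loop $p\xrightarrow{(w,u)}p$ in a transducer for an offset-preserving $T$ satisfies $\offset(w)=\offset(u)$, whence a pump $A_{p,p'}\derivs uA_{p,p'}v$ inherits tameness from the corresponding pump of $A$. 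The paper's framing is shorter and sidesteps the bookkeeping you anticipate for $\asyncp_\imis$; yours is more explicit and self-contained.
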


\begin{proof}
A binary transduction $T$ is \emph{offset-preserving} if $\offset(u) = \offset(v)$ for all $(u,v) \in T$.
It is easy to see that each of the three output languages can be obtained by applying an offset-preserving
rational transduction to $L(\asyncp)$, namely $T_\ioff = \{ (u,\rho(u)) \mid u \in \{x, \bar x\}^* \}$,
\[
	T_\idip = \{ (v \bar y w,\# \rho(v) \bar \# \rho(\bar y w)) \mid v,w \in (X \cup \bar X)^*, \, y \in X \},
\]
and
\[
	T_\imis = \{ (u y v \bar z w, \rho(u) \# \rho(v) \bar \# \rho(w)) \mid u,v,w \in (X \cup \bar X)^*, \, y \neq z \in X \}.
\]
It is easy to show that given a rational transduction $T$
and an asynchronous program $\asyncp$, one can compute an asynchronous program $\asyncp_T$
such that $L(\asyncp_T) = T(L(\asyncp))$, using a standard ``triple construction'' on the grammar of $\asyncp$.
We now show that if $T$ is offset-preserving, and $\asyncp$ is tame-pumping,
then $\asyncp_T$ also is tame-pumping.

Suppose that $\calT$ is a finite-state transducer for $T$ with state set $P$ and initial state $p_0$ and final state $p_f$.
The state set of the new asynchronous program $\hat \asyncp$ is the product $Q \times P$.
The initial state is $(q_0,p_0)$ and the final state is $(q_f,p_f)$.
Each transition $q \xhookrightarrow{a,A} q'$ in $\asyncp$ is translated into transitions
$(q,p) \xhookrightarrow{a,A_{p,p'}} (q',p')$ where $q,q' \in Q$ and $p,p' \in P$.
Here a nonterminal $A_{p,p'}$ generates the language $\{ v \mid \exists u \colon p \xrightarrow{(u,v)}_\calT p', \, A \derivs u \}$.
Each production $A \to BC$ is translated into productions $A_{p,p'} \to B_{p,p''} C_{p'',p'}$ for $p,p',p'' \in P$,
and $A \to a$ is translated into productions $A_{p,p'} \to b$ if $p \xrightarrow{(a,b)}_\calT p'$.
Finally, we add productions $A_{p,p} \to \varepsilon$ for each nonterminal $A$ and each $p \in P$.

We claim that the resulting grammar is tame-pumping.
Consider a pump $A_{p,p'} \derivs u A_{p,p'} v$ in the new grammar. 
This means, that there exists a derivation $A \derivs wAz$ in $\calG$
and runs $p \xrightarrow{(w,u)} p$ and $p' \xrightarrow{(z,v)} p'$ in $\calT$.
Since $T$ is offset-preserving, we know $\offset(w) = \offset(u)$ and $\offset(z) = \offset(v)$.
Hence tame-pumping is transferred from $A \derivs wAz$ to $A_{p,p'} \derivs u A_{p,p'} v$.
\end{proof}

\section{Results from Section~\ref{ssec:absOrdPumps}}
\label{sec:app-undivided}

\subparagraph*{Succinct CFGs}
We use succinct representations of $\CFG$s and $\ECFG$s which represent nonterminals by polynomial-size strings. We assume rules allowed by Chomsky Normal Form, which are of the form $A \rightarrow BC$ or $A \rightarrow a$, but moreover also allow rules of the form $A \rightarrow B$. The succinct representation contains a ternary transducer for rules with nonterminals on the right hand side and a binary transducer for rules with terminals on the right hand side. The ternary transducer associates rules of the form $A \rightarrow B$ with triples in $N \times N \times 0^*$, where $N$ is the set of nonterminals. In the case of an $\ECFG$, there could also be extended productions of the form $A \rightarrow \Gamma_1^*$ which are also recognised by the binary transducer.

A succinct $\CFG$ ($\sCFG$) has nonterminals represented by polynomial size strings,
just as a succinct $\NFA$ has polynomial size strings representing states.
Formally, it is a tuple $\cH=(\calT_N,\calT_T,\Theta,\Lambda,w_0)$, where
$\calT_N$ is a length-preserving transducer with two tapes,
$\calT_T$ is a length-preserving transducer with three tapes,
$\Theta$ is a finite set of terminals with $0,1 \notin \Theta$,
$\Lambda$ is an alphabet encoding the nonterminals with $0 \in \Lambda$ and $\Theta \cap \Lambda = \emptyset$, and
$w_0 \in \Lambda^*$ is the start nonterminal.
Let $M = |w_0|$.
We assume $L(\calT_N) \subseteq \Lambda^M \times \Lambda^M \times \Lambda^M$
and $L(\calT_T) \subseteq \Lambda^M \times \Theta0^{M-1}$.

Like for succinct $\NFA$, there is a corresponding explicit context free grammar
$\cE(\cH)=(N,\Theta,P, S)$ with 
\begin{itemize}
  \item $N=\Lambda^M\setminus\set{0^M}$, 
  \item $w \rightarrow uv \in P$ for $u,v \in N$ iff $(w,u,v) \in L(\calT_N)$,
  \item $w \rightarrow u \in P$ for $u \in N$ iff $(w,u,0^M) \in L(\calT_N)$,
  \item $w \rightarrow a \in P$ for $a \in \Theta$ iff $(w,a0^{M-1}) \in L(\calT_T)$,
  \item $S=w_0$.
\end{itemize}
The language of $\cH$ is $L(\cH) = L(\cE(\cG))$, i.e. the language of its explicit $\CFG$.

The size of $\cH$ is defined as $|\cH| := |\calT_N| + |\calT_T| + |w_0|$.

We also consider succinct, extended $\CFG$ ($\sECFG$), where the explicit model is an $\ECFG$.
To encode the extended productions, we make a slightly different assumption on the language of $\cT_T$: 
$L(\calT_T) \subseteq \big(\Lambda^M \times \Theta 0^{M-1}\big) \cup \big(\Lambda^M \times (\Theta \cup \{0\})^{|\Theta|}10^{(M-1)-|\Theta|}\big)$.
Here a $1$ at position $|\Theta| + 1$ in the second component indicates that the pair 
corresponds to an extended production.
To ensure that we always have space for this $1$, we require $|w_0| > |\Theta|$ for $\sECFG$s.
Formally, in the explicit $\ECFG$, we have $w \rightarrow \Gamma^* \in P$ for $\Gamma \subseteq \Theta$
iff $(w,u10^{(M-1)-|\Theta|}) \in L(\calT_T)$ with $u \in (\Theta \cup \{0\})^{|\Theta|}$
such that $|u|_a \geq 1$ for each $a \in \Gamma$ and $|u|_b = 0$ for each $b \in \Theta \setminus \Gamma$.

\subparagraph*{Computing annotations}
In the proof sketch for \cref{lem:PAformulaEffect}, we mention that this directly follows
from \cite[Proposition~3.8]{BaumannGMTZ23} if we assume that the given offset-uniform
$\CFG$ $\cG$ is annotated. Here we call $\cG$ annotated if for every nonterminal $A$
the minimal dip that can be achieved by a word in $L(\cG,A)$ is given in the input,
denoted by $\mindip(A)$.
This means that if we can compute $\mindip(A)$ for every nonterminal $A$ of $\cG$ in
polynomial space then \cref{lem:PAformulaEffect} immediately follows also for the case
where the annotation is not given in the input.

\cref{lem:computingAnn} below states the desired result.
To prove it, we first need the following auxiliary result:

\begin{lemma}
\label{lem:mindip-rel}
Let $\cG = (N,\Sigma,P,S)$ be an offset uniform $\CFG$, and let
$A \to \alpha_1 \cdots \alpha_n \in P$ with $\alpha_i \in N \cup \Sigma$ for $i \in [1,n]$
be a production that occurs in some complete derivation of $\cG$.
Then the following holds:
\[
	\label{eq:dx}
	\mindip(A) \le \max_{1 \le k \le n} \left( \mindip(\alpha_k) - \sum_{i=1}^{k-1} \offset(\alpha_i) \right)
\] 
where $\mindip(x) = \dip(x)$ for $x \in \Sigma$, and $\offset(\cdot)$ denotes the offset of a terminal
symbol or the unique offset of a nonterminal.
\end{lemma}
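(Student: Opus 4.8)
The plan is to exhibit a single word $w\in L(\cG,A)$ whose dip is at most the right-hand side; since $\mindip(A)$ is the minimum of $\dip$ over $L(\cG,A)$, this immediately yields the claimed inequality.

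First I would record that, because the production $A\to\alpha_1\cdots\alpha_n$ occurs in some complete derivation, the nonterminal $A$ and every symbol $\alpha_k$ are useful. In particular $L(\cG,\alpha_k)\neq\emptyset$, so (by offset-uniformity of $\cG$ together with usefulness) each $\alpha_k$ has a well-defined offset $\offset(\alpha_k)$, and $\mindip(\alpha_k)$ is finite and attained; the same applies to $A$, so $\mindip(A)$ is well-defined. For each $k\in[1,n]$ fix a word $w_k$ as follows: if $\alpha_k\in\Sigma$, take $w_k=\alpha_k$; otherwise choose $w_k\in L(\cG,\alpha_k)$ with $\dip(w_k)=\mindip(\alpha_k)$. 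Applying the production and then deriving $\alpha_k\derivs w_k$ for every nonterminal $\alpha_k$ yields $w:=w_1\cdots w_n\in L(\cG,A)$.

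Next I would bound $\dip(w)$ by inspecting its prefixes. Every prefix $u$ of $w$ can be written as $u=w_1\cdots w_{k-1}w'_k$ for some $k\in[1,n]$ and some prefix $w'_k$ of $w_k$ (the empty prefix corresponds to $k=1$, $w'_1=\varepsilon$, and $u=w$ to $k=n$, $w'_n=w_n$). Since $\offset$ is additive over concatenation, since $\offset(w_i)=\offset(\alpha_i)$ by offset-uniformity (for terminals this is trivial, for nonterminals this is exactly uniqueness of the offset), and since $\offset(w'_k)\ge-\dip(w_k)=-\mindip(\alpha_k)$ by definition of the dip, we obtain
\[
  -\offset(u) \;=\; -\sum_{i=1}^{k-1}\offset(\alpha_i)\;-\;\offset(w'_k) \;\le\; \mindip(\alpha_k)-\sum_{i=1}^{k-1}\offset(\alpha_i).
\]
Taking the maximum over all prefixes $u$ of $w$ gives $\dip(w)\le\max_{1\le k\le n}\bigl(\mindip(\alpha_k)-\sum_{i=1}^{k-1}\offset(\alpha_i)\bigr)$, and $w\in L(\cG,A)$ then gives $\mindip(A)\le\dip(w)$, which is the statement.

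There is no real obstacle here; this is a routine argument. The only points that need a little care are (i) justifying that the offsets of the chosen witnesses $w_i$ agree with $\offset(\alpha_i)$ — which is precisely offset-uniformity combined with the fact that all $\alpha_i$ are useful — and (ii) checking that the case analysis over prefixes of $w$, including the empty prefix and the full word, is exhaustive so that the maximum on the right genuinely dominates $\dip(w)$.
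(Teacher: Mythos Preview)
Your proposal is correct and follows essentially the same approach as the paper: pick a minimum-dip witness $w_k$ for each $\alpha_k$, form $w=w_1\cdots w_n\in L(\cG,A)$, and bound $\dip(w)$ via the standard decomposition of prefixes of a concatenation. The paper states the identity $\dip(u_1\cdots u_n)=\max_k\bigl(\dip(u_k)-\sum_{i<k}\offset(u_i)\bigr)$ and applies it directly, whereas you unfold that identity by hand, but the argument is the same.
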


\begin{proof}
Let us first argue that $\offset(\cdot)$ is well-defined on nonterminals $\alpha_i$.
Since $\cG$ is offset-uniform, so is the language of every nonterminal that occurs in some complete derivation.
As we assumed $A \to \alpha_1 \cdots \alpha_n$ to occur in some complete derivation, so do the $\alpha_i$.

Now, if a word $u \in \Sigma^*$ is factorized into $u = u_1 \cdots u_n$ then
\begin{equation}
	\label{eq:du}
	\dip(u) = \max_{1 \le k \le n} \left( \dip(u_k) - \sum_{i=1}^{k-1} \offset(u_i) \right).
\end{equation}
For $1 \le i \le n$, let $u_i$ be any word produced by $\alpha_i$ with $\dip(u_i) = \mindip(\alpha_i)$,
and define $u = u_1 \cdots u_n$.
Then we have:
\begin{align*}
	\mindip(A) &\le~ \dip(u)\\
  &= \max_{1 \le k \le n} \left( \dip(u_k) - \sum_{i=1}^{k-1} \offset(u_i) \right) = \max_{1 \le k \le n} \left( \mindip(\alpha_k) - \sum_{i=1}^{k-1} \offset(\alpha_i) \right).
\end{align*}
This concludes the proof.
\end{proof}

Now we are ready to prove that annotations are computable in polynomial space for offset-uniform $\CFG$s.

\begin{lemma}
\label{lem:computingAnn}
	Given an offset-uniform $\CFG$ $\cG = (N,\Sigma,P,S)$,
	one can compute the value $\mindip(A)$ for every nonterminal $A \in N$ in polynomial space.
\end{lemma}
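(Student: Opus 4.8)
The plan is to realize the evident recursion for $\mindip$ as a monotone fixpoint computation carried out \emph{from above} --- starting from an exponential a priori upper bound and iterating downward until the iterate stabilizes --- observing that polynomial space suffices because we never store the (exponentially many) intermediate iterates.

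First, some preprocessing. We may assume that every nonterminal of $\cG$ is \emph{useful}, i.e.\ occurs in some complete derivation $S \derivs \beta A \gamma \derivs w$ with $w\in\Sigma^*$: useless nonterminals can be detected and removed in polynomial time without changing $L(\cG)$, and after this removal $L(\cG,A)$ is offset-uniform for every remaining $A$, so its common offset $\offset(A)\in\Z$ is well defined. These offsets can be computed in polynomial time by a worklist algorithm that processes the nonterminals in order of the size of a smallest derivation tree: for $A$ we pick a production $A\to\gamma_1\cdots\gamma_m$ all of whose symbols have strictly smaller such size and set $\offset(A):=\sum_j\offset(\gamma_j)$, where offset-uniformity guarantees that this is well defined. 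We also record the a priori bound: since $A$ is productive it has a pump-free derivation tree, of height at most $|N|$, hence yielding a word of length at most $|\cG|^{|N|}$; as $\dip(v)\le|v|$ for every word $v$, this gives $\mindip(A)\le 2^{q(|\cG|)}$ for a suitable polynomial $q$ (and likewise $|\offset(A)|\le 2^{q(|\cG|)}$).

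Now fix the recursion. For a production $A\to\alpha_1\cdots\alpha_n$ define $h_{A\to\alpha}(D)=\max_{1\le k\le n}\bigl(D^*(\alpha_k)-\sum_{i<k}\offset(\alpha_i)\bigr)$, where $D^*(\alpha_k)=D(\alpha_k)$ if $\alpha_k$ is a nonterminal and $D^*(a)=\dip(a)$ if $a$ is a terminal, with the convention $h_{A\to\varepsilon}\equiv 0$; then set $F(D)(A)=\min\bigl(2^{q(|\cG|)},\ \min_{A\to\alpha}h_{A\to\alpha}(D)\bigr)$. Each $h_{A\to\alpha}$, hence $F$, is monotone, and $F$ preserves nonnegativity and maps everything below $2^{q(|\cG|)}$ to itself. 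The algorithm initializes $D_0(A):=2^{q(|\cG|)}$ for all $A$ and iterates $D_{t+1}:=F(D_t)$ until $D_{t+1}=D_t$. Since $D_1=F(D_0)\le D_0$ and $F$ is monotone, the sequence $(D_t)$ is pointwise non-increasing, integer-valued and bounded below by $0$, so it stabilizes after at most $|N|\cdot 2^{q(|\cG|)}$ steps at a fixpoint $D_\infty=F(D_\infty)$. The number of steps is exponential, but each evaluation of $F$ takes polynomial space and we simply overwrite $D_t$ by $D_{t+1}$, so the whole computation runs in polynomial space. It remains to argue $D_\infty=\mindip$.

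For ``$\ge$'', note that $\mindip(A)\le 2^{q(|\cG|)}$ and, by \cref{lem:mindip-rel}, $\mindip(A)\le h_{A\to\alpha}(\mindip)$ for every (usable) production, so $\mindip\le F(\mindip)$; together with monotonicity this yields $D_t\ge\mindip$ for all $t$ by induction on $t$, hence $D_\infty\ge\mindip$. For ``$\le$'', I claim that $D_\infty(A)\le\dip(w)$ for every $w\in L(\cG,A)$; applied to a $w$ realizing $\mindip(A)$ this gives $D_\infty(A)\le\mindip(A)$. The claim is proved by induction on the size of a derivation tree for $w$ from $A$: if the root production is $A\to\alpha_1\cdots\alpha_n$ and $w=u_1\cdots u_n$ with $u_k\in L(\cG,\alpha_k)$ derived by the strictly smaller subtrees, then using the dip-factorization identity (as in \eqref{eq:du}), the induction hypothesis $D^*_\infty(\alpha_k)\le\dip(u_k)$, and $\offset(u_i)=\offset(\alpha_i)$ we get
\begin{align*}
 D_\infty(A) &= F(D_\infty)(A) \;\le\; h_{A\to\alpha}(D_\infty)
 = \max_{1\le k\le n}\Bigl(D^*_\infty(\alpha_k)-\textstyle\sum_{i<k}\offset(\alpha_i)\Bigr)\\
 &\le \max_{1\le k\le n}\Bigl(\dip(u_k)-\textstyle\sum_{i<k}\offset(u_i)\Bigr) = \dip(w),
\end{align*}
and the base cases $A\to a$ and $A\to\varepsilon$ are immediate. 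The one genuinely delicate point is exactly this ``$\le$'' direction: since $F$ is a $\min$--$\max$ operator it may have fixpoints strictly above $\mindip$, and one must show the \emph{from-above} iteration does not get stuck at such a fixpoint; the right tool is the induction on derivation-tree size just given, rather than any argument about the fixpoint iteration itself. Everything else --- computing the offsets, the exponential a priori bound, and the polynomial-space bookkeeping --- is routine.
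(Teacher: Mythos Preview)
Your proof is correct and follows essentially the same approach as the paper: both initialize $D$ to an exponential upper bound, iterate the same $\min$--$\max$ recursion (the paper's $D(w)$ in \eqref{eq:d-approx} is exactly your $h_{A\to\alpha}$) downward until stability, use \cref{lem:mindip-rel} for $D_\infty\ge\mindip$, and establish $D_\infty\le\mindip$ via an induction on derivation size (the paper phrases this as a minimal-counterexample argument). The only cosmetic differences are that you do synchronous iteration where the paper updates one constraint at a time, and you frame things operator-theoretically; neither changes the substance.
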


\begin{proof}

In the following we can ignore all nonterminals that do not occur in some complete derivation.
As a first step, compute in polynomial time the value $\offset(A)$
for each nonterminal $A$ in a bottom-up fashion, see~\cite[Proof of Lemma~3.2]{BaumannGMTZ23}.
Every value $\mindip(A)$ is bounded by a number $M \in \N$,
which is exponentially large in the grammar size,
since $\mindip(A)$ is bounded by all dips among words derived by $A$, and
any (productive) nonterminal $A$ derives some word that is only exponentially long
(also follows from \cref{lem:checkTamePumping}~(2)).

The algorithm maintains a function $D \colon N \to \N$.
If $\alpha \in \Sigma$ we also write $D(\alpha)$ for $\dip(\alpha)$.
Furthermore, if $\alpha_1 \cdots \alpha_n \in (N \cup \Sigma)^*$ for $n \geq 2$, we define
\begin{equation}
	\label{eq:d-approx}
	D(\alpha_1 \cdots \alpha_n) = \max_{1 \le k \le n} \left( D(\alpha_k) - \sum_{i=1}^{k-1} \offset(\alpha_i) \right).
\end{equation}

\begin{enumerate}
\item Initialize $D(A) \leftarrow M$ for all $A \in N$.
\item While there exists a production $A \to w$ where $D(A) > D(w)$, set $D(A) \leftarrow D(w)$.
\end{enumerate}
Clearly, the algorithm terminates since the numbers in $D$ only become smaller.
Furthermore, the algorithm can be implemented in polynomial space since the number $M$ is exponentially bounded.

By~\cref{lem:mindip-rel} the algorithm maintains the property $D(A) \ge \mindip(A)$ for all $A \in N$.
We claim that, if the algorithm terminates, then in fact $D(A) = \mindip(A)$ holds for all $A \in N$.
Towards a contradiction, assume that $D(A) > \mindip(A)$, i.e.\ there exists a derivation $A \derivs u$
with $D(A) > \dip(u)$. Let us also assume that the derivation has minimal length.
Suppose that the first production which is applied in the derivation is $A \to \alpha_1 \cdots \alpha_n$,
and that $u = u_1 \cdots u_n$ where each $u_i$ is derived from $\alpha_i$.
We know that $D(\alpha_i) = \dip(u_i)$ by length-minimality of the derivation of $u$,
since $D(\alpha_i) > \dip(u_i)$ would imply that we could have chosen the derivation
$\alpha_i \derivs u_i$, which is smaller than $A \derivs u$, since the latter contains the former.
By termination we know that
\begin{align*}
  D(A) &\le D(\alpha_1 \cdots \alpha_n)\\
  &= \max_{1 \le k \le n} \left( D(\alpha_k) - \sum_{i=1}^{k-1} \offset(\alpha_i) \right) = \max_{1 \le k \le n} \left( \dip(u_k) - \sum_{i=1}^{k-1} \offset(u_i) \right) \stackrel{\eqref{eq:du}}{=} \dip(u),
\end{align*}
which is a contradiction.
\end{proof}

\subparagraph*{Dealing with undivided pumps}
Recall that a grammar $\cG$ is \emph{uniformly marked}
if $L(\cG)$ is contained in one of the subsets $\Theta^* \# \Theta^* \bar \# \Theta^*$, $\Theta^* \# \Theta^*$, $\Theta^* \bar \# \Theta^*$, or $\Theta^*$.
For such a grammar, we can partition its set of nonterminals $N$ into $N_{\# \bar \#} \cup N_\# \cup N_{\bar \#} \cup N_0$,
where $N_{\# \bar \#}$-nonterminals only produce marked words in $\Theta^* \# \Theta^* \bar \# \Theta^*$,
$N_\#$-nonterminals only produce marked words in $\Theta^* \# \Theta^*$, etc.
Furthermore, recall that a pump $A \derivs uAv$ is \emph{undivided} if $A \in N_{\# \bar \#} \cup N_0$, and \emph{divided} otherwise.

We would like to prove \cref{lem:convertTosECFG}, which would allow us to get rid of all
undivided pumps.
To this end, we still need to prove a few auxiliary results, the first of which is the following:

\dipsforspawns*

\begin{proof}

It is easy to construct a grammar $\cG_{A,a}$ with
$L(\cG_{A,a}) = \{ u \$ v \mid A \derivs u A v$ and $u$ contains $a \}$.
Since $\cG$ is tame-pumping, $\cG_{A,a}$ has uniform offset 0.
By \cref{lem:PAformulaEffect} we can compute in $\PSPACE$ a polynomially-sized existential Presburger formula
$\varphi$ for the relation
\[
	R = \bigcup_{u \$ v \in L(\cG_{A,a})} \psi(u) \times \psi(v).
\]
We only need to check whether there exists a tuple $(d_u, o_u, d_v, o_v) \in R$
with $d_u \le d_\markl$ and $d_v \le d_\markr$ (and possibly $o_u > 0$).
This concludes the proof since the truth problem of existential Presburger arithmetic is in $\NP$ \cite{BoroshTreybig76}.
\end{proof}

Let $L$ be a language.
We say that $L$ can be \emph{non-deterministically computed in $\PSPACE$}
if there is a non-deterministic polynomial space procedure that guesses a word,
and $L$ is the set of all such possible guesses.

\begin{lemma} \label{lem:divided-pump-dcl}
  Let $\cG = (N,\Sigma,P,S)$ be a uniformly marked tame-pumping $\CFG$.
  \begin{enumerate}
    \item Let be $A \in N$ be a nonterminal in $N_{\# \bar \#}$.
    Consider the language $L_A = \{u\#\bar\#v \mid A \derivs_\cG uAv\}$,
    which essentially contains the pumps of $A$.
    We can non-deterministically compute a language $L \subseteq \Theta^*\#\bar\#\Theta^*$
    in $\PSPACE$ such that $L$ has uniform offset $0$ and the following holds:
    \begin{enumerate}
      \item for every word $u\#\bar\#v \in L_A$ there is a word $u'\#\bar\#v' \in L$
      such that $\dip(u) \geq \dip(u')$ and $\dip(v) \geq \dip(v')$ as well as
      $\pi_\Gamma(u) \subword \pi_\Gamma(u')$ and $\pi_\Gamma(v) \subword \pi_\Gamma(v')$.
      \item for every word $u'\#\bar\#v' \in L$ there is a word $u\#\bar\#v \in L_A$
      such that $\dip(u') \geq \dip(u)$ and $\dip(v') \geq \dip(v)$ as well as
      $\pi_\Gamma(u') \subword \pi_\Gamma(u)$ and $\pi_\Gamma(v') \subword \pi_\Gamma(v)$.
    \end{enumerate}
    \item Let be $A \in N$ be a nonterminal in $N_0$.
    We can non-deterministically compute a language $L \subseteq \Theta^*$ in $\PSPACE$
    such that the following holds:
    \begin{enumerate}
      \item for every word $u \in L(\cG,A)$ there is a word $u' \in L$
      such that $u \extsw' u'$, and
      \item for every word $u' \in L$ there is a word $u \in L(\cG,A)$
      such that $u' \extsw' u$.
    \end{enumerate}
  \end{enumerate}
\end{lemma}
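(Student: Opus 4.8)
The plan is to prove both parts by the same recipe. In each case the goal is to replace a context-free object --- the pump language $L_A$ in Part~1, the language $L(\cG,A)$ in Part~2 --- by a finite union of \emph{ideal-shaped} languages: concatenations of single handler letters $a\in\Gamma$, of stars $(\Gamma')^{*}$ with $\Gamma'\subseteq\Gamma$, and of \emph{dip blocks} $\bar x^{d}x^{d}$ that record how deep a prefix must dip before the following handler names become available. Two ingredients drive the argument. First, \cref{dips-for-spawns} decides in $\PSPACE$, given a nonterminal $B$, a letter $a\in\Gamma$ and dip bounds $d_\markl,d_\markr\in\N$, whether some pump $B\derivs uBv$ has $a$ on the chosen side with $\dip(u)\le d_\markl$ and $\dip(v)\le d_\markr$ (and whether such a pump can also be taken with $\offset(u)>0$, i.e.\ \emph{increasing}). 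Second, since $\cG$ here is an ordinary $\CFG$, a pumpfree derivation tree has only polynomially many levels, so it can be produced on the fly by a depth-first search running in polynomial space --- this is the construction behind the subword-closure warm-up of \cref{sec:extended-downclosure}, now refined to also track offsets and dips along the traversal. Finally, $\NPSPACE=\PSPACE$ lets us read the non-deterministic generators below as genuine $\PSPACE$ algorithms.

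For Part~1 I would first build the standard \emph{pump grammar} $\cG_A$ with $L(\cG_A)=\{u\$v\mid A\derivs_\cG uAv\}$, using three copies of each nonterminal to mark the position of the recursion spine, and then substitute $\$\mapsto\#\bar{\#}$. Tame-pumping of $\cG$ carries over to $\cG_A$, so $\cG_A$ has uniform offset $0$, and by \cref{lem:checkTamePumping}(2), applied to $\cG_A$ and to suitable sub-grammars, the dips of the two sides of a \emph{minimal} pump are bounded by some $M=2^{q(|\cG|)}$. The procedure generating $L$ then: guesses $d_\markl,d_\markr\in[0,M]$ and $\Gamma_\markl,\Gamma_\markr\subseteq\Gamma$; checks with \cref{dips-for-spawns} that for every $a\in\Gamma_\markl$ there is a \emph{zero} pump $A\derivs uAv$ with $a$ occurring in $u$, $\dip(u)\le d_\markl$ and $\dip(v)\le d_\markr$, and symmetrically for $\Gamma_\markr$; and, if all checks pass, emits $\bar x^{d_\markl}x^{d_\markl}\,w_\markl\,\#\bar{\#}\,\bar x^{d_\markr}x^{d_\markr}\,w_\markr$ for an arbitrary $w_\markl\in\Gamma_\markl^{*}$ and $w_\markr\in\Gamma_\markr^{*}$. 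Every block we emit has offset $0$, so $L$ has uniform offset $0$. For condition~(a), a pump $u\#\bar{\#}v\in L_A$ is dominated by the output for the profile $d_\markl=\dip(u)$, $d_\markr=\dip(v)$, $\Gamma_\markl$ the set of handler names in $u$, $\Gamma_\markr$ the set in $v$, $w_\markl=\pi_\Gamma(u)$, $w_\markr=\pi_\Gamma(v)$; this profile passes the checks because the pump $u\#\bar{\#}v$ itself witnesses every required letter on the correct side with the dip bounds met with equality. For condition~(b), given an output of a valid profile, pick for each letter of $w_\markl$ (resp.\ $w_\markr$) a zero pump realizing it with the prescribed bounds, and compose all these zero pumps in the order that lists the letters of $w_\markl$ along the $u$-side and those of $w_\markr$ along the $v$-side in order; the composite lies in $L_A$, its two sides contain $w_\markl$ and $w_\markr$ as subwords, and its dips are still at most $d_\markl$ and $d_\markr$, since concatenating words of offset $0$ never pushes the dip above the maximum of the parts.

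Part~2 is the same picture without the markers. As $\cG$ is tame-pumping, $L(\cG,A)$ realizes only finitely many offsets, so I would split it into its (finitely many) offset classes, each an offset-uniform $\CFL$; offset-uniformity is what is needed to invoke \cref{lem:PAformulaEffect} inside \cref{dips-for-spawns}. Fixing a class, the generator guesses a pumpfree skeleton rooted at $A$ by depth-first search (polynomial depth, hence polynomial space), and at each visited nonterminal $B$ it may insert an abstracted pump, again described by a profile $(d_\markl,\Gamma_\markl,d_\markr,\Gamma_\markr)$ validated through \cref{dips-for-spawns}; the emitted word is the yield. Condition~(2a) holds because every derivation tree of $\cG$ from $A$ is a skeleton with pumps inserted, and each such pump is $\extsw'$-below its abstraction (tame-pumping makes the offsets agree and the net offset of a pump zero); condition~(2b) follows as in Part~1 by composing zero pumps along the skeleton.

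The step I expect to be the real obstacle is the treatment of \emph{increasing} pumps $B\derivs uBv$ ($\offset(u)>0$, hence $\offset(v)<0$). On the $u$-side these are harmless, since iterating a pump of positive offset does not raise its dip, so positive-offset letters can simply be collected into a block $\bar x^{d}x^{d}(\Gamma')^{*}$ with $d$ bounded. On the $v$-side this fails: iterating couples the growth of the dip with the growth of the multiset of handler names in $v$, so a long word in $(\Gamma')^{*}$ would have no dominating pump of bounded dip and condition~(b) would break. The approach I would take is to abstract an \emph{iterated} increasing pump as a single unit --- allowing, on the negative side, words of the form $(\iota_v)^{*}\,\bar x^{d}x^{d}\,(\Gamma')^{*}$, where $\iota_v$ is an ideal of one copy of the increasing pump carrying its net negative offset, so that iterating $\iota_v$ grows dip and handler count together exactly as a genuine composite does, with the matching positive offset absorbed by the corresponding ideal on the other side of $B$ produced during the \emph{same} traversal. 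Making this precise --- in particular arranging that a polynomial-space generator can keep the two sides of an increasing pump synchronized while the common iteration count is unbounded, and checking that this still yields exactly the required domination conditions --- is where the bulk of the technical work lies; everything else reduces to the ideal-composition observations above and the $\PSPACE$ existential tests of \cref{dips-for-spawns}.
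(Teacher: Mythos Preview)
Your overall strategy—guessing dip/alphabet profiles and validating them with \cref{dips-for-spawns}, and for Part~2 performing a depth-first traversal over a pumpfree skeleton—matches the paper. But there are two concrete problems, both about increasing pumps.

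In Part~1 you restrict the validation checks to \emph{zero} pumps, and this breaks condition~(a). If a handler letter $a$ occurs only in increasing pumps of $A$ (say $A\derivs xaA\bar{x}c$ is the only pump), your procedure never accepts a profile with $a\in\Gamma_\markl$, so no word of your $L$ dominates $xa\,\#\bar{\#}\,\bar{x}c\in L_A$ on the $\Gamma$-projection. The paper does \emph{not} impose this restriction: it checks for an arbitrary pump realizing each letter within the dip bounds, so the given pump in $L_A$ is its own witness and~(a) holds immediately. Your motivation for the restriction—that composing zero pumps keeps the dip bounded on both sides, making your~(b) argument clean—is sound in itself, but it is not worth sacrificing~(a).

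For Part~2, the synchronization obstacle you identify for increasing pumps is real, but the fix is far simpler than the coupled-ideal scheme you sketch. The paper does not try to keep an unbounded iteration count in sync between the two sides of an increasing pump. Instead, as soon as it detects (via the second clause of \cref{dips-for-spawns}) that the current node $B$ admits an increasing pump, it outputs $\bar{x}^{d_\markl}x^{d_\markl}\,x^{D}\,\Gamma_\markl^{*}$ on the left and $\bar{x}^{D}\,\bar{x}^{d_\markr}x^{d_\markr}\,\Gamma_\markr^{*}$ on the right, where $D=2^{p(|\cG|)}$ is the global dip bound of \cref{lem:checkTamePumping}(2). The extra factors $x^{D}$ and $\bar{x}^{D}$ cancel in offset; and because $D$ majorizes every dip any subderivation of $\cG$ can ever produce, one may \emph{stop tracking dips entirely} in the whole subtree below $B$, outputting only handler letters there. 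For~(a), whatever dip the actual subtree below $B$ has is absorbed by the $x^{D}$ buffer, so the output dominates; for~(b), any output is realized by iterating the genuine increasing pump roughly $D$ times and then composing the letter-witness pumps, whose dips are now irrelevant. This is the missing idea: once an increasing pump is available, all dips strictly below it become immaterial, so no unbounded synchronization is ever needed.
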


\begin{proof}
  \emph{Part 1.}
  For a pump of the form $u\#\bar\#v \in L_A$ we would like to compute a word $u'\#\bar\#v'$
  of the form $\bar{x}^{d_l}x^{d_l}\Gamma_l^*\#\bar\#\bar{x}^{d_r}x^{d_r}\Gamma_r^*$, where
  $\dip(u) = d_l$, $\dip(v) = d_r$, $\pi_\Gamma(u) \in \Gamma_l^*$, $\pi_\Gamma(v) \in \Gamma_r^*$,
  $\pi_{\Gamma\setminus\Gamma_l}(u) = \varepsilon = \pi_{\Gamma\setminus\Gamma_r}(v)$.
  In other words $u'\#\bar\#v'$ has offset $0$, involves the exact same handler names from $\Gamma$
  on both sides of the pump as $u\#\bar\#v$, and also has the exact same dips on both sides.
  It is then clear that for some choices of the infixes in $\Gamma_l^*$ and $\Gamma_r^*$ we have
  $\pi_\Gamma(u) \subword \pi_\Gamma(u')$ and $\pi_\Gamma(v) \subword \pi_\Gamma(v')$,
  yielding subcase (a).
  Moreover we clearly have $\pi_\Gamma(u') \subword \pi_\Gamma(u^n)$ and
  $\pi_\Gamma(v') \subword \pi_\Gamma(v^n)$ for a sufficiently large number $n$, e.g. $n = |u'v'|$.
  Since $u^n$ (respectively $v^n$) has the same dip as $u$ (respectively $v$) due to tame-pumping,
  this yields subcase (b).
  
  Let us now explain how to compute in $\PSPACE$ all $4$-tuples $d_l$, $d_r$, $\Gamma_l$, $\Gamma_r$
  that occur together as described above for some pump of $A$ in $G$.
  We begin by guessing such a $4$-tuple, which only takes up polynomial space:
  according to \cref{lem:checkTamePumping}~(2)
  the dips of words derived by a tame-pumping $\CFG$ are exponentially bounded,
  and therefore the numbers $d_l$ and $d_r$ take only polynomially many bits to write down.
  For each $a \in \Gamma_l$, we then use \cref{dips-for-spawns} to check in $\PSPACE$
  whether a pump $A \derivs_\cG u'au''Av$ exists such that
  $\dip(u'au'') \leq d_l$ and $\dip(v) \leq d_r$.
  We do the same for each $b \in \Gamma_r$ and a pump $A \derivs_\cG u'Av'bv''$ with 
  $d(u') = d(u)$ and $d(v'bv'') = d(v)$, which can also be done by symmetry and \cref{dips-for-spawns}.
  If all pumps exist, then the $4$-tuple can actually occur in a single pump
  (constructed by applying all the obtained pumps one after the other).
  On the other hand, if a single pump exists that matches the guessed $4$-tuple,
  then this pump serves as a witness for every $a \in \Gamma_l$ and $b \in \Gamma_r$,
  which is then recognized by the procedure.
  
  To briefly address the inequalities regarding the dip values in subcases (a) and (b):
  on one hand, for subcase (a), it is clearly possible to guess $d_l$ and $d_r$ such that they match
  $\dip(u)$ and $\dip(v)$ exactly, meaning a correct guess exists.
  On the other hand, for subcase (b), if we guess values $d_l$ and $d_r$ that are higher than the dips
  of any actually occurring pump of $A$, then the inequality still holds.
  
  \emph{Part 2.}
  Without loss of generality let $\cG$ be in Chomsky normal form.
  Consider a derivation tree of $\cG$ with root $A$ that does not contain any pumps.
  We traverse such a tree by performing a depth-first search and guessing a derivation in each step.
  Whenever we guess a derivation, every nonterminal occurring as an ancestor of the current node is
  not allowed to appear on the right hand side of said derivation, to ensure no pumps occur.
  Whenever we first explore a new node labelled by a nonterminal $B$, we also compute a $4$-tuple
  $d_l$, $d_r$, $\Gamma_l$, $\Gamma_r$ for one of $B$'s pumps, similar to Part 1, and we store the tuple
  as an additional label to this node.
  Then we output the left side of the pump, namely $\bar{x}^{d_l}x^{d_l}\Gamma_l^*$, and continue with
  its left child.
  After exploring all descendants and returning to the node, we output the right side of the pump,
  namely $\bar{x}^{d_r}x^{d_r}\Gamma_r^*$.
  If a node is labelled by a terminal, we simply output that terminal.
  
  We slightly alter this procedure whenever we encounter an increasing pump.
  Let us first observe that we can actually check this.
  When we compute the $4$-tuple like in Part 1, we already repeatedly use \cref{dips-for-spawns}
  to check for pumps, and said lemma allows us to furthermore check whether increasing pumps with
  the same properties exist.
  Moreover, when we combine several pumps, if any of them was increasing, then so is the combined pump.
  Now, if an increasing pump is found, we handle the current node slightly differently,
  outputting $\bar{x}^{d_l}x^{d_l}x^{D}\Gamma_l^*$ as the left side of its pump,
  and $\bar{x}^{D}\bar{x}^{d_r}x^{d_r}\Gamma_r^*$ as the right side, where $D = p(\cG)$
  is the bound on all dip values from \cref{lem:checkTamePumping}~(2) ($p$ is a polynomial).
  Then for all descendants of the node with the increasing pump,
  we stop computing dip values $d_l$ and $d_r$.
  This means below this node we only store $2$-tuples $\Gamma_l$, $\Gamma_r$, and output words
  $\Gamma_l^*$ on first exploration, and $\Gamma_r^*$ on final visit.
  Pumps that we check for here do not need to match any specific dip values.
  
  Let us now do a space analysis of this procedure.
  While performing the depth-first search, we store a path of nodes whose length is bounded by
  the height of the explored derivation tree $T$.
  Since we ensure that $T$ does not contain pumps, this is bounded by $|N|$.
  Each node is labelled by a single symbol in $N \cup \Sigma$ and possibly up to a $6$-tuple
  of subalphabets and numbers.
  The subalphabets have size bounded by $|\Gamma|$ and the numbers require polynomially many
  bits to write down, as discussed above and in Part 1.
  Finally the computation performed at each node is also possible in $\PSPACE$, as it matches Part 1.
  
  It remains to prove subcases (a) and (b).
  For subcase (a), consider a derivation tree of a word $u \in L(\cG,A)$.
  We replace every pump in this tree like in the above procedure to obtain a word $u' \in L$.
  Similar to Part 1 (a), we can ensure that we replace every infix with a word that has a potentially
  smaller dip and is larger in the subword ordering $\subword$.
  If we replace a pump below an increasing pump, the new dips are automatically smaller, as we set
  them to zero in the above procedure.
  Moreover the due to how $D$ is defined, when we replace an increasing pump, the word cannot dip
  below zero in between anymore.
  Due to tame-pumping, all non-increasing pumps have offset $0$ on both sides, like their replacements.
  This ensures that throughout the whole word, we can dip at most as far as in the original word.
  The offset requirements for $u \extsw' u'$ are also met, since we replace tame pumps with tame pumps,
  which contribute $0$ to the overall offset.
  
  For subcase (b), consider a word $u' \in L$.
  Like in Part 1 (b), we can simply switch the replacement pumps for original ones that have potentially
  smaller dips on both sides and are larger in the subword ordering $\subword$.
  For increasing pumps we can also repeat the original so often ($D$ times),
  that all dips below it do not matter.
  This results in a word $u \in L(\cG,A)$ with $u' \extsw u$.
\end{proof}

\converttosecfg*

\begin{proof}
  The idea is that $\cG'$ uses its nonterminals as a $\PSPACE$ tape, in accordance with \cref{ECFG-PSPACE}.
  This tape always contains a current nonterminal $A$ of $\cG$, for which $\cG'$ guesses
  the next derivation step $A \to BC$.
  If $A \in N_{\#\bar\#}$ then $\cG'$ first abstracts away a pump of $A$ via \cref{lem:divided-pump-dcl}.
  Then if $B \in N_0$ or $C \in N_0$, $\cG'$ abstracts away the entire derivation tree below
  said nonterminal by also using \cref{lem:divided-pump-dcl}.
  In both cases $\cG'$ non-deterministically guesses a single word in $\PSPACE$ and then derives it to the side of the current derivation tree, one letter at a time.
  This way, the second requirement of almost-pumpfree is met (see \cref{defn:UBFtree}).
  Afterwards $\cG'$ remembers on the tape that $A$ is not allowed to occur below, since
  all its pumps have already been abstracted.
  Then it continues with $B$ or $C$ or both, since at least one of them still produces a marker.
  For nonterminals in $N_\#$ or $N_{\bar\#}$ no pumps are abstracted away, since these
  would be divided pumps.
  
  Let us go into more detail.
  Let $\cG = (N,\Sigma,P,S)$ be wlog.\ in Chomsky normal form.
  On its $\PSPACE$ tape (see \cref{ECFG-PSPACE}), $\cG'$ always stores
  a set of disallowed nonterminals $M \subseteq N$ starting with $M = \emptyset$, and
  a current nonterminal $A \in N$ starting with $A = S$.
  When not in the middle of a computation, $\cG'$ begins makes a case distinction
  based on whether $A$ is in $N_{\#\bar\#}$ or in $N_\# \cup N_{\bar\#}$.
  Let us consider the latter case first.
  
  If $A \in N_\# \cup N_{\bar\#}$, $\cG'$ guesses the next production rule in $P$ to apply.
  In case of a rule $A \to \#$ or $A \to \bar \#$, $\cG'$ simply derives the corresponding marker.
  Otherwise it guesses a rule of the form $A \to BC$, where either $B \in N_0$ or $C \in N_0$.
  We consider $B \in N_0$ with the other case being symmetric.
  Here, $\cG'$ guesses a word in $\PSPACE$ using Part 2 of \cref{lem:divided-pump-dcl},
  with a slightly modified computation:
  instead of guessing words in $\Gamma_{\markl}$ and $\Gamma_{\markr}$,
  we simply output the whole alphabet both times.
  Then for every letter (and alphabet) computed this way, $\cG'$ derives a nonterminal
  to the left of the current node in the tree, which just produces this single letter
  (or alphabet, via an extended production).
  This is performed step by step, for one letter or alphabet at a time,
  during the $\PSPACE$ computation of \cref{lem:divided-pump-dcl}.
  Note that we do this to the left, because $B$ appears left in the production $A \to BC$.
  Afterwards we store $C$ as the current nonterminal and continue.
  
  If $A \in N_{\#\bar\#}$, $\cG'$ first sets $M$ to $M \cup \{A\}$
  and then guesses the next production rule $A \to BC$ to apply,
  ensuring that $B,C \notin M$.
  Now consider the case where $b$ is false.
  Then $\cG'$ uses Part 2 of \cref{lem:divided-pump-dcl} to compute a word of the form
  $\bar{x}^{d_l}x^{d_l}\Gamma_l^*\#\bar\#\bar{x}^{d_r}x^{d_r}\Gamma_r^*$ in $\PSPACE$.
  If this word corresponds to an increasing pump, $b$ is set to true.
  The computation of this word is similarly modified as above, computing whole alphabets.
  Hereby $\cG'$ begins by deriving letters and alphabets to the left, like above, until
  it would derive the markers $\#\bar\#$.
  These are not derived, and instead $\cG'$ switches to deriving the remaining objects to the right.
  Due to the shape of derivation trees, the computation also has to be modified so that
  the objects to the right are derived in reverse.
  Afterwards $\cG'$ continues for $B$ and $C$ like in the previous case,
  unless we have $B \in N_\#$ and $C \in N_{\bar\#}$.
  In the latter case, $\cG'$ simply continues on two different paths,
  one with $B$ and one with $C$ as the current nonterminal.
  
  The final case to consider is $A \in N_{\#\bar\#}$, where $\cG'$ has already guessed
  a production rule and $b$ is true.
  This case is very similar to the previous one except for one slight difference.
  When applying Part 2 of \cref{lem:divided-pump-dcl} to compute a word of the form
  $\bar{x}^{d_l}x^{d_l}\Gamma_l^*\#\bar\#\bar{x}^{d_r}x^{d_r}\Gamma_r^*$, we modify the
  computation even further:
  we drop the infixes corresponding to dips so that only the word
  $\Gamma_l^*\#\bar\#\Gamma_r^*$ is computed.
  
  Now we need to argue that the grammar $\cG'$ is as desired.
  Let $N'$ be the set of nonterminals of $\cG'$.
  It is clear from construction that $\cG'$ is almost-pumpfree:
  (1) the set $M$ ensures that a nonterminal in $N'_{\#\bar\#}$ cannot occur twice
  on the same path in a derivation tree of $\cG'$;
  (2) all nonterminals in $N'_0$ lead directly to leaves in the derivation tree.
  It remains to prove that $\dc{L(\cG)}=\dc{L(\cG')}$.
  
  For the $\subseteq$-direction, consider word $w \in L(\cG) \cap \AdmDecomps$
  and its derivation tree.
  Now invoke \cref{lem:divided-pump-dcl} to replace every undivided pump in $\outside{w}$ and every subtree below a $N_0$-nonterminal.
  This yields a word $w' \in L(\cG')$ by construction of $\cG'$.
  By subcase (a) in both parts of \cref{lem:divided-pump-dcl} we can choose the replacements in such
  a way that each new infix $v'$ of $w'$ replaces an old infix $v$ of $w$ with $v \extsw' v'$.
  Since the ordering $\extsw'$ is compatible with concatenation (see proof of \cref{lem:extsw-compatible}),
  we get $\outside{w} \extsw' \outside{w'}$ and $\inside{w} \extsw' \inside{w'}$.
  The words $w$ and $w'$ also contain the same markers, since we did not replace them.
  Therefore for $w \extsw w'$ we now only need to show that $w'$ is admissible, i.e.\ in $\AdmDecomps$.
  This follows from the fact that if $\inside{w}$ is an infix of, a prefix of, or an entire Dyck word,
  then so is $\inside{w'}$: the ordering $\sdyck$ preserves these properties, and is part of $\extsw'$.
  
  For the $\supseteq$-direction, consider word $w' \in L(\cG') \cap \AdmDecomps$
  and its derivation tree.
  Now consider every subtree, where $\cG'$ computes a replacement for either a pump of a nonterminal in $N_{\#\bar\#}$, or an entire subtree below a nonterminal in $N_0$.
  In the former case, replace it by an actual pump of said nonterminal in $\cG$,
  and in the latter case, replace it by an actual derivation tree of $\cG$ with said nonterminal as the root.
  By construction of $\cG'$, we obtain a tree in this way, whose yield is a word $w \in L(\cG)$.
  Now since each replacement computed by $\cG'$ is according to \cref{lem:divided-pump-dcl},
  when we switch these replacements for original derivations of $\cG$, we can always choose them
  according to subcase (b) in both parts of said lemma.
  This means each infix $v'$ of $w'$ that is switched in this way becomes an infix $v$ of $w$
  with $v' \extsw' v$.
  From here the proof is analogous to the $\subseteq$-direction.
  
  Finally, we need to argue that $|\cG'|$ is polynomially bounded in $|\cG|$.
  Every $\PSPACE$ procedure implemented by $\cG'$ uses a fixed number of states and tape symbols,
  which result in a fixed number of symbols added to the alphabet encoding the nonterminals of $\cG'$.
  The transducers then just need to check adjacent Turing machine configurations, which also requires
  a fixed amount of states.
  The length of the nonterminals of $\cG'$ is bounded by the largest polynomial bounding the tape length 
  of one of the implemented $\PSPACE$ procedures.
\end{proof}

\section{Results from Section~\ref{sec:absSplitPumps}}
\label{app:dividedPumps}

\subparagraph{Construction of $\calT_A$}
Let us first define transducers for the purposes of the proof of \cref{lem:sECFGtoDCAut}. Traditionally, a transducer is a finite-state machine where each edge can read a pair of words. For our construction, it will be convenient to extend the syntax slightly: We allow transitions where some component is $\Xi^*$ for some alphabet $\Xi$. The semantics is the obvious one: Such a transition allows the transducer to read any word over $\Xi$.

Formally, a transducer is a tuple $\calT=(Q,\Sigma,E,q_0,F)$, where $Q$ is its
finite set of \emph{states}, $\Sigma$ is its input alphabet, $E$ is its finite
set of \emph{edges}, $q_0\in Q$ is its \emph{initial state}, and $F\subseteq Q$
is the set of \emph{final states}. An edge is of the form
$p\xrightarrow{(u,v)}q$ for
$(u,v)\in\Sigma_\varepsilon\times\Sigma_\varepsilon$, where
$\Sigma_\varepsilon=\Sigma\cup\{\varepsilon\}$, or of the form
$p\xrightarrow{(\Xi_1,\Xi_2)}q$ for some $\Xi_1,\Xi_2\subseteq\Sigma$. 

To describe the semantics, we define an induced step relation $\xRightarrow{\cdot}$.
For words $u,v\in\Sigma^*$, we write $p\xRightarrow{(u,v)}q$ if and only if
\begin{enumerate}
\item there exists an edge $p\xrightarrow{(u,v)}q$ or 
\item there exists an
edge $p\xrightarrow{(\Xi_1,\Xi_2)}q$ such that $u\in\Xi_1^*$ and $v\in\Xi_2^*$.
\end{enumerate}
A pair $(u,v)\in\Sigma^*\times\Sigma^*$ is \emph{accepted} by $\calT$ if there
are states $q_1,\ldots,q_n\in Q$ with $q_n\in F$ and pairs
$(u_1,v_1),\ldots,(u_n,v_n)\in\Sigma^*\times\Sigma^*$ such that
$q_i\xRightarrow{(u_{i+1},v_{i+1})} q_{i+1}$ for each $i\in[0,n-1]$ and
$u=u_1\cdots u_n$ and $v=v_1\cdots v_n$.

In a \emph{succinct} transducer, the set of states is the set $\Lambda^M$ for
some alphabet $\Lambda$ and some number $M$ specified in unary. Moreover, the
edges are specified using (i)~a (non-succinct) transducer for each pair
$(u,v)\in\Sigma_\varepsilon\times\Sigma_\varepsilon$ which reads the set of all
pairs $(p,q)$ such that there exists an edge $p\xrightarrow{(u,v)}q$ and (ii)~a
(non-succinct) transducer that describes the edges
$p\xrightarrow{(\Xi_1,\Xi_2)}q$. More precisely, for
$\Xi_1,\Xi_2\subseteq\Sigma$, the latter transducer accepts the pair
$(px_1,qx_2)$ with $p,q\in\Lambda^M$ and $x_1,x_2\in\Sigma^{\le|\Sigma}$ if and
only if there exists an edge $p\xrightarrow{(\Xi_1,\Xi_2)}q$, where $\Xi_i$ is
the set of letters appearing in $x_i$, for $i\in\{1,2\}$.

We now describe the transducer $\calT_A$. It will be clear from the
construction that an equivalent polynomial size succinct transducer can be
constructed. The set of states of $\calT$ is the set of non-terminals of $\calG$.
It has the following edges. For each production $A\to BC$ in $\calG$, we have the edges
\begin{align*}
&A\xrightarrow{(\Xi,\emptyset)}C & &\text{for each production $B\to \Xi^*$ in $\calG$, $\Xi\subseteq\Sigma$} \\
&A\xrightarrow{(a,\varepsilon)}C & &\text{for each production $B\to a$ in $\calG$, $a\in\Theta$}, \\
&A\xrightarrow{(\emptyset,\Xi)}B & &\text{for each production $C\to \Xi^*$ in $\calG$, $\Xi\subseteq\Sigma$}, \\
&A\xrightarrow{(\varepsilon,a)}B & &\text{for each production $C\to a$ in $\calG$, $a\in\Theta$}.
\end{align*}
Moreover, $A$ is the initial state and the only final state. Then $\calT$ clearly has the desired properties.

\subparagraph{Bounding offset and dip of divided pumps}
It is a direct consequence of \cref{lem:boundingOffsetInUPFtree} that in a
derivation of an admissible word, the offset and dip of words $u$ and $v$ that
occur in a pump $A\derivs uAv$ must be bounded doubly exponentially:
\begin{lemma}\label{lem:boundingPumpPrefixes}
There exists a polynomial $q$ such that for any uniformly marked, tame-pumping,
almost-pumpfree $\sECFG$ $\calG$, the following holds. Let $A\derivs uAv$ be a
pump that appears in the derivation of an admissible word, such that $A\in
N_{\#}\cup N_{\bar{\#}}$. Then for every suffix $u'$ of $u$ and every prefix
$v'$ of $v$, we have $|\offset(u'^{\rev})|,|\dip(u'^{\rev})|,|\offset(v')|,|\dip(v')|\le 2^{q(|\calG|)}$.
\end{lemma}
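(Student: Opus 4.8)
I would prove this as a direct consequence of \cref{lem:boundingOffsetInUPFtree}, by applying that lemma not only to the whole derivation tree but to \emph{every} node on the path leading to a marker. First I reduce to the case $A\in N_{\#}$; the case $A\in N_{\bar{\#}}$ is symmetric, exchanging the roles of ``before'' and ``after'' the marker. Since the pump occurs in a derivation of an admissible word, I fix a derivation tree $T$ of $\calG$ for an admissible word $w$ together with two nodes $t_1$ (an ancestor) and $t_2$ (a proper descendant), both labelled $A$, whose enclosed context derives $u$ to the left of $t_2$ and $v$ to its right. Because $A\in N_{\#}$, both $w(t_1)$ and $w(t_2)$ contain exactly the marker $\#$ and no $\bar{\#}$, so $u,v\in\Theta^*$ and the unique $\#$-leaf lies below $t_2$; hence every node on the path $\pi$ from $t_1$ down to the $\#$-leaf carries an $N_{\#}$-label.

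The two ingredients are as follows. First, \cref{lem:boundingOffsetInUPFtree} applies to the subtree $T_1$ rooted at $t_1$, viewed as a derivation tree of $\calG$ with start symbol $A$ (still uniformly marked, tame-pumping, almost-pumpfree, of the same succinct size), because $w(t_1)$ is an admissible marked word containing only $\#$. Letting $p$ be the polynomial of \cref{lem:boundingOffsetInUPFtree}, this bounds $|\offset(\inside{w(s)})|$ and $|\offset(\outside{w(s)})|$ by $2^{p(|\calG|)}$ at every node $s$ of $T_1$, in particular at $t_2$ and at every node of $\pi$ between $t_1$ and $t_2$. Second, the subtrees branching off $\pi$ are cheap: a node strictly between $t_1$ and $t_2$ that branches does so into an $N_{0}$-nonterminal (the other child continues to $\#$), whose subtree by condition~(C2) of almost-pumpfreeness is flat with yield a single terminal or a word over some $\Gamma'\subseteq\Gamma$; such a ``unit'' word, and every prefix and suffix of it, has absolute offset at most $1$ and dip at most $1$.

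The core step is then a combinatorial bookkeeping of the in-order traversal of $T$: $v$ is the concatenation of the units produced by the subtrees branching to the right of $\pi$, listed by decreasing depth, and for every node $s$ of $\pi$ between $t_1$ and $t_2$ the word $\inside{w(s)}$ equals $\inside{w(t_2)}$ followed by a prefix of this concatenation; symmetrically, $u$ is the concatenation of the left-branching units by increasing depth, and $\outside{w(s)}$ equals a suffix of $u$ followed by the fixed tail $\outside{w(t_2)}$. Granting this, the four bounds follow by elementary estimates. For a prefix $v'$ of $v$ there is a node $s'$ on $\pi$ with $\inside{w(t_2)}\,v'=\inside{w(s')}\cdot c$ for some prefix $c$ of a single unit, so $|\offset(\inside{w(t_2)}\,v')|\le 2^{p(|\calG|)}+1$; subtracting $\offset(\inside{w(t_2)})$ bounds $|\offset(v')|$. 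Since every prefix of $v'$ is a prefix of $v$, and for a prefix $v''$ of $v$ we have $-\offset(v'')=\offset(\inside{w(t_2)})-\offset(\inside{w(t_2)}\,v'')$, which is in turn bounded, we get $\dip(v')\le\dip(v)\le 2^{p(|\calG|)+1}+1$. For a suffix $u'$ of $u$ we use $\offset(u'^{\rev})=\offset(u')$ and write each suffix $z$ of $u'$ as a suffix of a unit followed by a prefix of some $\outside{w(s'')}$; removing the fixed tail $\outside{w(t_2)}$ shows $|\offset(z)|\le 2^{p(|\calG|)+1}+1$, and taking $z=u'$ bounds $|\offset(u'^{\rev})|$ while taking $\max_z(-\offset(z))$ bounds $\dip(u'^{\rev})$. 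Hence all four quantities are at most $2^{p(|\calG|)+2}$, and $q(y):=p(y)+2$ works.

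The step I expect to be the main obstacle is precisely this bookkeeping. Because almost-pumpfreeness does \emph{not} eliminate divided pumps (those of $N_{\#}$- and $N_{\bar{\#}}$-nonterminals), the path $\pi$ may be exponentially long and $u,v$ may be concatenations of exponentially many units, so their number cannot be bounded. The resolution is that none is needed: by the first ingredient, $|\offset(\inside{w(s)})|$ and $|\offset(\outside{w(s)})|$ are already bounded at \emph{every} node $s$ of $T_1$ (this is where admissibility is used), and every prefix of $v$ (respectively every suffix of $u$) coincides with one such $\inside{w(s)}$ (respectively is captured by one such $\outside{w(s)}$) up to a single extra unit. The remaining care is to fix the left/right and depth-order conventions for the branching subtrees precisely, and to check that the $A\in N_{\bar{\#}}$ case---where $\inside{\cdot}$ and $\outside{\cdot}$ switch sides of the marker $\bar{\#}$---goes through with the identical estimates.
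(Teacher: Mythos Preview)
Your proposal is correct and follows essentially the same approach as the paper: apply \cref{lem:boundingOffsetInUPFtree} at the nodes along the path between the two $A$-occurrences, then subtract the contribution of the subtree at the lower $A$ (the fixed part $r$, resp.\ $\sigma$) to isolate $\offset(u')$ and $\offset(v')$. Your treatment is in fact more careful than the paper's about prefixes/suffixes that cut through a single ``unit'' (yielding $q(y)=p(y)+2$ instead of $p(y)+1$), and the one point you assert without proof---that $w(t_1)$ is itself admissible---holds because its inside part is a prefix (for $A\in N_{\#}$) or suffix (for $A\in N_{\bar{\#}}$) of the Dyck factor of the ambient admissible word.
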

\begin{proof}
	Let $p$ be the polynomial from \cref{lem:boundingOffsetInUPFtree}. 
	Observe that it suffices to show the bound for the offsets: Since we prove it for all suffixes/prefixes, the bound on dips is implied.

i%
	Consider a derivation tree $T$ with root label $B$ such that (i)~the pump $A\derivs uAv$ occurrs in $T$ and (ii)~$T$ derives $B\derivs xuwvy$, where $xuwvy$ is admissible.
	Without loss of generality, suppose $w$ belongs to $\Theta^*\#\Theta^*$ (the case of $\Theta^*\bar{\#}\Theta^*$ is analogous) and write $w=r\#s$.
	Since $\calG$ is almost-pumpfree, we have subtrees $T'$ and $T_0$ of $T$ such that $T'$ derives $u'wv'=u'r\#sv'$ and $T_0$ derives $w=r\#s$.
	By the choice of $p$, we now have $|\offset(u'r)|,|\offset(r)|,|\offset(v')|,|\offset(sv')|\le 2^{p(|\calG|)}$. This implies
	\begin{align*}
		|\offset(u')|&=|\offset(u'r)-\offset(r)|\le |\offset(u'r)|+|\offset(r)|\le 2^{p(|\calG|)+1}, \\
		|\offset(v')|&=|\offset(sv')-\offset(s)|\le |\offset(sv')|+|\offset(s)|\le 2^{p(|\calG|)+1}.
	\end{align*}
	Hence, setting $q(n)=p(n)+1$ yields the result.
\end{proof}

\subparagraph{Construction of $\calT_{A,\bx}$}
According to \cref{lem:boundingPumpPrefixes}, for some given
$\bx=(d_\markl,\delta_\markl,d_\markr,\delta_\markr)$ with
$d_\markl,d_\markr\in[0,2^{q(|\calG|)}]$ and
$\delta_\markl,\delta_\markr\in[-2^{q(|\calG|)},2^{q(|\calG|)}]$, we can now
turn each transducers $\calT_A$ into a transducer $\calT_{A,\bx}$ which accepts
a pair $(u,v)$ only if $(u,v)$ is accepted by $\calT_A$ and it also satisfies
$e(u^{\rev})=(d_{\markl},\delta_{\markl})$ and
$e(v)=(d_{\markr},\delta_{\markr})$. The transducer $\calT_{A,\bx}$ has states
$(q,\by)$, where $q$ is a state of $\calT_A$ and
$\by=(d'_\markl,\delta'_\markl,d'_\markr,\delta'_\markr)$ with
$d'_\markl,d'_\markr\in[0,2^{q(|\calG|)}]$ and
$\delta'_\markl,\delta'_\markr\in[-2^{q(|\calG|)}]$. When reading a pair
$(u,v)$, it simulates $\calT_A$ in the component $q$ and it stores $e(u)$ and
$e(v)$ in the component $\by$. Here, the bounds for $\by$ are sufficient because
\cref{lem:boundingPumpPrefixes} tells us that for any such pair, the offset and
dip will remain in the respective interval. It is clear that we can construct a succinct polynomial-size presentation for each $\calT_{A,\bx}$.

\subparagraph{Ideals and skeleton runs of transducers}
Let us define ideals over an alphabet $\Sigma$. An \emph{atom} is a set of the
form $\{a,\varepsilon\}$ for $a\in\Sigma$ or a set of the form $\Xi^*$ for some
$\Xi\subseteq\Sigma$.  An \emph{ideal} is a finite product $A_1\cdots A_n$ of
atoms. Observe that each ideal can be written as a product
$\Xi_0^*\{a_1,\varepsilon\}\Xi_1^*\cdots \{a_n,\varepsilon\}\Xi_n^*$ for
$a_1,\ldots,a_n\in\Sigma$ and $\Xi_0,\ldots,\Xi_n\subseteq\Sigma$.

Let $\calT_{A,\bx}$ be a transducer as constructed above. Recall that its input alphabet is $\Theta=\Gamma\cup\{x,\bar{x}\}$. A \emph{skeleton run} is a sequence of edges $q_0\xrightarrow{(x_1,y_1)}q_1, q_1\xrightarrow{(x_2,y_2)}q_2, \ldots, q_{n-1}\xrightarrow{(x_n,y_n)}q_n$ such that the states $q_0,\ldots,q_n$ are pairwise distinct. To this skeleton run, we associate its \emph{left ideal} as $\Gamma_0^*A_1\Gamma_1^*\cdots A_n\Gamma_n^*$, where
\begin{enumerate}
\item for each $i\in[0,n]$, $\Gamma_i\subseteq\Gamma$ is the set of letters from $\Gamma$ that occur in some left component on a cycle of $\calT$ from $q_i$ to $q_i$, and
\item for each $i\in[1,n]$, $A_i$ is the following atom:
	\begin{enumerate}
		\item If the edge $q_{i-1}\xrightarrow{(x_i,y_i)}q_i$ is of the form $(x_i,y_i)\in\Sigma_\varepsilon\times\Sigma_\varepsilon$, then $A_i=\{x_i,\varepsilon\}$.
		\item If the edge $q_{i-1}\xrightarrow{(x_i,y_i)}q_i$ is of the form $q_{i-1}\xrightarrow{(\Xi,\Xi')}q_i$, then $A_i=\Xi^*$.
	\end{enumerate}
\end{enumerate}
The \emph{right ideal} of the skeleton run is defined by taking the right
components instead of the left components when specifying $\Gamma_i$ and $A_i$.

Observe that in a succinct transducer, a skeleton run has at most exponential
length. Moreover, given such a skeleton run, one can compute its (exponentially
long) left ideal and its right ideal using polynomial space (in the size of the
transducer): The alphabets $\Xi_i$ can be computed by simulating cycles of the
transducer, which only requires polynomial space.

\label{afterbibliography}
\newoutputstream{pagestotal}
\openoutputfile{main.pagestotal.ctr}{pagestotal}
\addtostream{pagestotal}{\getpagerefnumber{afterbibliography}}
\closeoutputstream{pagestotal}

\end{document}